\theoremstyle{plain}
\newtheorem{theorem}{Theorem}
\newtheorem{proposition}[theorem]{Proposition}
\newtheorem{definition}[theorem]{Definition}
\newtheorem{lemma}[theorem]{Lemma}
\newtheorem{corollary}[theorem]{Corollary}
\theoremstyle{definition}
\newtheorem{example}[theorem]{Example}
\newtheorem{remark}[theorem]{Remark}
\numberwithin{exercise}{section}
\numberwithin{equation}{section}
\numberwithin{theorem}{section}
\numberwithin{problem}{section}
\numberwithin{figure}{section}
\DeclareMathOperator{\diag}{diag}
\DeclareMathOperator{\diam}{diam}
\newcommand{\bs}[1]{{\boldsymbol{#1}}}
\newcommand{\R}{\mathbf{R}}
\newcommand{\Z}{\mathbf{Z}}
\newcommand{\N}{\mathbf{N}}
\begin{document}

\title{{On Eigen's quasispecies model, two-valued fitness landscapes, and isometry groups acting on
finite metric spaces}}

\author{Yuri S. Semenov$^{1}$, Artem S. Novozhilov$^{{2},}$\footnote{Corresponding author: artem.novozhilov@ndsu.edu} \\[3mm]
\textit{\normalsize $^\textrm{\emph{1}}$Applied Mathematics--1, Moscow State University of Railway Engineering,}\\[-1mm]\textit{\normalsize Moscow 127994, Russia}\\[2mm]
\textit{\normalsize $^\textrm{\emph{2}}$Department of Mathematics, North Dakota State University, Fargo, ND 58108, USA}}

\date{}

\maketitle

%Abstract goes here
\begin{abstract}
A two-valued fitness landscape is introduced for the classical Eigen's
quasispecies model. This fitness landscape can be considered as a
direct generalization of the so-called single or sharply peaked
landscape. A general, non permutation invariant quasispecies model
is studied, therefore the dimension of the problem is $2^N\times
2^N$, where $N$ is the sequence length. It is shown that if the fitness function is equal to $w+s$
on a $G$-orbit $A$ and is equal to $w$
elsewhere, then the mean population fitness can be found as the
largest root of an algebraic equation of degree at most $N+1$. Here  $G$ is an arbitrary isometry group
acting on the metric space of sequences of zeroes and ones of the length $N$ with the Hamming distance. An
explicit form of this exact algebraic equation is given in terms of the spherical
growth function of the $G$-orbit $A$. Sufficient conditions for
the so-called error threshold for sequences of orbits are
given. Motivated by the analysis of the two-valued fitness landscapes an abstract generalization of Eigen's model is
introduced such that the sequences are identified with the points of a finite metric space $X$ together with a group of isometries acting transitively on $X$. In particular, a simplicial analogue of the original quasispecies model is discussed, which can be considered as a mathematical model of the switching of the antigenic variants for some bacteria.

\paragraph{\small Keywords:} Eigen's quasispecies model, single peaked landscape, mean population fitness, regular polytope, finite metric space,
isometry group
\paragraph{\small AMS Subject Classification:} 15A18; 92D15; 92D25
\end{abstract}

\section{Introduction. Classical quasispecies model}A great deal of research on the border between mathematics and biology was spurred by Eigen's \textit{quasispecies model}, formulated in 1971 in \cite{eigen1971sma}. This model was suggested to describe the replication of prebiotic macromolecules in order to study various aspects of the problem of the origin of life. Independently, an equivalent model was suggested to study the change of frequencies of different genotypes in haploid multi-allele populations under the evolutionary forces of selection and mutation. Standard references to review the classical and recent developments are \cite{baake1999,eigen1988mqs,jainkrug2007,wilke2005quasispecies,schuster2012evolution}. We also refer to the introductory sections in \cite{bratus2013linear,semenov2014,semenov2015} for more details on various issues in the quasispecies theory. In the present work we are mostly concerned with some specific mathematical developments about the model, which can also describe various systems in population biology or chemical kinetics.

We start with formulating the model. We assume that we deal with a population of sequences of the fixed length $N$. Each sequence is composed of zeroes and ones, hence $l:=2^N$ being the total number of different types of sequences. The sequences can reproduce and mutate to each other. We also assume that the reproduction events occur at discrete time moments, and sequence $k$ produces $w_k$ offspring on average with the probabilities $q_{jk}$, where $q_{jk}$ is the probability to produce sequence $j$ by the parent of type $k$. Therefore, $q_{kk}$ is the probability of the error-free reproduction, and $\sum_{j}q_{jk}=1$. Let $\bs p\in\R^l$, $\bs p^\top=(p_0,\ldots,p_{l-1})$ be the vector of frequencies of different types of sequences at the selection-mutation equilibrium. Then it follows from the basic theory (e.g., \cite{burger2000mathematical}) that $\bs p$ can be found as the positive normalized eigenvector of the matrix $\bs{QW}$ corresponding to the dominant eigenvalue $\lambda$, i.e.,
\begin{equation}\label{i:1}
    \bs{QWp}=\lambda \bs p.
\end{equation}
Here $\bs W=\diag(w_0,\ldots,w_{l-1})$ is the matrix describing the \textit{fitness landscape} (note that we count the indices from 0), and $\bs Q=(q_{jk})_{l\times l}$ is the mutation matrix, which is stochastic by definition. At the equilibrium the dominant eigenvalue $\lambda$ is equal to the mean population fitness $\lambda=\overline{w}:=\sum_j w_jp_j$, and the vector $\bs p$ was called the \textit{quasispecies} by Eigen and his co-authors.

The basic mathematical problem, given $\bs W$ and $\bs Q$, is to determine $\overline{w}$ and $\bs p$. This problem turned out to be very nontrivial and required an introduction of intricate methods of statistical physics, careful numerical procedures, and non-elementary mathematical analysis to achieve a partial progress (much more detail can be found in \cite{bratus2013linear,semenov2014,semenov2015} and references therein). No general analytical solution exists. Moreover, even numerically, there are important obstacles to find $\overline{w}$ and/or $\bs p$, most serious of which is the dimensionality of the problem, recall that the matrices have the dimensions $l\times l=2^N\times 2^N$, where $N$ is the sequence length. One particular solution to the problem of dimensionality is to consider very special fitness landscapes, such that the average number of offspring is determined not by the sequence type (which is the ordered list of ones and zeroes) but by the sequence composition (i.e., by the numbers of ones and zeroes in the sequence). Such fitness landscapes are sometimes called \textit{symmetric} or \textit{permutation invariant} and allow to reduce the dimension of the problem from $2^N\times 2^N$ to $(N+1)\times (N+1)$. This worked especially well for the so-called \textit{single peaked fitness landscape} defined by
$$
\bs W=\diag (w+s,w,\ldots,w),\quad w\geq 0,\,s>0,
$$
see \cite{galluccio1997exact,semenov2014,swetina1982self} for additional details. Moreover, most limiting procedures when the sequence length $N$ tends to $\infty$, were applied to the models with the permutation invariant fitness landscapes, e.g., \cite{Baake2007,saakian2006ese,semenov2015}. At the same time it is clear that the assumption that the fitness landscape is permutation invariant should be relaxed at least in some specific biological situations.

Our first goal in this manuscript is to present an efficient method to reduce the dimensionality of the mathematical problem from $2^N$ to $N+1$ for some specific fitness landscapes that generalize the single peaked landscape but are not permutation invariant \textit{sensu} the definition given above. These fitness landscapes still possess a great deal of symmetry but are much more flexible for assigning the fitness values compared to the permutation invariant landscapes. Second, by carefully analyzing the obtained algebraic equation for $\overline{w}$ we are able to give a precise mathematical definition of the threshold-like behavior, which is observed in some quasispecies models \cite{wilke2005quasispecies}. We present sufficient conditions for the model to demonstrate such behavior. The language of the group theory allows us to recast the conditions for the error threshold to occur into the geometric picture of sequences of orbits in the underlying metric space under the action of a given group. Third, motivated by these considerations, we introduce an abstract generalized Eigen quasispecies problem, give several specific examples, and briefly analyze a simplicial analogue of the original quasispecies model. Despite a high level of abstraction of the introduced model, even the simplest mathematical construction describes biologically realistic systems, in particular, the switching of the antigenic variants for some bacteria\footnote{A somewhat modified version of this text appeared in Bull Math Bio, 78(5), 991--1038, 2016, which also includes a nontechnical discussion of the main results. The same discussion can be found also at \textsf{https://anovozhilov.wordpress.com/2016/03/25/eigen-quasispecies-model-and-isometry-groups/}}.

\section{Notation. The reduced problem}\label{sec:1}
In this section we introduce the required notation and also list several facts necessary for our exposition, additional details can be found in \cite{semenov2014}. Recall that $N$ denotes the sequence length and $l=2^N$.

Let $A$ be a non-empty fixed subset of indices:
$A\subseteq\{0,1,\dots,l-1\}$. For some fixed $w\ge 0$,
$s>0$ we consider the two-valued fitness landscapes of the form
$$
w_k=\left\{
\begin{array}{r}
w+s,\quad k\in A\;,\\
w,\quad k\notin A\;.\\
\end{array}
\right.
$$
Thus, the diagonal matrix ${\bs W}$ of fitnesses can be
represented as
\begin{equation}\label{eq:1}
{\bs W}=w {\bs I}+s{\bs E}_A=w {\bs I}+s
\sum_{a\in A}{\bs E}_a,
\end{equation}
$\bs{ I}$ being the identity matrix and ${\bs E}_a$ being the
elementary matrix with the only one nontrivial entry $e_{aa}=1$ on
the main diagonal.

Consider the eigenvalue problem
\begin{equation}\label{eq:2}
{\bs{QW}} \bs p=\overline{w}\, \bs p,
\end{equation}
where $\bs p^{\top}=(p_0,\dots,p_{l-1})\in \R^l$ is the eigenvector of the matrix
$ {\bs{QW}}$ corresponding to the leading (dominant) eigenvalue
$\lambda=\overline{w}$. The vector $\bs p$ is normalized such
that
\begin{equation}\label{eq:3}
\sum_{k=0}^{l-1}p_k=1,\quad p_k> 0\,,
\end{equation}
and hence the equality
\begin{equation}\label{eq:4}
\sum_{k=0}^{l-1}w_kp_k=w+s\sum_{a\in A}p_a= \overline{w}
\end{equation}
holds.

For the following we make an additional assumption that the mutations at different sites of the sequences are independent and the fidelity (i.e., the probability of the error-free reproduction) per site per replication is given by the same constant $0\leq q\leq 1$ for each site. Then
$$
q_{jk}=q^{N-H_{jk}}(1-q)^{H_{jk}}, \quad j,k=0,\ldots,l-1
$$
defines the mutation matrix $\bs Q$. Here $H_{jk}$ is the standard Hamming distance between sequences $j$ and $k$ (i.e., the number of sites at which sequences $j$ and $k$ are different). Note that now both the leading eigenvalue $\lambda$ and the quasispecies $\bs p$ depend on the fitness landscape and, most importantly, on the mutation fidelity $q$, hence we sometimes denote $\overline{w}=\overline{w}(q)$ and $\bs p=\bs p(q)$.

Using the special structure of the mutation matrix $\bs Q$, it can be shown (see, e.g, \cite{semenov2014}) that there exists a non-degenerate matrix $\bs T$ such that
$$
{\bs T}^{-1}=\frac{1}{l}{\bs T},\quad{\bs T}^{-1}{\bs{QT}}=\frac{1}{l}{\bs{ TQT}}=:\bs D,
$$
with
$$
\bs D=\diag\bigl(1,\dots,(2q-1)^{H_{j}},\dots,(2q-1)^{H_{l-1}}\bigr),
$$
where $H_{j}$ is the Hamming norm of the sequence $j$, i.e., the number of ones in this sequence, $H_j:=H_{0j}$; we are using the lexicographical ordering of indices, hence, e.g., $H_0=0$ and $H_{l-1}=N$. Moreover, explicitly matrix $\bs T$ is given through the recursive procedure
$$
\bs T=\bs T_{N},\quad \bs T_{k}=\bs T_1\otimes \bs T_{k-1},\quad k=2,3,\ldots,N,
$$
and
$$
\bs T_1=\begin{bmatrix}
          1 & 1 \\
          1 & -1 \\
        \end{bmatrix}.
$$
Here $\otimes$ denotes the Kronecker product (e.g., \cite{laub2005matrix}).

We write down the indices $a,b$, where
$0\le a\le l-1$, $0\le b\le l-1$, in the binary representation:
$$ a=\alpha_0+\alpha_12+\dots+\alpha_{N-1}2^{N-1}=[\alpha_0,\,\alpha_1,\,\dots\,,\,\alpha_{N-1}]\,,\quad
\alpha_k\in\{0,1\}\,,
$$
$$b=\beta_0+\beta_12+\dots+\beta_{N-1}2^{N-1}=[\beta_0,\,\beta_1,\,\dots\,,\,\beta_{N-1}]\,,\quad
\beta_k\in\{0,1\}\,.
$$

One additional property of $\bs T$ that we will require in the following is given by the following lemma.

\begin{lemma}\label{l:1} Let ${\bs T}=(t_{ab})_{l\times l}$ be the transition
matrix defined above, $P_a(z)=\sum\limits_{k=0}^{l-1}t_{ka}z^k $ be
the generating polynomial of the $a$-th column of $\bs T$. Then
\begin{equation}\label{eq:5}
P_a(z)=\sum\limits_{k=0}^{l-1}t_{ka}z^k=\prod_{i=0}^{N-1}\left(1+(-1)^{\alpha_i}z^{2^i}\right)\;.
\end{equation}
Moreover,
\begin{equation}\label{eq:6}
t_{ab}=t_{ba}=(-1)^{\left<a,b\right>}\;,\quad
\left<a,b\right>:=\sum\limits_{k=0}^{N-1}\alpha_k\beta_k\bmod 2.
\end{equation}
\end{lemma}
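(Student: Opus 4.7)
The plan is to prove \eqref{eq:5} by induction on $N$ using the Kronecker-product recursion $\bs T_N=\bs T_1\otimes \bs T_{N-1}$, and then to read off \eqref{eq:6} by expanding the resulting product in \eqref{eq:5} and comparing coefficients. The symmetry $t_{ab}=t_{ba}$ will fall out for free once the explicit sign expression is in hand, because the bilinear form $\langle a,b\rangle$ is manifestly symmetric.

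For the base case $N=1$ one just inspects the two columns of $\bs T_1$: column $0$ has generating polynomial $1+z=1+(-1)^{0}z^{2^0}$ and column $1$ has $1-z=1+(-1)^{1}z^{2^0}$, matching \eqref{eq:5}. For the inductive step, decompose any row index $k\in\{0,\dots,2^N-1\}$ as $k=\kappa_{N-1}\cdot 2^{N-1}+k'$ with $k'\in\{0,\dots,2^{N-1}-1\}$, and similarly $a=\alpha_{N-1}\cdot 2^{N-1}+a'$. By the definition of the Kronecker product (with the standard convention that the first factor indexes the high-order block),
\[
(\bs T_N)_{ka}=(\bs T_1)_{\kappa_{N-1},\alpha_{N-1}}\,(\bs T_{N-1})_{k',a'}.
\]
Summing against $z^k=z^{\kappa_{N-1}2^{N-1}}z^{k'}$ factorises the double sum:
\[
P_a^{(N)}(z)=\left(\sum_{\kappa_{N-1}=0}^{1}(\bs T_1)_{\kappa_{N-1},\alpha_{N-1}}\,z^{\kappa_{N-1}2^{N-1}}\right)P_{a'}^{(N-1)}(z)=\bigl(1+(-1)^{\alpha_{N-1}}z^{2^{N-1}}\bigr)P_{a'}^{(N-1)}(z),
\]
and applying the induction hypothesis to $P_{a'}^{(N-1)}(z)$ produces precisely \eqref{eq:5}.

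To obtain \eqref{eq:6}, I would expand \eqref{eq:5} as a sum over subsets of $\{0,\dots,N-1\}$: each choice of whether to take $1$ or $(-1)^{\alpha_i}z^{2^i}$ from the $i$-th factor is encoded by a bit $\beta_i\in\{0,1\}$, so
\[
P_a(z)=\sum_{\beta_0,\dots,\beta_{N-1}\in\{0,1\}}(-1)^{\sum_i\alpha_i\beta_i}\,z^{\sum_i\beta_i 2^i}=\sum_{b=0}^{l-1}(-1)^{\langle a,b\rangle}z^b.
\]
Identifying coefficients with the definition $P_a(z)=\sum_k t_{ka}z^k$ yields $t_{ba}=(-1)^{\langle a,b\rangle}$; since $\langle a,b\rangle=\langle b,a\rangle$ in $\mathbb Z/2$, this also gives $t_{ab}=t_{ba}$.

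The only genuine source of error here is the bookkeeping around the Kronecker product, specifically keeping straight which factor plays the role of the high-order bit and making sure the exponent $z^{\kappa_{N-1}2^{N-1}}$ (rather than $z^{\kappa_{N-1}}$) appears; everything else is a formal manipulation of polynomials.
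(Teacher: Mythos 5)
Your proof of \eqref{eq:5} is the same induction on $N$ via the Kronecker recursion $\bs T_N=\bs T_1\otimes\bs T_{N-1}$ that the paper uses, and your bookkeeping is in fact the more careful of the two: you correctly track the exponent $z^{2^{N-1}}$ contributed by the high-order block, where the paper's displayed recursion $P_a(z)=P_{a_{N-1}}(z)(1+(-1)^{\alpha_{N-1}}z)$ omits that exponent (evidently a typo). For \eqref{eq:6} you take a slightly different route: the paper runs a second, independent induction on the matrix entries via $t_{ab}=(\bs T_{N-1})_{a_{N-1}b_{N-1}}(-1)^{\alpha_{N-1}\beta_{N-1}}$, whereas you expand the product in \eqref{eq:5} over the choices $\beta_i\in\{0,1\}$ and read off the coefficient of $z^b$, relying on uniqueness of binary representations. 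Both arguments are correct; yours makes \eqref{eq:6} a corollary of \eqref{eq:5} and yields the symmetry $t_{ab}=t_{ba}$ for free from the symmetry of $\left<a,b\right>$, while the paper's keeps the two claims logically independent. No gaps.
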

\begin{proof} We prove
the formulas \eqref{eq:5} and \eqref{eq:6} by the induction on $N$. Indeed,
according to the definition of $\bs T$ we have the Kronecker product $${\bs T}={\bs T}_{N}={\bs T}_1\otimes {\bs
T}_{N-1}=\left[\begin{array}{rr}1&1\\1&-1\end{array}\right]\otimes
{\bs T}_{N-1}=\left[\begin{array}{rr}{\bs T}_{N-1}&{\bs
T}_{N-1}\\{\bs T}_{N-1}&-{\bs T}_{N-1}\end{array}\right].$$ Let
us represent $a=a_N=a_{N-1}+\alpha_{N-1} 2^{N-1}$. The block form
of ${\bs T}={\bs T}_{N}$ implies the equality
$P_a(z)=P_{a_{N-1}}(z)(1+(-1)^{\alpha_{N-1}}z)$ and \eqref{eq:5} by
induction.

Consider again the representations $a=a_N=a_{N-1}+\alpha_{N-1}
2^{N-1}$, $b=b_N=b_{N-1}+\beta_{N-1}2^{N-1}$. It follows
from the above form of the matrix ${\bs T}$  that
$$
t_{ab}=({\bs T}_N)_{ab} =({\bs
T}_{N-1})_{a_{N-1}b_{N-1}}(-1)^{\alpha_{N-1}\beta_{N-1}}\;.
$$
The induction on $N$ completes the proof of \eqref{eq:6}.
\end{proof}

Let us now return to the problem \eqref{eq:2}. We have
$$
{\bs T}^{-1}{\bs{ QT}} {\bs T}^{-1}{\bs{ WT}} {\bs
T}^{-1}\bs p=\overline{w}\, {\bs T}^{-1}\bs p\;,
$$
or, in view of \eqref{eq:1},
\begin{equation}\label{eq:7}
{\bs D} \left(w{\bs I}+s\sum_{a\in A}{\bs T}^{-1}{\bs
E}_a{\bs T}\right) {\bs T}^{-1}\bs p=\overline{w}\, {\bs
T}^{-1}\bs p\;,
\end{equation}
which yields,  after some rearrangement,
\begin{equation}\label{eq:8}(\overline{w}{\bs I}-w{\bs D}) {\bs
T}^{-1}\bs p=\frac{s}{l}\sum_{a\in A}{\bs{ DT}}{\bs E}_a\bs p\;.
\end{equation}

Let $\bs x:={\bs T}^{-1}\bs p$, $\bs p={\bs T}\bs x$. Then \eqref{eq:8} implies
\begin{equation}\label{eq:9}
\bs x=\frac{s}{l}\sum_{a\in A}(\overline{w}{\bs I}-w{\bs D})^{-1}
{\bs{ DT}}{\bs E}_a\bs p\;,
\end{equation}
or, in coordinates,
\begin{equation}\label{eq:10}
x_k=\frac{s}{l}\sum_{a\in A}\frac{(2q-1)^{H_k}
t_{ka}p_a}{\overline{w}-w(2q-1)^{H_k}}\,,\quad
k=0,\dots,l-1.
\end{equation}
Since $\bs p={\bs T}\bs x$, then we get
\begin{equation}\label{eq:11}
p_b=\sum_{k=0}^{l-1}t_{bk}x_k=\frac{s}{l}\sum_{a\in
A}\sum_{k=0}^{l-1}\frac{(2q-1)^{H_k}
t_{bk}t_{ak}}{\overline{w}-w(2q-1)^{H_k}}p_a\,.
\end{equation}

Note that only the components $p_a$,
where $a\in A$, are involved in the right-hand side of \eqref{eq:11}. We
can omit the components $p_b$ for $b\notin A$ and obtain the
``reduced'' column-vector $\bs p_A=(p_a)$, $a\in A$. Considering only
$a\in A$ we can rewrite \eqref{eq:11} as
\begin{equation}\label{eq:12}
\bs p_A=\bs M\bs p_A\,,
\end{equation}
where $\bs M=(m_{ab})_{r\times r}$ is the square matrix of the order $r=|A|$
with the entries
\begin{equation}\label{eq:13}
m_{ab}=m_{ba}=\frac{s}{l}\sum_{k=0}^{l-1}\frac{(2q-1)^{H_k}
t_{ak}t_{bk}}{\overline{w}-w(2q-1)^{H_k}}=\frac{s}{l}\sum_{k=0}^{l-1}\frac{(2q-1)^{H_k}
(-1)^{\left<a,k\right>+\left<b,k\right>}}{\overline{w}-w(2q-1)^{H_k}}\,
\end{equation}
in view of \eqref{eq:11} and Lemma \ref{l:1}. The equality \eqref{eq:12} means that the
reduced vector $\bs p_A$ is an eigenvector of $\bs M$ corresponding to the
eigenvalue $\lambda=1$.

We consider $\overline{w}$ in \eqref{eq:13} as a parameter. It
follows from \eqref{eq:4} that $\overline{w}$ depends only on $p_a$, $a\in
A$, that is, on the reduced vector $\bs p_A$. The original eigenvector $\bs p$ can be
reconstructed from $\bs p_A$ with the help of \eqref{eq:11} if $\bs p_A$ is known.
Therefore, for the introduced special two-valued fitness landscapes, instead of the original problem \eqref{eq:2}, we can consider the problem to find the reduced eigenvector $\bs p_A$ satisfying \eqref{eq:12} and corresponding to the eigenvalue $\lambda=1$ of the matrix $\bs M$ defined in \eqref{eq:13}. Since \eqref{eq:12} defines $p_a,\,a\in A$ in terms of $\overline{w}$, then, finally, formula~\eqref{eq:4} can be used to determine $\overline{w}$ implicitly in terms of the system parameters $w,s$ and $q$.

%\begin{quote}\it To find the eigenvector $\bs p_A$ satisfying \eqref{eq:12} and corresponding to the eigenvalue $\lambda=1$ of the matrix $\bs M$ defined in \eqref{eq:13}. The mean fitness $\overline{w}=\overline{w}(q)$ then satisfies the formula~\eqref{eq:4}, which defines $\overline{w}$ implicitly in terms of $w,s$ and $q$.
%\end{quote}

To conclude, we remark that the eigenvalue $\overline{w}$ can be also
found from the equation
\begin{equation}\label{eq:15}\det({\bs M}-{\bs I})=0\,,
\end{equation}
but in general it is not easier than to solve the original
problem \eqref{eq:2}. For the single peaked landscapes (i.e., when $A$ consists
of a single element) the corresponding equation was obtained and
investigated in \cite{semenov2014}. In the next section we propose a different
approach that can be further elaborated on for some special cases.

\section{Equation for the leading eigenvalue $\overline{w}$}\label{sec:2}
In this section we show that, using the preliminary analysis from the previous section, it is possible to find an algebraic equation for the eigenvalue $\overline{w}$ under some additional symmetry requirements on the set $A$, and this equation is of degree at most $N+1$.

First we transform \eqref{eq:13} in the following way:
\begin{equation}\label{eq:16}
\begin{split}
m_{ab}&=\frac{s}{l}\sum_{k=0}^{l-1}\frac{(2q-1)^{H_k}t_{ak}t_{bk}}{\overline{w}-w(2q-1)^{H_k}}=\frac{s}{l\overline{w}}\sum_{k=0}^{l-1}\frac{(2q-1)^{H_k}t_{ak}t_{bk}}{1-\frac{w}{\overline{w}}(2q-1)^{H_k}}\\
&=\frac{s}{l\overline{w}}\sum_{k=0}^{l-1}t_{ak}t_{bk}\sum_{c=0}^{\infty}\left(\frac{w}{\overline{w}}\right)^c (2q-1)^{(c+1)H_k}=\frac{s}{l\overline{w}}\sum_{c=0}^{\infty}\left(\frac{w}{\overline{w}}\right)^c\sum_{k=0}^{l-1}(2q-1)^{(c+1)H_k}t_{ak}t_{bk}\;.
\end{split}
\end{equation}

\begin{lemma}\label{l:2} We have the following factorization:
\begin{equation}\label{eq:17}\sum_{k=0}^{l-1}z^{H_k} t_{ak}t_{bk}=
(1-z)^{H_{ab}}(1+z)^{N-H_{ab}}\;.
\end{equation}
\end{lemma}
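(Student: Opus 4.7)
My plan is to prove the identity by factoring the sum over $k$ into a product of $N$ independent one-bit sums, exploiting the fact that both $H_k$ and $\langle a,k\rangle$, $\langle b,k\rangle$ split additively over the binary digits of $k$.

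Concretely, I would begin by writing $k=[\gamma_0,\gamma_1,\ldots,\gamma_{N-1}]$ in binary, so that $H_k=\sum_{i=0}^{N-1}\gamma_i$ (as an integer). By Lemma~\ref{l:1}, $t_{ak}t_{bk}=(-1)^{\langle a,k\rangle+\langle b,k\rangle}$, and since the parity $(-1)^{(\cdot)}$ is insensitive to reduction mod $2$, I can replace the exponent by $\sum_{i=0}^{N-1}(\alpha_i+\beta_i)\gamma_i$ with the sum taken in $\Z$. Thus the summand factorizes as
$$z^{H_k}t_{ak}t_{bk}=\prod_{i=0}^{N-1}z^{\gamma_i}(-1)^{(\alpha_i+\beta_i)\gamma_i},$$
and summing independently over each $\gamma_i\in\{0,1\}$ converts the sum over $k\in\{0,\ldots,l-1\}$ into the product
$$\sum_{k=0}^{l-1}z^{H_k}t_{ak}t_{bk}=\prod_{i=0}^{N-1}\bigl(1+(-1)^{\alpha_i+\beta_i}z\bigr).$$

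Next I would read off each factor according to whether the $i$-th bits of $a$ and $b$ agree: if $\alpha_i=\beta_i$ then $(-1)^{\alpha_i+\beta_i}=1$ and the factor equals $1+z$, while if $\alpha_i\neq\beta_i$ the factor equals $1-z$. Since by definition the Hamming distance $H_{ab}$ counts exactly the positions $i$ where $\alpha_i\neq\beta_i$, the product becomes $(1-z)^{H_{ab}}(1+z)^{N-H_{ab}}$, which is the desired identity.

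I do not anticipate a genuine obstacle: the only thing to be careful about is the passage from $\langle a,k\rangle+\langle b,k\rangle \bmod 2$ to the integer sum $\sum_i(\alpha_i+\beta_i)\gamma_i$ when sitting inside $(-1)^{(\cdot)}$, and the counting of agreeing versus disagreeing bits. As a sanity check one recovers Lemma~\ref{l:1}'s formula \eqref{eq:5} by the same factorization device with $b=0$ and $z$ replaced by a monomial indexed by $2^i$, so the approach is internally consistent. An alternative, equivalent route would be an induction on $N$ using the Kronecker block decomposition $\bs T_N=\bs T_1\otimes\bs T_{N-1}$ used in Lemma~\ref{l:1}, but the direct factorization above is shorter.
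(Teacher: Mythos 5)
Your proof is correct, and it takes a genuinely different route from the paper's. The paper observes that $\sum_k t_{ak}z^{H_k}t_{kb}$ is the $(a,b)$ entry of $\bs Z=2^N\,\bs T\diag(z^{H_k})\bs T^{-1}$ and then simply reads off this entry by running the diagonalization $\bs T^{-1}\bs Q\bs T=\bs D$ backwards: with $q=\tfrac{1+z}{2}$ the matrix $\bs Z$ equals $2^N\bs Q$, whose entries are $2^N(1-q)^{H_{ab}}q^{N-H_{ab}}=(1-z)^{H_{ab}}(1+z)^{N-H_{ab}}$. That argument is two lines but leans on the previously quoted (and externally cited) fact that $\bs T$ diagonalizes $\bs Q$ with eigenvalues $(2q-1)^{H_k}$. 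Your bitwise factorization instead uses only the explicit formula $t_{ab}=(-1)^{\left<a,b\right>}$ from Lemma~\ref{l:1}, splits both $z^{H_k}$ and the sign over the binary digits of $k$, and collapses the sum over $k\in\{0,\dots,2^N-1\}$ into the product $\prod_{i}\bigl(1+(-1)^{\alpha_i+\beta_i}z\bigr)$; the step replacing the mod-$2$ exponent by the integer sum inside $(-1)^{(\cdot)}$ is exactly the point that needs (and gets) care. What your approach buys is self-containedness --- it in effect reproves the diagonalization of $\bs Q$ rather than invoking it --- at the cost of a few more lines; the paper's version buys brevity by reusing machinery already on the table. Both are valid proofs of \eqref{eq:17}.
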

\begin{proof}
It is straightforward to see that $\sum\limits_{k=0}^{l-1}z^{H_k}
t_{ak}t_{bk}= \sum\limits_{k=0}^{l-1}t_{ak}z^{H_k} t_{kb}$ is the
entry $z_{ab}$ of the matrix
$$
{\bs Z}:={\bs T}\diag(1,\dots,z^{H_k},\dots,z^N){\bs T}=2^N {\bs T}\diag (1,\dots,z^{H_k},\dots,z^N){\bs T}^{-1}.
$$
It follows from the properties of $\bs T$ that
$$
z_{ab}=2^N(1-q)^{H_{ab}}q^{N-H_{ab}}=(1-z)^{H_{ab}}(1+z)^{N-H_{ab}},\quad
\mbox{where}\; q=\frac{1+z}{2}\;.
$$
Thus, the lemma is proved. \end{proof}

\medskip Applying Lemma \ref{l:2} to \eqref{eq:16} we get
\begin{equation}\label{eq:18}
m_{ab}=\frac{s}{l\overline{w}}
\sum_{c=0}^{\infty}\left(\frac{w}{\overline{w}}\right)^c
\left(1-(2q-1)^{c+1}\right)^{H_{ab}}\left(1+(2q-1)^{c+1}\right)^{N-H_{ab}}.\end{equation}

We have the following consequence of \eqref{eq:12}:
\begin{equation}\label{eq:19}
\sum_{b\in A} p_b= \sum_{b\in A}\sum_{a\in A} m_{ba}p_a=\sum_{a\in
A}p_a \sum_{b\in A} m_{ba}.
\end{equation}

Now we introduce a \textit{key} assumption that will allow us to simplify the analysis.

We assume that the sum $\sum\limits_{b\in A}m_{ab}$ does not depend
on $a\in A$. Then it follows from \eqref{eq:19} that $\sum\limits_{b\in A}
m_{ba}=1$ for each $a\in A$. In view of \eqref{eq:18} it implies that
\begin{equation}\label{eq:20}
\frac{l\overline{w}}{s}=
\sum_{c=0}^{\infty}\left(\frac{w}{\overline{w}}\right)^c
\sum\limits_{b\in A}
\left(1-(2q-1)^{c+1}\right)^{H_{ab}}\left(1+(2q-1)^{c+1}\right)^{N-H_{ab}}\,,
\end{equation}
if the inner sum does not depend on $a\in A$.

The main question is when our key assumption holds. We present some sufficient conditions for this, and hence for the equation \eqref{eq:20}.

We refer to the well known geometric interpretation of the metric
space $V=V_N=\{0,1\}^N$ with the Hamming distance. Consider
1-skeleton of the $N$-dimensional cube $[0,1]^N$ with the set of
vertices $V$. The vertices $a$ and $b$ are connected by the
(unique) edge $e_{ab}$ if $H_{ab}=1$. The Hamming distance between
vertices $u$ and $v$ is the length of a shortest path connecting
these vertices, that is, the number of edges in this path. The set $V$, due to the binary representation
$$a=\alpha_0+\alpha_1
2+\dots+\alpha_{N-1}2^{N-1}=[\alpha_0,\,\alpha_1,\,\dots\,,\,\alpha_{N-1}]\,,\quad
\alpha_k\in\{0,1\}\,,$$ can be identified with the set of
indices $X=X_N=\{0,1,\dots,2^N-1\}$ with the Hamming distance. In what follows we will usually make no difference between metric
spaces $V$ and $X$.

We note that the group ${\rm Iso}(X_N)$ of all isometries of $X_N$, acting on the set $X_N$, is also known as the \textit{Weyl group} $W_N$ of order $2^NN!$ of the root system of type $B_N$ (or $C_N$, see, e.g., \cite{bourbaki22001lie}).

\begin{proposition}\label{pr:1} Let $G$ be a group
that acts on the metric space $X$ by isometries \emph{(}i.e.,
$G\leqslant{\rm Iso}(X)$\emph{)} and let $A$ be a $G$-orbit. Then
the equality \eqref{eq:20} holds.
\end{proposition}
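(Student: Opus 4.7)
The plan is to verify the \emph{key assumption} introduced just before the proposition, namely that $\sum_{b\in A} m_{ab}$ is independent of $a\in A$. Once that is established, the derivation immediately preceding the proposition forces this common value to equal $1$ (via \eqref{eq:19}), and then \eqref{eq:20} follows term by term from \eqref{eq:18}.

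First I would observe that formula \eqref{eq:18} expresses $m_{ab}$ as a power series in $w/\overline{w}$ whose coefficients depend on $(a,b)$ only through the Hamming distance $H_{ab}$. Hence the key assumption will hold as soon as, for every function $f\colon\{0,1,\dots,N\}\to\R$, the quantity
\[
S_f(a)\ :=\ \sum_{b\in A} f(H_{ab})
\]
is independent of $a\in A$.

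To prove this, fix $a,a'\in A$. Since $A$ is a $G$-orbit there exists $g\in G$ with $g\cdot a=a'$, and since $A$ is automatically $G$-stable (being an orbit) the map $b\mapsto g\cdot b$ is a bijection $A\to A$. The hypothesis that $G$ acts by isometries gives
\[
H_{a',\,g\cdot b}\ =\ H_{g\cdot a,\,g\cdot b}\ =\ H_{ab}\,.
\]
Applying the change of summation index $b':=g\cdot b$ therefore yields
\[
S_f(a)\ =\ \sum_{b\in A} f(H_{a',\,g\cdot b})\ =\ \sum_{b'\in A} f(H_{a',b'})\ =\ S_f(a'),
\]
which establishes the required independence and completes the verification of the key assumption.

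The argument is essentially a bookkeeping translation of the orbit/isometry structure into the coefficients of the series \eqref{eq:18}; there is no genuine obstacle to overcome. The only point that merits a comment is that $A$, as a $G$-orbit, is $G$-stable, so the bijection $b\mapsto g\cdot b$ really does permute $A$ and the change of summation variable is legitimate.
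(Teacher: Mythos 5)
Your proof is correct and follows the same route as the paper: the paper's one-line argument is precisely that transitivity of $G$ on $A$ plus isometry-invariance of $H_{ab}$ makes the inner sum independent of $a$, which you have simply spelled out via the change of variable $b\mapsto g\cdot b$. No further comment is needed.
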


\begin{proof} Since $G$ acts transitively on $A$ and preserves the Hamming
distance $H_{ab}$, the inner sum in \eqref{eq:20} does not depend on $a\in
A$.
\end{proof}

Now we can state the following basic result.
\begin{corollary}\label{corr:2:1}Under the conditions of Proposition \ref{pr:1} the eigenvalue $\overline{w}=\overline{w}(q)$ of \eqref{eq:2} is a root of an algebraic equation \emph{(}with the coefficients depending on $q$\emph{)} of degree at most $N+1$.
\end{corollary}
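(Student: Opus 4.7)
The plan is to start from equation (2.20), which already holds under the hypothesis of Proposition \ref{pr:1}, and to rewrite the right-hand side so that the dependence on $\overline{w}$ becomes rational.

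First I would fix any reference point $a_0\in A$ and, using the fact that $G$ acts transitively on $A$ by isometries, replace the inner sum in \eqref{eq:20} by a sum over Hamming spheres about $a_0$. Denoting by $\sigma_h$ the number of $b\in A$ with $H_{a_0b}=h$ (the spherical growth function of the orbit $A$), the inner sum in \eqref{eq:20} equals
\begin{equation*}
S(z):=\sum_{h=0}^{N}\sigma_h(1-z)^h(1+z)^{N-h},\qquad z=(2q-1)^{c+1},
\end{equation*}
which is a polynomial in $z$ of degree at most $N$. Writing $S(z)=\sum_{j=0}^{N}s_jz^j$, \eqref{eq:20} becomes
\begin{equation*}
\frac{l\overline{w}}{s}=\sum_{c=0}^{\infty}\left(\frac{w}{\overline{w}}\right)^c\sum_{j=0}^{N}s_j(2q-1)^{j(c+1)}.
\end{equation*}

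Next I would interchange the two sums (which is legitimate on the region $\overline{w}>w$ where all the geometric series converge absolutely, and this is precisely where the dominant eigenvalue lies since $w+s>w$) and evaluate each inner geometric series:
\begin{equation*}
\sum_{c=0}^{\infty}\left(\frac{w(2q-1)^j}{\overline{w}}\right)^c=\frac{\overline{w}}{\overline{w}-w(2q-1)^j}.
\end{equation*}
This yields, after dividing by $\overline{w}$,
\begin{equation*}
\frac{l}{s}=\sum_{j=0}^{N}\frac{s_j(2q-1)^j}{\overline{w}-w(2q-1)^j}.
\end{equation*}

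Finally I would clear denominators by multiplying by $\prod_{j=0}^{N}\bigl(\overline{w}-w(2q-1)^j\bigr)$, which is a polynomial in $\overline{w}$ of degree $N+1$. Since each summand on the right has a single linear factor in the denominator, the result is a polynomial identity of degree at most $N+1$ in $\overline{w}$ with coefficients depending only on $w$, $s$, $q$, $N$, and the numbers $\sigma_h$ (equivalently, the $s_j$). Thus $\overline{w}(q)$ is a root of this algebraic equation. The only genuine point requiring care is justifying the interchange of summations, which follows from the absolute convergence noted above; the rest is a mechanical rearrangement driven by the factorization provided by Lemma \ref{l:2} together with the $G$-invariance of the Hamming sphere counts.
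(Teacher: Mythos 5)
Your proposal is correct and follows essentially the same route as the paper: group the inner sum of \eqref{eq:20} by Hamming spheres about a fixed point of the orbit (your $\sigma_h$ are the paper's $f_d$, and your $S(z)$ is $2^N F_A(z)$ in the paper's notation \eqref{eq:23}--\eqref{eq:24}), expand as a polynomial in $z=(2q-1)^{c+1}$, interchange the sums to collapse each geometric series into $\overline{w}/(\overline{w}-w(2q-1)^j)$, and clear denominators to obtain the degree-$(N+1)$ equation \eqref{eq:26}. The only addition is your explicit justification of the interchange via absolute convergence for $\overline{w}>w$, which the paper leaves implicit but which indeed holds since $\overline{w}(q)\geq w+s|A|2^{-N}>w$.
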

\begin{proof}
Consider the polynomial (see \eqref{eq:17} and Lemma \ref{l:2})
\begin{equation}\label{eq:21}F_A(z):=\frac{1}{2^N}\sum_{b\in A}(1-z)^{H_{ab}}(1+z)^{N-H_{ab}}\,.
\end{equation}
We have
\begin{equation}\label{eq:22}
F_A(z)>0,\;-1<z\leq 1\,,\quad F_A(0)=\frac{|A|}{2^N}\;,\quad F_A(1)=1.
\end{equation}
Since $0\leq H_{ab}\leq N$, we can rewrite \eqref{eq:21} as
\begin{equation}\label{eq:23}
F_A(z)=\frac{1}{2^N}\sum_{d=0}^N f_d\,(1-z)^{d}(1+z)^{N-d}
,
\end{equation}
where $f_d=\#\{b\in A\,|\,H_{ab}=d\}$. Applying the
binomial expansion to \eqref{eq:23} we get
\begin{equation}\label{eq:24}
F_A(z)=\sum_{d=0}^N h_d\,z^d\;,\quad
h_d=\frac{1}{2^N}\sum_{j=0}^d(-1)^j f_j{\binom{d}{j}}{\binom{N-d}{d-j}}.
\end{equation}

%\medskip\noindent{\bf Conjecture 2.1.} For any $G$-orbit $A$
%all the coefficients $h_d$ in (24) are non-negative.

%It follows in view of the second equality (22) that if the
%conjecture is valid then $F_A(z)$ is the generating function of
%some distribution. Which one?

With the introduced notation equation \eqref{eq:20} reads
\begin{equation}\label{eq:25}
\frac{\overline{w}}{s}=
\sum_{c=0}^{\infty}\left(\frac{w}{\overline{w}}\right)^c
F_A((2q-1)^{c+1})\,,
\end{equation}
or
$$
\frac{\overline{w}}{s}= \sum_{d=0}^N h_d(2q-1)^d
\sum_{c=0}^{\infty}\left(\frac{w}{\overline{w}}\right)^c(2q-1)^{cd}=\sum_{d=0}^N
\frac{h_d(2q-1)^d\overline{w}}{\overline{w}-w(2q-1)^d}\,.
$$
Finally, the last equation can be transformed into
\begin{equation}\label{eq:26}
\sum_{d=0}^N
\frac{h_d(2q-1)^d}{\overline{w}-w(2q-1)^d}=\frac{1}{s}
\end{equation}
with the rational coefficients $h_d$ defined in \eqref{eq:24}.
\end{proof}

We have the following corollary that allows us to reduce the number of computations in some special cases.
\begin{corollary}\label{corr:2:3} Let $\Gamma=\Gamma_N={\rm Iso}(X_N)$ be
the group of all isometries {\emph{(}}of order $2^N N!${\emph{)}} acting on the metric
space $X=X_N=\{0,1,\dots,2^N-1\}$ with the Hamming distance and let
$\gamma A$ be the image of $A$ under the {\emph{(}}left{\emph{)}} action of
$\gamma\in \Gamma$. Then the equations \eqref{eq:20}, \eqref{eq:25}, \eqref{eq:26}, and \eqref{eq:27} are the same for
$A$ and $\gamma A$.
\end{corollary}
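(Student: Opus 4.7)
The plan is to observe that each of the equations (20), (25), and (26) — and by inspection (27) as well — depends on the set $A$ only through the distance-distribution polynomial $F_A(z)$ defined in (21), or equivalently through the multiplicities $f_d = \#\{b\in A : H_{ab}=d\}$ appearing in (24). Therefore the entire corollary reduces to the single identity $F_A(z) = F_{\gamma A}(z)$ for every $\gamma \in \Gamma_N$.

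To establish this identity I would fix any $a\in A$ and set $a' := \gamma a \in \gamma A$. For each $b\in A$ write $b' := \gamma b$; since $\gamma$ is an isometry,
\begin{equation*}
H_{a'b'} = H_{\gamma a,\gamma b} = H_{ab}.
\end{equation*}
The map $b \mapsto \gamma b$ is a bijection from $A$ onto $\gamma A$, so reindexing the sum in (21) gives
\begin{equation*}
\sum_{b' \in \gamma A}(1-z)^{H_{a'b'}}(1+z)^{N-H_{a'b'}} = \sum_{b \in A}(1-z)^{H_{ab}}(1+z)^{N-H_{ab}},
\end{equation*}
and dividing by $2^N$ yields $F_{\gamma A}(z) = F_A(z)$.

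For completeness one should also note that the \emph{key assumption} of Section \ref{sec:2} — independence of the inner sum from the choice of base point — is inherited by $\gamma A$: if $A$ is a $G$-orbit, then $\gamma A$ is a $\gamma G\gamma^{-1}$-orbit, and the conjugated group still acts transitively by isometries, so Proposition \ref{pr:1} applies to $\gamma A$ as well. Consequently $F_{\gamma A}$ is well-defined independently of the representative $a'$ chosen and coincides with $F_A$, and substituting this into (20), (25), (26), and (27) shows that all four equations are literally the same for $A$ and $\gamma A$.

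The main point here is conceptual rather than technical: the work has already been done by the reduction carried out above, which extracted all $A$-dependence into the single orbit-invariant polynomial $F_A$. Given that reduction, the corollary becomes a direct consequence of the definition of an isometry and requires no additional calculation.
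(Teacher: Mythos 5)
Your proposal is correct and follows essentially the same route as the paper: the paper's own (very terse) proof likewise observes that the equations depend only on the metric data of $A$ (i.e.\ on the distance multiplicities $f_d$, hence on $F_A$), and likewise notes that $\gamma A$ is a $\gamma G\gamma^{-1}$-orbit so the key assumption persists. You have simply made explicit the isometry/bijection calculation $H_{\gamma a,\gamma b}=H_{ab}$ that the paper leaves implicit.
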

\begin{proof}
The equations \eqref{eq:20}, \eqref{eq:25}, \eqref{eq:26}, and \eqref{eq:27} were obtained only on the ground of the metric properties of $A$. Note that $A$ is a $G$-orbit if and only if $\gamma A$ is a $\gamma
G\gamma^{-1}$-orbit. Thus, the (left-)acting group $G$ should be
substituted by the conjugate $\gamma G\gamma^{-1}$.
\end{proof}

The case $w=0$ corresponds to the lethal mutations. In particular, we have
\begin{corollary} If $w=0$ then we get the
following polynomial expression for the leading eigenvalue, where $a\in
A$ may be chosen arbitrarily in the $G$-orbit $A$:
\begin{equation}\label{eq:27}
\overline{w}=s \sum\limits_{b\in A}
(1-q)^{H_{ab}}q^{N-H_{ab}}=sF_A(2q-1)=s\sum_{d=0}^N
h_d(2q-1)^d\,.
\end{equation}
\end{corollary}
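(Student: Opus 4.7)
The plan is to specialize the master equation (26) to the degenerate case $w=0$ and then convert the resulting polynomial in $(2q-1)$ back into the combinatorial/weighted form using the definition of $F_A$.

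First I would set $w=0$ in equation (26). Every denominator $\overline{w}-w(2q-1)^d$ collapses to the single value $\overline{w}$, which can be pulled out of the sum. This reduces (26) to the linear relation
\begin{equation*}
\frac{1}{\overline{w}}\sum_{d=0}^N h_d(2q-1)^d=\frac{1}{s},
\end{equation*}
and solving for $\overline{w}$ immediately yields $\overline{w}=s\sum_{d=0}^N h_d(2q-1)^d$. Because the equation is now linear in $\overline{w}$, the root is unique and there is no ambiguity about which eigenvalue we have isolated; the dominant eigenvalue of the problem is forced to coincide with this value.

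Next I would recognize the right-hand side as $sF_A(2q-1)$ by comparing with the expansion (24). To obtain the third (weighted) expression, I would substitute $z=2q-1$ directly into the defining formula (21) for $F_A(z)$. Then $1-z=2(1-q)$ and $1+z=2q$, so each summand becomes
\begin{equation*}
\frac{1}{2^N}(2(1-q))^{H_{ab}}(2q)^{N-H_{ab}}=(1-q)^{H_{ab}}q^{N-H_{ab}},
\end{equation*}
and the factor $2^{-N}$ cancels the $2^N$ produced by $(2)^{H_{ab}}(2)^{N-H_{ab}}$. Summing over $b\in A$ gives the stated form $s\sum_{b\in A}(1-q)^{H_{ab}}q^{N-H_{ab}}$. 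Independence of the choice of $a\in A$ is exactly the content of Proposition 3.1, which was the key assumption underlying (26).

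There is essentially no technical obstacle here; the result is almost purely bookkeeping once (26) is in hand. The only point worth a brief sanity check is that setting $w=0$ does not invalidate any of the derivation leading to (26) — in the derivation of (26) one uses the geometric series expansion in $w/\overline{w}$, and at $w=0$ this series trivially reduces to its $c=0$ term, so the formula remains valid in the limit. Thus no separate argument is required for the $w=0$ case, and the corollary follows as a direct substitution.
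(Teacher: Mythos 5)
Your proposal is correct and follows essentially the route the paper intends: specialize equation \eqref{eq:26} (equivalently, keep only the $c=0$ term of \eqref{eq:25}) at $w=0$, and unwind the substitution $z=2q-1$ in the definition \eqref{eq:21} of $F_A$, with independence of $a\in A$ supplied by Proposition \ref{pr:1}. The paper additionally corroborates the formula later (Example \ref{ex:3:8}) by exhibiting an explicit eigenvector constant on the orbit, but that is a supplement rather than a different proof of the corollary itself.
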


\section{Examples and applications}\label{sec:3}
In this section we consider several simple examples of the two-valued fitness landscapes and apply the obtained equation for the leading eigenvalue. The examples we consider are mostly based on various subgroups of the symmetric group $S_N$.

The symmetric group $G=S_N$ acts on the metric
space $X=X_N=\{0,1,\ldots,2^N-1\}$ with the Hamming distance by
isometries. To wit, let $\sigma\in S_N$. Then
$$\sigma(a)=\sigma[\alpha_0,\,\alpha_1,\ldots,\alpha_{N-1}]=
[\alpha_{\sigma^{-1}(0)},\,\alpha_{\sigma^{-1}(1)},\ldots,\alpha_{\sigma^{-1}(N-1)}],\quad
\alpha_k\in\{0,1\}\,.$$ Note that $G=S_N$ is a proper subgroup
of $\Gamma={\rm Iso}(X_N)$. The latter is of the order $2^N N!$ and
contains also the elements that correspond to reflections of the
$N$-dimensional cube $[0,1]^N$, see, e.g., Example \ref{ex:3:2} below.

The $S_N$-orbits are the subsets of
$$
A_p=\{a\in X\,|\,H_a=p\}\;,\quad p=0,1,\dots,N\,.
$$

\begin{example}[General permutation invariant fitness landscapes]\label{ex:3:1}
Recall that we defined the permutation invariant fitness landscape to be a diagonal matrix $\bs W$ such that the elements on the main diagonal are $w_j=w_{H_j},$ i.e., the fitness of the sequence $j$ depends only on the total number of ones in this sequence. To satisfy this definition the orbit for the two-valued fitness landscape must coincide with one of $A_p$ defined above.

We can consider only the case $2p\le N$. Indeed,
let $\gamma(a)=a^*=l-1-a$ be the index conjugate to $a$. The
conjugation $\gamma$ is an involution in $\Gamma$. The binary representation of $a^*$ differs at each position from
that of $a$. Then $H_{a^*}=N-H_a$ and $a\in A_p\Leftrightarrow
a^*\in A_p^*=A_{N-p}$. In other words, according to  Corollary \ref{corr:2:3},
 the equations \eqref{eq:20}, \eqref{eq:25}, \eqref{eq:26}, and \eqref{eq:27} for $A_p$ and $A_p^*=A_{N-p}$ are the same.

To obtain an equation for $\overline{w}$ we will need an auxiliary
\begin{lemma}\label{l:3:1} For $a,b\in A=A_p$ the distance $H_{ab}$ is even.
Moreover, for each $k=0,1,\dots,p\,$
$$
\#\{b\in A_p\,|\,H_{ab}=2k\}={\binom{p}{k}}{\binom{N-p}{k}}.
$$
\end{lemma}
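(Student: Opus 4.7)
The plan is to work directly with the binary representations. For any two indices $a,b \in A_p$, write $a = [\alpha_0,\ldots,\alpha_{N-1}]$ and $b = [\beta_0,\ldots,\beta_{N-1}]$, and partition the coordinate set $\{0,1,\ldots,N-1\}$ into four blocks according to the pair $(\alpha_i,\beta_i) \in \{0,1\}^2$. Denote the corresponding cardinalities by $n_{11}, n_{10}, n_{01}, n_{00}$. These four numbers encode everything we need: $H_a = n_{10}+n_{11}$, $H_b = n_{01}+n_{11}$, and $H_{ab} = n_{10} + n_{01}$ (the positions where $a$ and $b$ disagree).

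For the parity claim, I would simply observe that the constraints $H_a = H_b = p$ force $n_{10} = n_{01}$, and therefore $H_{ab} = 2n_{10}$ is even. This is the shorter of the two statements and serves to fix notation for the counting part.

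For the counting formula, fix $a \in A_p$ and $k \in \{0,1,\ldots,p\}$, and count the $b \in A_p$ with $H_{ab} = 2k$. By the previous paragraph this means $n_{10} = n_{01} = k$. The sequence $b$ is uniquely determined by the choice of which $k$ of the $p$ positions carrying a $1$ in $a$ get flipped to $0$ (contributing the factor $\binom{p}{k}$), together with the choice of which $k$ of the $N-p$ positions carrying a $0$ in $a$ get flipped to $1$ (contributing $\binom{N-p}{k}$). Since these two choices are independent and every such choice yields a valid $b \in A_p$, multiplication gives the asserted formula $\binom{p}{k}\binom{N-p}{k}$. The range restriction $2p \le N$ noted just before the lemma ensures $k \le p \le N-p$, so the binomial coefficients do not vanish for the stated range of $k$.

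There is no real obstacle here: the argument is purely a bookkeeping of two-by-two contingency tables on the coordinate positions, and no appeal to the structure of the matrix $\bs T$, the group $\Gamma_N$, or Lemma \ref{l:2} is needed. The only point that deserves a brief sentence in the write-up is the symmetry $a \mapsto a^* = l-1-a$ already invoked in Example \ref{ex:3:1}, which makes the formula manifestly symmetric in $p$ and $N-p$ as it should be.
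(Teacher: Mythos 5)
Your proof is correct and follows essentially the same route as the paper: the parity claim from $H_a=H_b=p$ and the count via choosing $k$ ones of $a$ to flip to zero and $k$ zeros to flip to one, giving $\binom{p}{k}\binom{N-p}{k}$. Your contingency-table notation $n_{11},n_{10},n_{01},n_{00}$ merely makes explicit the parity step the paper states tersely; the only superfluous remark is the one about $2p\le N$, which the formula does not need (when $k>N-p$ the binomial coefficient correctly vanishes).
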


\begin{proof} If $H_a$ and $H_b$ have the same parity, in particular, coincide then $H_{ab}$ is even, hence
$H_{ab}=2k$.

The binary representations
$$ a=[\alpha_0,\,\alpha_1,\,\dots\,,\,\alpha_{N-1}]\,,
\quad b=[\beta_0,\,\beta_1,\,\dots\,,\,\beta_{N-1}]\,,\quad
\alpha_j,\beta_j\in\{0,1\}\,$$ differ at exactly $2k$ positions.
Thus, in order to obtain the binary representation of $b$ from
that of $a$ we need to substitute exactly $k$ ones in
$[\alpha_0,\,\alpha_1,\,\dots\,,\,\alpha_{N-1}]$ by zeroes and
exactly $k$ zeroes in
$[\alpha_0,\,\alpha_1,\,\dots\,,\,\alpha_{N-1}]$ by ones since the
total number of ones  in both representations of $a$, $b$ is equal
to $p$, $H_a=H_b=p$. There are ${\binom{p}{k}}{\binom{N-p}{k}}$ variants of such substitutions.
\end{proof}

Lemma \ref{l:3:1} applied to \eqref{eq:20} yields
\begin{equation}\label{eq:28}
\frac{\overline{w}}{s}=
\frac{1}{2^N}\sum_{c=0}^{\infty}\left(\frac{w}{\overline{w}}\right)^c
\sum\limits_{k=0}^p {\binom{p}{k}}{\binom{N-p}{k}}\left(1-(2q-1)^{c+1}\right)^{2k}\left(1+(2q-1)^{c+1}\right)^{N-2k}.
\end{equation}
We note that in \eqref{eq:28} we can disregard the restriction $2p\le N$. The
polynomial
\begin{equation}\label{eq:29}F_{A_p}(z)=\frac{1}{2^N}\sum\limits_{k=0}^p
{\binom{p}{k}}{\binom{N-p}{k}}(1-z)^{2k}(1+z)^{N-2k}=\sum\limits_{d=0}^N h_dz^d
\end{equation}
of degree $N$ satisfies the conditions \eqref{eq:22}. Moreover, $h_d=h_{N-d}$.

Therefore, we conclude that for the permutation invariant fitness landscapes we obtained the explicit equation \eqref{eq:26} for the leading eigenvalue $\overline{w}$ with $h_d$ defined in \eqref{eq:29}. While it is a common wisdom that the dimensionality of the quasispecies problem for the permutation invariant fitness landscapes can be reduced to $N+1$ from $2^N$, the explicit equation to determine the leading eigenvalue $\overline{w}$ is, to the best of our knowledge, new.
\end{example}
\begin{remark}For the permutation invariant fitness landscapes arguably the most transparent and efficient way to analyze the problem is to invoke the so-called maximum principle \cite{Baake2007,saakian2006ese,wolff2009robustness}, therefore, first several examples in this section should be mostly considered as an illustration of the suggested technique. Nevertheless, the results we present are exact, contrary to the approximate nature of the maximum principle, for which also some technical conditions on the fitness landscape must be imposed (without these conditions the maximum principle can lead to incorrect conclusions, e.g., \cite{semenov2015,wolff2009robustness}). Our equations work for any fitness landscape and therefore are of interest on their own.
\end{remark}
In what follows we consider several special cases of the previous example.
\begin{example}[Single peaked landscape]\label{ex:3:1:1} Let $p=0$ in the previous example. Then we deal with the classical single peaked fitness landscape. The equation for $\overline{w}$ was studied in great details in \cite{semenov2014}. We would like to mention that in view of Corollary \ref{corr:2:3}, since the group of isometries $\Gamma$ acts transitively on
the set $X$ of indices then each equation \eqref{eq:26} for the single peaked landscape $A=\{a\}$ is the same.
Consequently, for the leading eigenvalue we can consider the basic case $A_0=\{0\}$, that is, the single peak at $w_0=w+s$.

We can also consider the trivial group $G=\{1\}$ acting on $X$ in order to treat the same case.

The polynomial \eqref{eq:23} becomes
$$
F_{A_0}(z)=\frac{1}{2^N}(1+z)^N= \frac{1}{2^N}\sum\limits_{d=0}^N
{\binom{N}{d}}z^d.
$$
Hence, the equation \eqref{eq:26} reads
\begin{equation}\label{eq:30}
\frac{1}{2^N} \sum_{d=0}^{N} {\binom{N}{d}}\frac{(2q-1)^{d}}{\overline{w}-w(2q-1)^{d}}=\frac{1}{s}\,.
\end{equation}
A very similar expression for a slightly different model was obtained originally in \cite{galluccio1997exact}.
\end{example}

\begin{example}\label{ex:3:1:2}In the previous notation let
$A=A_1=\{1,2,4,8,\dots,2^{N-1}\}$. Now, for
$a,b\in A$
$$
H_{ab}=\left\{
\begin{array}{r}
0,\quad a=b\;,\\
2,\quad a\ne b\;.\\
\end{array}
\right.
$$
The calculation of the polynomial \eqref{eq:23} yields
$$
F_{A_1}(z)=\frac{1}{2^N}\left((1+z)^N+(N-1)(1-z)^2(1+z)^{N-2}\right)=
\frac{1}{2^N}\sum\limits_{d=0}^N \frac{(N-2d)^2}{N}{\binom{N}{d}}z^d.
$$
Hence \eqref{eq:26} transforms into
\begin{equation}\label{eq:31}
\frac{1}{2^N}\sum_{d=0}^{N} \frac{(N-2d)^2}{N}{\binom{N}{d}}\frac{(2q-1)^{d}}{\overline{w}-w(2q-1)^{d}}=\frac{1}{s}\,.
\end{equation}
\end{example}

\begin{example}\label{ex:3:1:3} Let $N=2n$ be an even number and let $A=A_n$. Applying Lemma \ref{l:3:1} to \eqref{eq:20} we find
\begin{equation}\label{eq:32}
\frac{\overline{w}}{s}=
\frac{1}{2^N}\sum_{c=0}^{\infty}\left(\frac{w}{\overline{w}}\right)^c
\sum\limits_{k=0}^n {\binom{n}{k}}^2\left(1-(2q-1)^{c+1}\right)^{2k}\left(1+(2q-1)^{c+1}\right)^{2n-2k}.
\end{equation}
%Instead of the exact equation \eqref{eq:26} we prefer here to find an
%asymptotic version of \eqref{eq:32} as $n\to \infty$.
%
%It is known that if
%$-1<t<1$ then
%$$
%\lim_{n\to\infty}\frac{1}{{\binom{2n}{n}}}\sum\limits_{k=0}^n
%{\binom{n}{k}}^2(1-t)^{2k}(1+t)^{2n-2k}=\frac{1}{\sqrt{1-t^2}}\,.
%$$
%Thus, for $n$ large enough  \eqref{eq:32} transforms into
%$$
%\frac{\overline{w}}{s}\approx \frac{{\binom{2n}{n}}}{2^{2n}}\sum_{c=0}^{\infty}\left(\frac{w}{\overline{w}}\right)^c
%\frac{1}{\sqrt{1-(2q-1)^{2(c+1)}}}\,,
%$$
%or, by Stirling's formula,
%$$
%\frac{\overline{w}}{s}\approx \frac{1}{\sqrt{\pi
%n}}\sum_{c=0}^{\infty}\left(\frac{w}{\overline{w}}\right)^c
%\frac{1}{\sqrt{1-(2q-1)^{2(c+1)}}}\,.
%$$
%Finally, multiplying by $w\sqrt{\pi n}$ and dividing by
%$\overline{w}$ we get
%\begin{equation}\label{eq:33}
%\frac{w\sqrt{\pi n}}{s}\approx
%\sum_{m=1}^{\infty}\left(\frac{w}{\overline{w}}\right)^m
%\frac{1}{\sqrt{1-(2q-1)^{2m}}}\,,\quad n\gg 1\,.
%\end{equation}
%
\end{example}

\begin{example}[Antipodal landscape]\label{ex:3:2} Consider the set $A=\{a,a^*\}$, where, as before, $a^*$ is the conjugate index, $a^*=l-1-a$.
Let $G=\{1,g\}$ be the group of order 2 whose nontrivial element
(involution) $g$ maps each $a\in X$ to the conjugate $a^*$.
Thus, the set $A=\{a,a^*\}$ is a $G$-orbit. In view of Proposition
\ref{pr:1}, since $H_{aa}=0$ and $H_{aa^*}=N$ then the equation \eqref{eq:20}
reads
\begin{equation}\label{eq:34}
\frac{\overline{w}}{s}=
\frac{1}{2^N}\sum_{c=0}^{\infty}\left(\frac{w}{\overline{w}}\right)^c
\left(
\left(1-(2q-1)^{c+1}\right)^{N}+\left(1+(2q-1)^{c+1}\right)^{N}\right)\,.
\end{equation}
In this case the polynomial \eqref{eq:23} takes the form ($\lfloor\,\cdot\,\rfloor$
stands for the integer part):
$$
F_{A}(z)=\frac{1}{2^N}((1+z)^N+(1-z)^N)= \frac{1}{2^{N-1}}
\sum_{d=0}^{ \lfloor N/2\rfloor} {\binom{N}{2d}}z^d\;.
$$
Hence the equation \eqref{eq:26} is of degree $\lfloor N/2\rfloor+1$:
\begin{equation}\label{eq:35}
\frac{1}{2^{N-1}}\sum_{d=0}^{ \lfloor N/2 \rfloor} {\binom{N}{2d}}\frac{(2q-1)^{2d}}{\overline{w}-w(2q-1)^{2d}}=\frac{1}{s}\,.
\end{equation}
\end{example}

\begin{example}[Quaternion landscape]\label{ex:3:3}
According to a well known theorem of Cayley each (finite) group
$G$ is a permutation group (which acts on itself by, for instance,
left shifts). It follows that each finite group $G$ can be
embedded into symmetric group $S_{n}$, $n=|G|$. Since
$S_n$ acts on the set of indices $X=X_n$ then we can find many
$G$-orbits restricting on $G$ the canonical action of $S_n$ on
$X_n$ (see the beginning of this section). Moreover, since there
are standard embeddings $S_n\to S_{n+1}\to S_{n+2}\to\dots$ there
is no problem to construct the action of any finite group $G$ on
the set $X_N$ for $N\ge n$. This gives us a virtually unlimited list of the two-valued fitness landscapes, which are not permutation invariant.

For instance, let
$$G=Q_8=\{\pm 1,\,\pm i,\,\pm j,\,\pm
k\,|\,i^2=j^2=k^2=-1\,,\;ij=k,\,jk=i,ki=j\}$$ ($-1$ commutes with
each element of $Q_8$) be the classical quaternion group of order
8. The embedding $Q_8\to S_8$ is chosen so that $i\to
(0213)(4657)$, $j\to(0415)(2736)$.

Consider a $G$-orbit, say,
$$A=\{7,11,13,14,112,176,208,224\}\subset X_N\,,\quad N\geq 8.$$
Direct calculations yield the polynomial \eqref{eq:23}
\begin{equation}\label{eq:36}
F_{A,N}(z)=\frac{1}{2^N}((1+z)^N+3(1-z)^2(1+z)^{N-2}+4(1-z)^6(1+z)^{N-6})\,.
\end{equation}
For $N=8$ we have
\begin{equation}\label{eq:36a}
F_{A,8}(z)=\frac{1}{2^8}((1+z)^8+3(1-z)^2(1+z)^6+4(1-z)^6(1+z)^2)=$$$$=
\frac{1}{64}(2+z+14z^2+15z^3+15z^5+14z^6+z^7+2z^8)\,.
\end{equation}
%Note that Conjecture 2.1 is valid for this case. Moreover, the
%coefficients are symmetric, $h_d=h_{8-d}$  (is that true for any
%subgroup $G<S_N$ and any $G$-orbit $A\subset X_N$?
Finally, we obtain the following form of \eqref{eq:26}
\begin{equation}\label{eq:37}
\sum_{d=0}^8
\frac{R_d(2q-1)^d}{\overline{w}-w(2q-1)^d}=\frac{64}{s}\;,
\end{equation}
where $R_0=R_8=2$, $R_1=R_7=1$, $R_2=R_6=14$, $R_3=R_5=15$,
$R_4=0$.

Examples of calculating $\overline{w}$ are given in Fig.~\ref{fig:1}, where the case $N=8$ was also checked numerically using the full matrix $\bs{QW}$.
\begin{figure}[!th]
\centering
\includegraphics[width=0.95\textwidth]{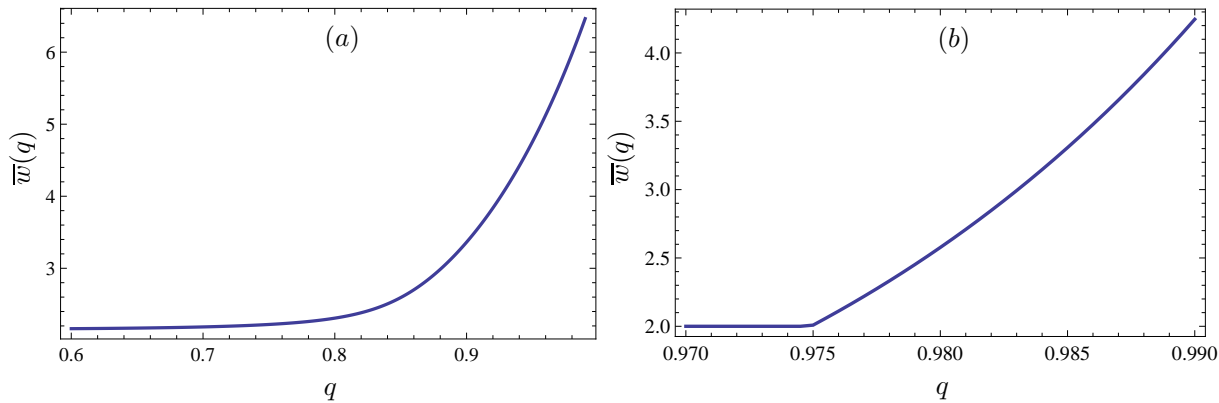}
\caption{The leading eigenvalue $\overline{w}$ depending on the fidelity $q$ for the two-valued fitness landscape with $w=2,\,s=5$ and the set $A$ as in Example \ref{ex:3:3}. $(a)$ $N=8$ (this case was also checked numerically, using the full matrix $\bs{QW}$), $(b)$ $N=50$.}\label{fig:1}
\end{figure}
\end{example}

%\section{On the case $w=0\$ of "mortally dangerous mutations"{}  and on the error threshold.}

\begin{example}[Lethal mutations]\label{ex:3:8}If $w=0$ the calculations can be significantly simplified (see \eqref{eq:27}).
Moreover, we can find not only the polynomial expression
\eqref{eq:27} for the leading eigenvalue provided $A$ is a
$G$-orbit, $G\leqslant{\rm Iso}(X)$, but we can find the
eigenvector $\bs p$ (the quasispecies distribution) as well.

On substituting $w=0$ into \eqref{eq:1} we obtain
\begin{equation}\label{eq:38}
{\bs W}=s {\bs E}_A=s \sum_{a\in A}{\bs E}_a.
\end{equation}
Here ${\bs E}_A$ is the diagonal matrix corresponding to the
projection on the orbit $A$, ${\bs E}_A^2={\bs E}_A$. The problem
\eqref{eq:2} can be transformed now as follows:
\begin{equation}\label{eq:39}
s{\bs Q} {\bs E}_A\bs p=\overline{w} \left({\bs E}_A\bs p+(\bs p-{\bs
E}_A\bs p)\right)\;\quad(=\overline{w}\,\bs p).
\end{equation}
Multiplying \eqref{eq:39} from the left by ${\bs E}_A$ and taking into
account that ${\bs E}_A$ is a projection matrix we find
\begin{equation}\label{eq:40}
{\bs E}_A{\bs Q}  {\bs E}_A\bs p=\frac{\overline{w}}{s}\, {\bs
E}_A\bs p.
\end{equation}
Note that if we omit zeroes in ${\bs E}_A\bs p$, we obtain the reduced
vector $\bs p_A$ introduced in Section \ref{sec:1}.

Direct calculations and Lemma \ref{l:2} show that if we take
\begin{equation}\label{eq:41}
{\bs
E}_A\bs p=\theta(0,\ldots,0,1,0,\dots,0,1,0,\ldots),\quad
\theta>0,
\end{equation}
where the ones stand only for the indices $a\in A$, we
get
\begin{equation}\label{eq:42}
{\bs E}_A{\bs Q}  {\bs E}_A\bs p=F_A(2q-1)\, {\bs
E}_A\bs p.
\end{equation}
Let us compare \eqref{eq:40} and \eqref{eq:42}. In view of \eqref{eq:27}  we conclude
that the vector $\bs p$ satisfying \eqref{eq:41} is a solution of the problem \eqref{eq:2}
provided $\overline{w}=sF_A(2q-1)$ (possibly not unique).

The equality \eqref{eq:39} implies, regardless of $\theta$, that
\begin{equation}\label{eq:43}
\bs p=\frac{s}{\overline{w}}{\bs Q} {\bs
E}_a\bs p=\frac{1}{F_A(2q-1)}\,{\bs Q} {\bs E}_a\bs p,
\end{equation}
where ${\bs E}_a\bs p$ has the form \eqref{eq:41}. The normalizing factor
$\theta$ should be chosen in such a way that \eqref{eq:3} holds. Thus, in
coordinates we have
$$
p_k=\frac{1}{|A|\cdot F_A(2q-1)}\sum\limits_{b\in
A}(1-q)^{H_{kb}}q^{N-H_{kb}},\quad k=0,\ldots,2^N-1,$$ or
\begin{equation}\label{eq:44}
p_k=\frac{1}{|A|}\frac{\sum\limits_{b\in
A}(1-q)^{H_{kb}}q^{N-H_{kb}}}{\sum\limits_{b\in
A}(1-q)^{H_{ab}}q^{N-H_{ab}}}\,,\quad k=0,\ldots,2^N-1,\quad a\in
A.
\end{equation}
The expressions \eqref{eq:44} imply that the distribution $\bs p$
is constant for any fixed $q$ on the $G$-orbits in the set of
indices $A$.

Using the discussed approach, for Example \ref{ex:3:1} and $w=0$ we obtain from \eqref{eq:27} that
\begin{equation}\label{eq:45}
\overline{w}=s \sum\limits_{k=0}^p {\binom{p}{k}}\binom{N-p}{k}(1-q)^{2k}q^{N-2k}.
\end{equation}
%In Example \ref{ex:3:1:1}
%\begin{equation}\label{eq:46}
%\overline{w}=\frac{s}{2^N} \sum_{d=0}^{N} {\binom{N}{d}}(2q-1)^{d}=sq^N.
%\end{equation}
%In Example \ref{ex:3:1:2}
%\begin{equation}\label{eq:47}
%\overline{w}=\frac{s}{2^N}\sum_{d=0}^{N}
%\frac{(N-2d)^2}{N}{\binom{N}{d}}(2q-1)^{d}=s(q^N+Nq^{N-1}(1-q)).
%\end{equation}
In Example \ref{ex:3:1:3} ($N=2n$) we find
\begin{equation}\label{eq:48}
\overline{w}= s \sum\limits_{k=0}^n {\binom{n}{k}}^2(1-q)^{2k}q^{2n-2k}\approx \frac{s}{\sqrt{\pi n}}
\frac{1}{\sqrt{1-(2q-1)^{2}}}\,,\quad\mbox{when}\;\;n\gg
1.
\end{equation}
In Example \ref{ex:3:2} we have
\begin{equation}\label{eq:49}
\overline{w}=\frac{s}{2^{N-1}}\sum_{d=0}^{ \lfloor N/2\rfloor} {\binom{N}{2d}}(2q-1)^{2d}=s(q^N+(1-q)^N).
\end{equation}
Other examples can be treated similarly.
%In Example \ref{ex:3:3} we obtain for $N=8$
%\begin{equation}\label{eq:50}
%\overline{w}(q)=s(q^8+3(1-q)^2q^6+4(1-q)^6q^2).
%\end{equation}
\end{example}

\section{The infinite sequence limit $N\to\infty$}\label{sec:5}
In Corollary \ref{corr:2:1} we
obtained the algebraic equation \eqref{eq:26} of degree at most $N+1$ for
the leading eigenvalue $\overline{w}=\overline{w}(q)$.
The advantage of having a polynomial equation of degree $N+1$ notwithstanding, solving \eqref{eq:26} becomes complicated as
$N\to\infty$.  Moreover, it is well known that at least for some fitness landscapes (including the classical single peaked fitness landscape) the phenomenon of the \textit{error threshold} is observed: there exists a critical mutation rate $q$, after which the quasispecies distribution $\bs p$ becomes uniform. This phenomenon is usually identified with a non-analytical behavior of the limiting eigenvalue $\overline{w}$ when $N\to\infty$, a general idea can be grasped from Fig. \ref{fig:1}b, where it is seen that there exists a corner point on the graph of the function $\overline{w}$.

In this section we propose several steps to rigorously define and analyze this kind of behavior in terms of sequences of orbits $A_n$ that determine our two-valued fitness landscapes. First, we find some bounds for the function
$\overline{w}$ provided $0.5\leq q\leq 1$. Next, we restrict our attention at the special class of sequences $(A_n)_{n=n_0}^\infty$, which we call \textit{admissible} and of the \textit{moderate growth} (here $n_0$ is a sufficiently large natural number). Finally, among all those admissible sequences of the moderate growth we identify the ones that demonstrate some kind of non-uniform convergence for the corresponding sequence of eigenvalues $(\overline{w}^{(n)})_{n=n_0}^\infty$.
\subsection{Lower and upper bounds on $\overline{w}(q)$}\label{sec:5:1}

First we note that for our purposes it is easier to deal with the series \eqref{eq:25}
rather than \eqref{eq:26}. We also make the following substitutions
\begin{equation}\label{eq:51}
w=us\;,\quad \overline{w}=\overline{u}s.
\end{equation}
Then \eqref{eq:25} turns into
\begin{equation}\label{eq:52}
\overline{u}=
\sum_{c=0}^{\infty}\left(\frac{u}{\overline{u}}\right)^c
F_A((2q-1)^{c+1}),
\end{equation}
where the polynomial $F_A(z)$,  defined in \eqref{eq:21}, can be
represented in the form \eqref{eq:24}.

From Example \ref{ex:3:8} we have that $sF_A(2q-1)=\overline{w}(q)$ is the leading eigenvalue if $w=0$.
It was proved in \cite{semenov2014} that $\overline{w}(q)$ increases on the
segment $0.5\leq q\leq 1$. Therefore, on this segment we have
the non-increasing sequence (for any fixed $q$)
\begin{equation}\label{eq:53}
F_A(2q-1)\geq F_A((2q-1)^2)\geq\dots\geq F_A((2q-1)^{c})\geq
F_A((2q-1)^{c+1})\geq\dots >0,
\end{equation}
since $F_A((2q-1)^{c})>0$ according to \eqref{eq:22}. Hence,
\begin{equation}\label{eq:54}
\overline{u}=
\sum_{c=0}^{\infty}\left(\frac{u}{\overline{u}}\right)^c
F_A((2q-1)^{c+1})\leq
F_A(2q-1)\sum_{c=0}^{\infty}\left(\frac{u}{\overline{u}}\right)^c=
\frac{F_A((2q-1))\,\overline{u}}{\overline{u}-u}\,.
\end{equation}
It follows that $ \overline{u}\leq u+ F_A(2q-1)$, or
$$
\overline{w}(q)\leq w+sF_A(2q-1)=:\overline{w}_{up,1}(q).
$$

A second upper bound can be obtained as follows:
\begin{align*}
\overline{u}&=
\sum_{c=0}^{\infty}\left(\frac{u}{\overline{u}}\right)^c
F_A((2q-1)^{c+1})=F_A(2q-1)+
\sum_{c=1}^{\infty}\left(\frac{u}{\overline{u}}\right)^c
F_A((2q-1)^{c+1})\\
&\leq F_A(2q-1)+F_A((2q-1)^2)\sum_{c=1}^{\infty}\left(\frac{u}{\overline{u}}\right)^c=
 F_A(2q-1)+\frac{uF_A((2q-1)^2)}{\overline{u}-u}\,.
\end{align*}
Solving the quadratic inequality we get
$$
\overline{u}\leq\frac{u+F_A(2q-1)+\sqrt{(u+F_A(2q-1))^2-4u(F_A(2q-1)-F_A((2q-1)^2))}}{2}\,,
$$
or,
\begin{equation}\label{eq:55}
\overline{w}(q)\leq\frac{\overline{w}_{up,1}(q)+
\sqrt{\overline{w}^2_{up,1}(q)-4w(sF_A(2q-1)-sF_A((2q-1)^2))}}{2}=:\overline{w}_{up,2}(q).
\end{equation}

\begin{remark} In view of \eqref{eq:53}
$\overline{w}_{up,2}(q)\leq\overline{w}_{up,1}(q)$.

%It should be mentioned that if Conjecture 2.1 is true a better upper bound can
%be presented.
\end{remark}

To obtain a lower bound on $\overline{w}(q)$ we use the approach applied in \cite{semenov2014}.
Since $\overline{w}(q)$ increases on the segment
$0.5\leq q\leq 1$ therefore
\begin{equation}\label{eq:56}
\overline{w}(q)\geq\overline{w}(0.5)=w+\frac{s|A|}{2^N}\,.
\end{equation}

By the definition of \eqref{eq:21}
\begin{align*}
F_A((2q-1)^{c+1})&=\sum_{b\in A}\left(\frac{1-(2q-1)^{c+1}}{2}\right)^{H_{ab}}
\left(\frac{1+(2q-1)^{c+1}}{2}\right)^{N-H_{ab}}\\
&\geq \left(\frac{1+(2q-1)^{c+1}}{2}\right)^{N}\geq
\left(\frac{1+(2q-1)}{2}\right)^{(c+1)N}=q^{(c+1)N}\,,
\end{align*} since
$a\in A$, $H_{aa}=0$ and the function $f(t)=t^{c+1}$ is convex
(downward) on the segment $[0,1]$.

Now from \eqref{eq:52}
\begin{equation}\label{eq:57}
\overline{u}\geq
\sum_{c=0}^{\infty}\left(\frac{u}{\overline{u}}\right)^c
q^{(c+1)N}=\frac{\overline{u}\,q^N}{\overline{u}-uq^N}\,,\quad\mbox{or}\;\;
\overline{u}\geq(u+1)q^N\,,\quad\mbox{or}\;\; \overline{w}\geq
(w+s)q^N.
\end{equation}
Combining \eqref{eq:56} and \eqref{eq:57} yields
\begin{equation}\label{eq:58}
\overline{w}(q)\geq \max\left(w+\frac{s|A|}{2^N},
(w+s)\,q^N\right)=:\overline{w}_{low}(q)\,.
\end{equation}
Thus we have proved
\begin{proposition}
For the leading eigenvalue $\overline{w}(q)$ of \eqref{eq:2} in the case of the two-valued fitness landscape we have
$$
\overline{w}_{low}(q)\leq w(q)\leq \overline{w}_{up,2}(q),\quad 0.5\leq q\leq 1,
$$
where $\overline{w}_{low}(q)$ is given by \eqref{eq:58}, and $\overline{w}_{up,2}(q)$ is given by \eqref{eq:55}.
\end{proposition}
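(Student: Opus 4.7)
My plan is to collect the four inequalities that have already been set up in the discussion preceding the statement into a single proof, treating the series representation \eqref{eq:52} as the main workhorse. Writing $u=w/s$ and $\overline u=\overline w/s$, the identity
\[
\overline u=\sum_{c=0}^{\infty}\left(\frac{u}{\overline u}\right)^c F_A\bigl((2q-1)^{c+1}\bigr)
\]
together with the positivity $F_A(z)>0$ for $-1<z\leq 1$ and the monotonicity \eqref{eq:53} of $F_A((2q-1)^c)$ in $c$ (valid on $0.5\le q\le 1$) gives me complete control over the tail of the series. Throughout I treat $q\in[0.5,1]$ as fixed.

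For the upper bound I would first replace every $F_A((2q-1)^{c+1})$ by $F_A(2q-1)$ and sum the resulting geometric series in $u/\overline u$; this yields the crude bound $\overline u\le u+F_A(2q-1)$, i.e.\ $\overline w\le \overline w_{up,1}(q)$. To sharpen this to $\overline w_{up,2}(q)$, I would keep the $c=0$ term of \eqref{eq:52} exact and bound all later terms by $F_A((2q-1)^2)$, producing a quadratic inequality in $\overline u$ whose relevant root is exactly \eqref{eq:55}. This refinement is strict because $F_A((2q-1)^2)\le F_A(2q-1)$, which is why $\overline w_{up,2}\le\overline w_{up,1}$.

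For the lower bound I would use two independent ideas. First, monotonicity of $\overline w(q)$ on $[0.5,1]$ (established in \cite{semenov2014}) together with the identity $\overline w(0.5)=w+s|A|/2^N$, which one reads off from \eqref{eq:52} upon observing that $(2q-1)=0$ kills every $c\ge 1$ term and $F_A(0)=|A|/2^N$ by \eqref{eq:22}. Second, a convexity argument: since $t\mapsto t^{c+1}$ is convex on $[0,1]$, one has
\[
F_A\bigl((2q-1)^{c+1}\bigr)\ge\Bigl(\tfrac{1+(2q-1)^{c+1}}{2}\Bigr)^{\!N}\ge q^{(c+1)N},
\]
where the first inequality just retains the $b=a$ summand in \eqref{eq:21}. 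Substituting into \eqref{eq:52} and summing the resulting geometric series gives $\overline u\ge(u+1)q^N$, i.e.\ $\overline w\ge(w+s)q^N$. Taking the maximum of the two lower bounds produces \eqref{eq:58}.

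The proof is essentially routine once the series \eqref{eq:52} is in hand; the only non-trivial ingredient imported from outside is the monotonicity of $\overline w(q)$ on $[0.5,1]$, and the most delicate step in what I would write is the quadratic-inequality manipulation leading to \eqref{eq:55}, where one must verify that the relevant root of the quadratic is indeed the one with the plus sign and that all quantities under the square root are non-negative (which follows from $F_A((2q-1)^2)\le F_A(2q-1)$, hence $\overline w_{up,1}^2-4w(sF_A(2q-1)-sF_A((2q-1)^2))\ge(\overline w_{up,1}-2w)^2\ge 0$ when $w$ and $s$ are suitably compared, and otherwise by direct inspection). Assembling the two upper bounds and the two lower bounds and selecting the sharpest of each yields the proposition.
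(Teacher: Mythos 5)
Your proposal is correct and follows essentially the same route as the paper: the crude bound $\overline u\le u+F_A(2q-1)$ from replacing all coefficients by $F_A(2q-1)$, the refined quadratic bound from keeping the $c=0$ term exact, the lower bound $\overline w(0.5)=w+s|A|/2^N$ via monotonicity, and the convexity-based bound $\overline w\ge(w+s)q^N$, combined by taking the maximum. Your extra care about the sign of the discriminant (which reduces to $(w-sF_A(2q-1))^2+4wsF_A((2q-1)^2)\ge 0$) is a small addition the paper leaves implicit.
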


A numerical example with the obtained bounds is given in Figure \ref{fig:2}.
\begin{figure}[!ht]
\centering
\includegraphics[width=0.95\textwidth]{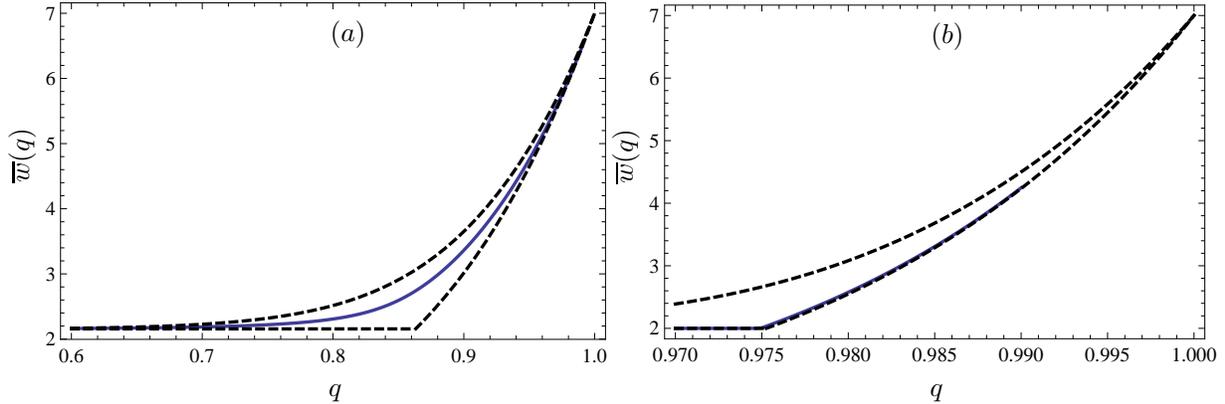}
\caption{The lower and upper bounds on the leading eigenvalue $\overline{w}$ in the case of the quaternion landscape (Example \ref{ex:3:3}), $(a)$ $N=8$, $(b)$ $N=50$.}\label{fig:2}
\end{figure}
\subsection{Admissible sequences of orbits}
To make a progress in analyzing the limit behavior of our system when $N\to\infty$ we introduce in this subsection two definitions in terms of which this behavior will be described.

From the previous subsection, we see that the curve $\overline{w}=\overline{w}_{low}(q)$ has a corner point on $[0.5,1]$, which we denote $q_\ast$:
\begin{equation}\label{eq:59}
q_*=q_*^{(N)}=\sqrt[N]{\frac{w+s|A|2^{-N}}{w+s}}=\sqrt[N]{\frac{u+|A|2^{-N}}{u+1}}=\sqrt[N]{\frac{\overline{w}(0.5)}{\overline{w}(1)}}\,.
\end{equation}
The function $\overline{w}_{low}(q)$ is constant for $0.5\leq
q\leq q_*$ and increases for $q_*<q\leq 1$ (see Figure \ref{fig:2}). It
was shown in \cite{semenov2014} that for the single peak landscapes ($|A|=1$) the lower bound $\overline{w}_{low}(q)$ provides a
close approximation for $\overline{w}(q)$ for sufficiently large $N$. Our goal is to generalize these results on the case of the two-valued fitness landscapes.

From this point on we shall use $n$ as the index, which tends to infinity. In most cases it actually coincides with the sequence length $N$, albeit not always, hence the choice of notation.

One of the main underlying questions concerning the quasispecies model and especially its infinite sequence limit, is how actually the fitness landscape is scaled when $N\to\infty$. In most works in literature a continuous limit is used, which basically narrows the pull of the allowed fitness landscapes to the ones which have, given this continuous limit, a limit fitness function, which must be also continuous (e.g., \cite{Baake2007,saakian2006ese}). Here we take a different approach by specifying sequences of orbits $(A_n)_{n=n_0}^\infty$, on which the fitness landscape is defined. The sequences that are of interest to us will be called \textit{admissible}.

Suppose that for any $n\geq n_0$  a sequence of $G_n$-orbits
$A_n\in X_n$ is given, where $G_n\leqslant{\rm Iso}\, (X_n)$. When
$n\to \infty$ the group ${\rm Iso}\, (X_n)$ will be always viewed
as a subgroup of ${\rm Iso}\, (X_{n+1})$. More precisely, let
$g\in {\rm Iso}\, (X_n)$ be a fixed isometry and let $a\in
X_{n+1}$ be represented as $a=a_{n}+\alpha_{n} 2^{n}$ where
$a_{n},\alpha_{n}\in X_n$. Then $g$, viewed as an element of ${\rm
Iso}\, (X_{n+1})$, maps $a\in X_{n+1}$ to
$g(a):=g(a_{n})+g(\alpha_{n}) 2^{n}$. In other words, ${\rm Iso}\,
(X_n)$ as a subgroup of ${\rm Iso}\, (X_{n+1})$ is acting on the
``upper'' hyperface $V_n\times\{1\}$ of the cube
$V_{n+1}=\{0,1\}^{n+1}=V_n\times\{0,1\}$ in the same way as it
acts on the ``lower'' hyperface $V_n\times\{0\}\cong V_n$. Thus,
we have the ascending chain
$$
{\rm Iso}\, (X_{n_0})<\ldots< {\rm Iso}\, (X_{n})<{\rm Iso}\,
(X_{n+1})<\ldots
$$
and the corresponding ascending chain of symmetric subgroups
$$
S_{n_0}<\ldots< S_{n}<S_{n+1}<\ldots\,.
$$

For a fixed $w\geq 0$ consider a sequence of landscapes $(\bs w^{(n)})_{n\geq n_0}$ such that
$w_k^{(n)}=w+s$ if $k\in A_n$ and $w_k^{(n)}=w$ otherwise. The
sequence $(A_n)_{n=n_0}^\infty$ and the parameters $w$, $s$, and $u=w/s$ define the
corresponding family of leading eigenvalues
$\overline{w}^{(n)}=\overline{w}^{(n)}(q)$, which are solutions of
\eqref{eq:2}, and the family $\overline{u}^{(n)}=\overline{u}^{(n)}(q)$,
such that $\overline{u}^{(n)}=\overline{w}^{(n)}/s$.

In \cite{semenov2014}  it was proved that for any $n\geq n_0$  the
function $\overline{u}^{(n)}(q)$ has the following properties:
\begin{enumerate}
\item The function $\overline{u}^{(n)}(q)$ increases
on the segment $[0.5,1]$ and is convex (downward) there.

\item $\overline{u}^{(n)}(0.5)=u+\displaystyle{\frac{|A_n|}{2^n}},\quad \overline{u}^{(n)}(1)=u+1$.
\end{enumerate}

\begin{definition}\label{d:5:1} A sequence
$(A_n)_{n=n_0}^\infty$ of $G_n$-orbits is called admissible if the
corresponding sequence of values of polynomials $F_{A_n}(2q-1)$ in \eqref{eq:23} is non-increasing for each $q\in [0.5,1]$:
\begin{equation}\label{eq:61}
F_{A_n}(2q-1)=\sum_{d=0}^n f^{(n)}_{d}\,(1-q)^dq^{n-d}\geq
\sum_{d=0}^{n+1}
f^{(n+1)}_{d}\,(1-q)^dq^{n+1-d}=F_{A_{n+1}}(2q-1)\,,\;n\geq
n_0.
\end{equation}
\end{definition}

\begin{definition} A sequence
$(A_n)_{n=n_0}^\infty$ of $G_n$-orbits is called a sequence of the moderate
growth if
\begin{equation}\label{eq:62}
\lim_{n\to\infty}\frac{|A_n|}{2^n}=0\,,\quad\mbox{or}\quad |A_n|=o(2^n),\quad n\to\infty\,.
\end{equation}
\end{definition}

To show that our definitions make sense we state
\begin{proposition}\label{pr:ad}In all the examples of Section \ref{sec:3} the corresponding sequences of orbits are admissible and of the moderate growth.
\end{proposition}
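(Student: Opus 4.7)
The plan is to verify the two conditions example by example, exploiting the probabilistic reading of the polynomial $F_A$: namely, $F_A(2q-1)=\sum_{b\in A}(1-q)^{H_{ab}}q^{N-H_{ab}}$ is the probability that a random $q$-fidelity mutation of any fixed $a\in A$ lies back in $A$. Moderate growth is immediate in every case: $|A_n|$ is bounded in Examples \ref{ex:3:1:1}, \ref{ex:3:2}, \ref{ex:3:3} (sizes $1$, $2$, $8$), grows linearly in Example \ref{ex:3:1:2}, and equals $\binom{2n}{n}\sim 2^{2n}/\sqrt{\pi n}$ (Stirling) in Example \ref{ex:3:1:3} whose ambient dimension is $N=2n$, so $|A_n|/2^N\to 0$ in all cases. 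Example \ref{ex:3:8} reuses the same orbits with $w=0$, so the same analysis transfers.

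For admissibility the first move is a structural shortcut: whenever an orbit $A$ actually lives inside a fixed $X_m$ and is embedded into $X_n$ ($n>m$) by appending trailing zero bits, pairwise Hamming distances within $A$ are preserved and $F_A$ computed in $X_{n+1}$ equals $q$ times its value in $X_n$, so condition \eqref{eq:61} is immediate from $q\le 1$. This disposes of Example \ref{ex:3:1:1} (with $A_n=\{0^n\}$) and Example \ref{ex:3:3} (the quaternion orbit lives in $X_8$ and is extended by trailing zeros). For Example \ref{ex:3:2} the antipodal pair $A_n=\{0^n,1^n\}$ gives $F_{A_n}(2q-1)=q^n+(1-q)^n$ and
\begin{equation*}
F_{A_n}(2q-1)-F_{A_{n+1}}(2q-1)=q(1-q)\bigl[q^{n-1}+(1-q)^{n-1}\bigr]\ge 0.
\end{equation*}
For Example \ref{ex:3:1:2} with $F_{A_1^{(n)}}(2q-1)=q^n+(n-1)(1-q)^2q^{n-2}$ the analogous difference factors as $q^{n-2}(1-q)\bigl[(n+1)q^2-(2n-1)q+(n-1)\bigr]$; the inner quadratic has discriminant $5-4n$, negative for $n\ge 2$, so the full expression is non-negative on $[0.5,1]$ once $n\ge n_0=2$.

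The main obstacle is Example \ref{ex:3:1:3}, where both the dimension $N=2n$ and the orbit (weight-$n$ strings) change simultaneously, so neither the structural shortcut nor a low-degree polynomial factorization applies directly. The plan here is to exploit the probabilistic picture with $a$ chosen to have its first $n$ coordinates equal to one: the final Hamming weight of a random mutation is then $B_1+B_2$ with $B_1\sim\mathrm{Bin}(n,q)$ (retained ones) and $B_2\sim\mathrm{Bin}(n,1-q)$ (flipped zeros), independent, giving $F_{A_n^{(2n)}}(2q-1)=\Pr{B_1+B_2=n}$. Fourier inversion on $\Z$ then yields
\begin{equation*}
\Pr{B_1+B_2=n}=\frac{1}{2\pi}\int_{-\pi}^{\pi}g(\theta)^n\D\theta,\qquad g(\theta)=q^2+(1-q)^2+2q(1-q)\cos\theta,
\end{equation*}
where $0\le g(\theta)\le 1$ with equality only at $\theta=0$. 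Since $g(\theta)^{n+1}\le g(\theta)^n$ pointwise, monotone integration delivers $F_{A_{n+1}^{(2(n+1))}}(2q-1)\le F_{A_n^{(2n)}}(2q-1)$, which is precisely \eqref{eq:61} for this sequence and completes the proof.
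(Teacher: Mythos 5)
Your verifications for the single peaked, quaternion, antipodal, and weight-one orbits are correct, and your treatment of the middle layer $A_{n,2n}$ is a genuinely different and more elegant route than the paper's: the paper derives admissibility of that sequence as a corollary of a recursion between the polynomials $F_{n,p}$ and auxiliary polynomials $G_{n,p}$, whereas your identification of $F_{A_n}(2q-1)$ with $\Pr{B_1+B_2=n}$ followed by Fourier inversion reduces everything to the pointwise inequality $g(\theta)^{n+1}\le g(\theta)^n$ with $0\le g(\theta)\le 1$ — a one-line argument where the paper needs a page of algebra. The ``structural shortcut'' ($F_{A_{n+1}}(2q-1)=qF_{A_n}(2q-1)$ for a fixed orbit embedded by appending zero bits) is exactly the paper's first lemma, and your antipodal and weight-one computations check out.

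However, there is a genuine gap: you never address Example \ref{ex:3:1}, the general permutation invariant landscape $A_{p,n}=\{a\in X_n\,|\,H_a=p\}$ for an arbitrary \emph{fixed} $p$ with $n\ge 2p$. You cover only $p=0$ (the structural shortcut), $p=1$ (the explicit quadratic with discriminant $5-4n$), and the diagonal case $p=n$, $N=2n$. The case of general fixed $p\ge 2$ is precisely where the paper expends most of its effort (Lemma \ref{l:5:3}, a double induction on $p$ and $k$ built on the recursions $F_{n+1,p}=qF_{n,p}+G_{n,p}$ and $G_{n+1,p}=(1-q)^2F_{n,p-1}+qG_{n,p-1}$), and your explicit-factorization method for $p=1$ does not visibly scale. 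Your probabilistic framework could in principle close this gap: for fixed $p$ one has $F_{A_{p,n}}(2q-1)=\Pr{S_n=p}$ with $S_n=B_1+B_2$, $B_1\sim\mathrm{Bin}(p,q)$, $B_2\sim\mathrm{Bin}(n-p,1-q)$, and since $S_{n+1}=S_n+X$ with $X\sim\mathrm{Bern}(1-q)$ independent, the required inequality \eqref{eq:61} is equivalent to $\Pr{S_n=p}\ge\Pr{S_n=p-1}$; this would follow from log-concavity of the convolution of two binomials together with the fact that $\E{S_n}=pq+(n-p)(1-q)\ge p$ when $q\ge 0.5$ and $n\ge 2p$, but that argument (in particular locating the mode relative to $p$) still has to be written out. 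As it stands, one of the examples claimed by the proposition is unproved.
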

\begin{proof}See Appendix \ref{ap:1}.
\end{proof}

Consider a sequence $(A_n)_{n=n_0}^\infty$ of $G_n$-orbits. Our next aim is to
investigate what happens with the corresponding family
$(\overline{u}^{(n)})_{n=n_0}^\infty$ as $n\to\infty$.

\begin{proposition}\label{pr:5:1} If $(A_n)_{n=n_0}^\infty$ is an
admissible sequence of $G_n$-orbits then for each fixed
$q\in[0.5,1]$ the sequence $(\overline{u}^{(n)}(q))_{n=n_0}^\infty$ is a
non-increasing sequence as $n\to\infty$. If, additionally,
$(A_n)_{n=n_0}^\infty$ is a sequence of the moderate growth then
$\lim\limits_{n\to\infty}\overline{u}^{(n)}(0.5)=u$ and
$\lim\limits_{n\to\infty}\overline{u}^{(n)}(1)=u+1$.
\end{proposition}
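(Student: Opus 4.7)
My plan is to identify $\overline{u}^{(n)}(q)$ as the unique fixed point, on $v \in (u, \infty)$, of the map
$$\Phi_n(v, q) := \sum_{c=0}^{\infty} \left(\frac{u}{v}\right)^c F_{A_n}\bigl((2q-1)^{c+1}\bigr),$$
which is exactly the right-hand side of \eqref{eq:52}. First I would check that $\Phi_n(\cdot, q)$ is continuous and, for $u>0$, strictly decreasing on $(u,\infty)$, with $\Phi_n(v,q) \to +\infty$ as $v \to u^+$ and $\Phi_n(v,q) \to F_{A_n}(2q-1) < \infty$ as $v \to \infty$. These properties force the equation $v = \Phi_n(v,q)$ to have a unique solution on $(u,\infty)$, which must therefore coincide with $\overline{u}^{(n)}(q)$. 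The degenerate case $u = 0$ reduces \eqref{eq:52} to $\overline{u}^{(n)}(q) = F_{A_n}(2q-1)$, and admissibility then gives the monotonicity in $n$ directly.

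For the main non-increasing claim, the key step is a fixed-point comparison. Admissibility \eqref{eq:61} is stated for $q \in [0.5,1]$, but for such $q$ and any $c \geq 0$ one has $(2q-1)^{c+1} \in [0,1]$, which is again of the form $2q'-1$ for some $q' \in [0.5,1]$. Hence admissibility promotes to $F_{A_n}\bigl((2q-1)^{c+1}\bigr) \geq F_{A_{n+1}}\bigl((2q-1)^{c+1}\bigr)$ for every $c \geq 0$ and every $q \in [0.5,1]$, giving the pointwise inequality $\Phi_n(v,q) \geq \Phi_{n+1}(v,q)$ on $(u,\infty)$. Evaluating at $v = \overline{u}^{(n)}(q)$ produces $\Phi_{n+1}\bigl(\overline{u}^{(n)}(q), q\bigr) \leq \overline{u}^{(n)}(q)$, and since $v \mapsto v - \Phi_{n+1}(v, q)$ is strictly increasing on $(u,\infty)$ with unique zero $\overline{u}^{(n+1)}(q)$, I conclude $\overline{u}^{(n+1)}(q) \leq \overline{u}^{(n)}(q)$.

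The endpoint limits will follow immediately from the two boundary identities listed just before Definition \ref{d:5:1}. Since $\overline{u}^{(n)}(1) = u+1$ does not depend on $n$, the limit at $q = 1$ is trivially $u+1$; and since $\overline{u}^{(n)}(0.5) = u + |A_n|/2^n$, the moderate-growth hypothesis \eqref{eq:62} yields $\lim_n \overline{u}^{(n)}(0.5) = u$. The only mildly delicate step I anticipate is the verification that $\Phi_n(\cdot, q)$ is strictly monotone in $v$ (so that the fixed-point comparison is unambiguous and the unique-fixed-point picture is legitimate); but this is immediate from the strict positivity of every coefficient $u^c F_{A_n}((2q-1)^{c+1})$ guaranteed by \eqref{eq:22}, once $u>0$.
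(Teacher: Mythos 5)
Your proof is correct and follows essentially the same route as the paper: the paper rewrites \eqref{eq:52} as $u=\sum_{m\ge 1}(u/\overline{u}^{(n)})^m F_{A_n}((2q-1)^m)$ and argues that since admissibility makes the coefficients $F_{A_n}((2q-1)^m)$ non-increasing in $n$ while the left side stays constant, the fixed point $\overline{u}^{(n)}(q)$ must be non-increasing --- the same monotone fixed-point comparison you carry out with $\Phi_n$, and the endpoint limits are read off from Property 2 exactly as you do. Your write-up is somewhat more careful than the paper's (explicit uniqueness of the fixed point, the separate $u=0$ case, and the observation that $(2q-1)^{c+1}=2q'-1$ for some $q'\in[0.5,1]$), but these are refinements of the same argument rather than a different one.
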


\begin{proof} The second assertion follows directly from Property 2
of $\overline{u}^{(n)}(q)$ above. Let us proof the first one. The
equation \eqref{eq:52} for $u\ne 0$ can be rewritten in the form
\begin{equation}\label{eq:69}
u=\frac{u}{\overline{u}^{(n)}(q)}\overline{u}^{(n)}(q)=
\sum_{c=0}^{\infty}\left(\frac{u}{\overline{u}^{(n)}(q)}\right)^{c+1}
F_{A_n}((2q-1)^{c+1})=\sum_{m=1}^{\infty}\left(\frac{u}{\overline{u}^{(n)}(q)}\right)^{m}
F_{A_n}((2q-1)^m)\,.
\end{equation}

It follows from Definition \ref{d:5:1} that at each fixed point
$q\in[0.5,1]$ the sequence of positive coefficients
$(F_{A_n}((2q-1)^{c+1}))_{n\geq n_0}$ is non-increasing for any $c+1\in
\N$. But the left-hand side $u$ of \eqref{eq:69} is constant. This
implies that $(\overline{u}^{(n)}(q))_{n\geq n_0}$ must be a non-increasing
sequence for each $q\in[0.5,1]$.
\end{proof}

Hence we can conclude that the curve
$\overline{u}=\overline{u}^{(n+1)}(q)$ always passes {\it under}
the curve $\overline{u}=\overline{u}^{(n)}(q)$ in the rectangle
$\{0.5\leq q\leq 1\,,\;u\leq\overline{u}\leq u+1\}$ if $(A_n)_{n=n_0}^\infty$
is an admissible sequence of $G_n$-orbits, see Figure \ref{fig:3}.

Proposition \ref{pr:5:1} and Property 1 of $\overline{u}^{(n)}(q)$ yield
\begin{corollary} If $(A_n)_{n=n_0}^\infty$ is an admissible sequence
of $G_n$-orbits of the moderate
growth then for any fixed $\varepsilon\in(0,1]$ there exists
$N_0\in \N$ such that for any $n\geq N_0$ the curve
$\overline{u}=\overline{u}^{(n)}(q)$ intersects the line
$\overline{u}=u+\varepsilon$ at a unique point
$q^{(n)}(\varepsilon,u)\in(0.5,1]$.
\end{corollary}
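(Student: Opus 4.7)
The plan is to combine the intermediate value theorem with strict monotonicity of $\overline{u}^{(n)}(q)$. The setup already provides everything needed: on the segment $[0.5,1]$ the function $\overline{u}^{(n)}$ is continuous and strictly increasing (Property 1), with explicitly known endpoints $\overline{u}^{(n)}(0.5)=u+|A_n|/2^n$ and $\overline{u}^{(n)}(1)=u+1$ (Property 2). So the task reduces to verifying that the horizontal level $\overline{u}=u+\varepsilon$ lies strictly between these endpoints for all sufficiently large $n$.

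First I would fix $\varepsilon\in(0,1]$ and use the moderate growth assumption, restated via Proposition \ref{pr:5:1}, namely $\overline{u}^{(n)}(0.5)=u+|A_n|/2^n\to u$. This gives an $N_0\in\N$ such that $|A_n|/2^n<\varepsilon$ for every $n\geq N_0$, hence $\overline{u}^{(n)}(0.5)<u+\varepsilon$. Since trivially $\overline{u}^{(n)}(1)=u+1\geq u+\varepsilon$, the value $u+\varepsilon$ is caught strictly between the two endpoint values on $[0.5,1]$.

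Next I would apply the intermediate value theorem to the continuous function $\overline{u}^{(n)}$ to produce at least one $q^{(n)}(\varepsilon,u)\in(0.5,1]$ with $\overline{u}^{(n)}(q^{(n)}(\varepsilon,u))=u+\varepsilon$; uniqueness is then immediate from strict monotonicity on $[0.5,1]$, which is Property 1 cited in the excerpt. The endpoint $0.5$ is excluded because $\overline{u}^{(n)}(0.5)<u+\varepsilon$ by our choice of $N_0$.

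There is no real obstacle here: the statement is a routine consequence of the monotonicity/endpoint data together with the moderate growth hypothesis. The only small technical point worth noting is continuity of $\overline{u}^{(n)}$ in $q$, which is automatic because $\overline{u}^{(n)}(q)$ is the dominant (Perron) eigenvalue of $\bs{QW}$, whose entries depend polynomially on $q$; continuity of a simple dominant eigenvalue in matrix entries is standard, and one could alternatively invoke the implicit function theorem applied to the algebraic equation \eqref{eq:26}.
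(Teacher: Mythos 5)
Your proof is correct and follows essentially the same route the paper intends: the authors derive the corollary directly from Proposition \ref{pr:5:1} together with Property 1 (monotonicity/convexity) and Property 2 (endpoint values), which is exactly the intermediate-value-plus-monotonicity argument you give, with the moderate-growth hypothesis supplying $\overline{u}^{(n)}(0.5)\to u$ so that the level $u+\varepsilon$ eventually separates the endpoints. Your remark on continuity of the Perron eigenvalue is a reasonable extra precaution but not a departure from the paper's argument.
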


Note that by virtue of \eqref{eq:52}, \eqref{eq:69}, and \eqref{eq:26}
the value $q^{(n)}(\varepsilon,u)$ from the previous corollary can be found from one of the following
equations
\begin{equation}\label{eq:70}
u+\varepsilon=
\sum_{c=0}^{\infty}\left(\frac{u}{u+\varepsilon}\right)^{c}
F_{A_n}((2q-1)^{c+1})\,,
\end{equation}
%or
%\begin{equation}\label{eq:71}
%u= \sum_{m=1}^{\infty}\left(\frac{u}{u+\varepsilon}\right)^{m}
%F_{A_n}((2q-1)^{m}),\,\quad u\ne 0\,,
%\end{equation}
or,
\begin{equation}\label{eq:728}
\sum_{d=0}^N \frac{h_d(2q-1)^d}{u+\varepsilon-u(2q-1)^d}=1.
\end{equation}

Another almost immediate result is given in the following
\begin{proposition} If $(A_n)_{n\geq N_0}$ is an
admissible sequence of $G_n$-orbits of the moderate growth then for
fixed $(\varepsilon,u)$ the sequence
$(q^{(n)}(\varepsilon,u))_{n\geq N_0}$ is non-decreasing as
$n\to\infty$ and the inequality
\begin{equation}\label{eq:72}
q^{(n)}(\varepsilon,0)\leq q^{(n)}(\varepsilon,u)\leq
\sqrt[n]{\frac{u+\varepsilon}{u+1}}\leq
1-\frac{1-\varepsilon}{n(u+1)}\,
\end{equation}
holds.
\end{proposition}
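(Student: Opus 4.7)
The inequality chain has four distinct links, so the plan is to prove each separately. The monotonicity of $(q^{(n)}(\varepsilon,u))_{n\geq N_0}$ in $n$ and the leftmost inequality will both follow from the properties of $\overline{u}^{(n)}$ already established (admissibility and the series representation \eqref{eq:69}); the middle inequality will come from the elementary lower bound \eqref{eq:57}; and the rightmost inequality is a one-line application of Bernoulli's inequality.

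To prove that $q^{(n)}(\varepsilon,u)\leq q^{(n+1)}(\varepsilon,u)$, I would combine Proposition \ref{pr:5:1} with Property 1 of $\overline{u}^{(n)}(q)$: since $\overline{u}^{(n+1)}(q^{(n)})\leq \overline{u}^{(n)}(q^{(n)})=u+\varepsilon=\overline{u}^{(n+1)}(q^{(n+1)})$, and $\overline{u}^{(n+1)}(\cdot)$ is strictly increasing on $[0.5,1]$, the conclusion is immediate. For the inequality $q^{(n)}(\varepsilon,0)\leq q^{(n)}(\varepsilon,u)$ I would apply \eqref{eq:70} at $q=q^{(n)}(\varepsilon,u)$ and use the crucial monotonicity \eqref{eq:53}, namely $F_{A_n}((2q-1)^{c+1})\leq F_{A_n}(2q-1)$ for every $c\geq 0$, which yields the single-term upper bound
$$u+\varepsilon \;\leq\; F_{A_n}(2q-1)\sum_{c=0}^\infty\left(\frac{u}{u+\varepsilon}\right)^c \;=\;F_{A_n}(2q-1)\,\frac{u+\varepsilon}{\varepsilon}.$$
Hence $F_{A_n}(2q^{(n)}(\varepsilon,u)-1)\geq \varepsilon$, while the $u=0$ specialization of \eqref{eq:70} reads exactly $F_{A_n}(2q^{(n)}(\varepsilon,0)-1)=\varepsilon$, so strict monotonicity of $F_{A_n}$ on $[0,1]$ (which is the content of \cite{semenov2014} applied to the lethal-mutations formula $\overline{w}(q)=sF_A(2q-1)$) delivers the desired comparison.

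For the middle inequality I would invoke the lower bound \eqref{eq:57}, i.e.\ $\overline{u}^{(n)}(q)\geq(u+1)q^n$ on $[0.5,1]$. Evaluating at $q=q^{(n)}(\varepsilon,u)$ gives $(u+1)(q^{(n)})^n\leq u+\varepsilon$, from which $q^{(n)}\leq\sqrt[n]{(u+\varepsilon)/(u+1)}$ follows by taking $n$-th roots. Finally, setting $x:=1-\sqrt[n]{(u+\varepsilon)/(u+1)}$, so that $(1-x)^n=(u+\varepsilon)/(u+1)$, Bernoulli's inequality $(1-x)^n\geq 1-nx$ rearranges to
$$nx\;\geq\; 1-\frac{u+\varepsilon}{u+1}\;=\;\frac{1-\varepsilon}{u+1},$$
which is the rightmost claim.

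I do not anticipate a serious obstacle: the entire proof decomposes into four short steps, each one a direct application of a tool already assembled in Section \ref{sec:5}. The only mildly delicate point is the comparison step, where one must carefully extract the correct factor of $(u+\varepsilon)/\varepsilon$ from the geometric series and justify strict monotonicity of $F_{A_n}$ on $[0,1]$; once those are in hand the remaining algebra is routine.
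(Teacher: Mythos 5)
Your proof is correct and follows essentially the same route as the paper: the left inequality comes from the upper bound $\overline{u}\leq u+F_{A_n}(2q-1)$ (your geometric-series computation is exactly how that bound was derived in Section \ref{sec:5:1}), the middle one from the lower bound \eqref{eq:57}, and the rightmost from Bernoulli's inequality, which is precisely the paper's tangent-line-at-$q=1$ argument for the convex function $(u+1)q^n$. The only addition is that you spell out the monotonicity of $q^{(n)}(\varepsilon,u)$ in $n$ via Proposition \ref{pr:5:1} and the monotonicity of $\overline{u}^{(n+1)}$ in $q$, a step the paper leaves implicit; that argument is sound.
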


\begin{proof} The upper bound (see Section \ref{sec:5:1})
$\overline{u}=u^{(n)}_{up,1}(q)= u+ F_{A_n}(2q-1)$ gives rise to
the lower bound in \eqref{eq:72} since the equation $u+\varepsilon=u+
F_{A_n}(2q-1)$ is equivalent to \eqref{eq:70} when $u=0$. The lower bound
\eqref{eq:57} $\overline{u}=(u+1)q^n$ provides the upper bound in \eqref{eq:72}.

Since $\overline{u}=(u+1)q^n$ is convex downward if $q\in[0.5,1]$
and $\overline{u}=u+1-n(u+1)(1-q)$ is the equation of the tangent
at $q=1$ to the curve $\overline{u}=(u+1)q^n$ then we get the last
inequality in \eqref{eq:72}. Note that the curve
$\overline{u}=\overline{u}^{(n)}(q)$ has the same tangent at $q=1$
(see, for instance, \cite{semenov2014}).
\end{proof}
The obtained results are illustrated in Figure~\ref{fig:3}.
\begin{figure}[!ht]
\centering
\includegraphics[width=0.5\textwidth]{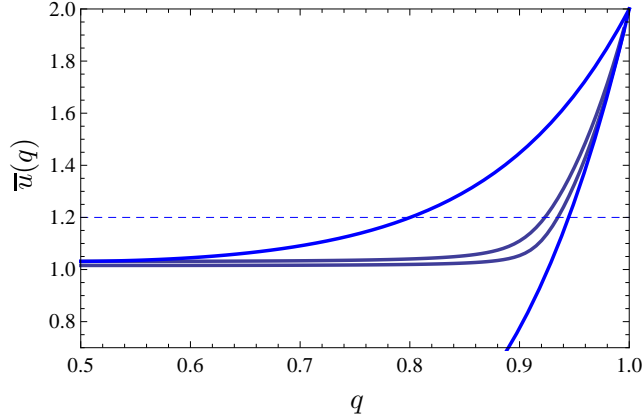}
\caption{The curves in the coordinates $q,\overline{u}$ defined by (from top to bottom): $\overline{u}=u+F_{A_n}(2q-1),\,\overline{u}=\overline{u}^n(q),\,\overline{u}=\overline{u}^{n+1}(q),\,\overline{u}=(u+1)q^{n+1}$. The points of intersections of these curves with the dotted line $\overline{u}=u+\varepsilon$ define the values $q^{(n)}(\varepsilon,0),\,q^{(n)}(\varepsilon,u),\,q^{(n+1)}(\varepsilon,u),\,1-\frac{1-\varepsilon}{(n+1)(u+1)}$ respectively, see also \eqref{eq:72}.}\label{fig:3}
\end{figure}

\subsection{Threshold-like behavior}
In this subsection we define rigorously what we call the threshold-like behavior and provide sufficient conditions for the sequences of admissible orbits to possess this kind of behavior. The main conclusion, which can be stated in a form of a conjecture, emphasizes the role of geometry for the threshold-like behavior to occur. Loosely speaking, if the admissible sequence of orbits ``looks like a point'' asymptotically, i.e., basically indistinguishable from the single peaked landscape in the infinite length limit, then the threshold-like behavior is observed. We conjecture, as numerical experiments show, that the opposite is true: If asymptotically the admissible sequence of orbits is different from a point, then there exits no threshold-like behavior.

Let us introduce the notation
\begin{equation}\label{eq:73}
q^{(n)}_*(\varepsilon,u)=\sqrt[n]{\frac{u+\varepsilon}{u+1}}\,,
\end{equation}
from where
\begin{equation}\label{eq:74}
\lim_{n\to\infty}n(1-q^{(n)}_*(\varepsilon,u))=\log\frac{u+1}{u+\varepsilon}\,.
\end{equation}
It follows that for a fixed $u>0$
\begin{equation}\label{eq:75}
\lim_{\varepsilon\downarrow 0}\lim_{n\to\infty}n(1-q^{(n)}_*(\varepsilon,u))=\log\frac{u+1}{u}\,.
\end{equation}

It is known (e.g., \cite{semenov2014}) that for the single peaked landscape the
curve $\overline{u}=\overline{u}^{(n)}(q)$ passes very close to the lower bound $\overline{u}=\max\{u,(u+1)q^n\}$
in such a way that
$$q^{(n)}(\varepsilon,u)=q^{(n)}_*(\varepsilon,u)-o\left(\frac{1}{n}\right)=
\sqrt[n]{\frac{u+\varepsilon}{u+1}}-o\left(\frac{1}{n}\right)$$ as
$n\to\infty$ (from \eqref{eq:72} we have the inequality
$q^{(n)}(\varepsilon,u)\leq q^{(n)}_*(\varepsilon,u)$).

%
%For an admissible sequence of orbits $(A_n)_{n\geq n_0}$ consider an
%increasing (in view of Proposition \ref{pr:5:1}) chain of compact convex
%sets
%$$C_n=\{(q,\overline{u})\,|\,0.5\leq q\leq 1,\; \overline{u}^{(n)}(q)\leq \overline{u}\leq u+1\}\,$$
% and the union $C=\bigcup\limits_{n} C_n$. The lower boundary of
% $C_n$ is the curve $\overline{u}=\overline{u}^{(n)}(q)$.

Our next aim  is to investigate what happens with the curve $\overline{u}=\overline{u}^{(n)}(q)$ as $n\to\infty$. It is more conveniently done in coordinates $x$,
 $L$, defined by
\begin{equation}\label{eq:76}q=1-\frac{x}{n}\,,\quad 0\leq x\leq \frac{n}{2}\,,\quad \overline{u}=(u+1)e^{-L}\,,\quad 0\leq L\leq
 \log\frac{u+1}{u}\,.
 \end{equation}
We will assume that $u>0$ in \eqref{eq:76}. Hence the curve
$\overline{u}=\overline{u}^{(n)}(q)$ transforms into the curve
\begin{equation}\label{eq:77}
L_n(x)=\log(u+1)-\log\overline{u}^{(n)}\left(
1-\frac{x}{n}\right).
\end{equation}
Note that $L_n(0)=0$ since ${u}^{(n)}(1)=u+1$ for any $n$.
\begin{definition}\label{def:5:3} We say that an admissible sequence $(A_n)_{n\geq n_0}$
of the moderate growth, or, equivalently, the family
$(\overline{u}^{(n)})_{n\geq n_0}$ possesses the threshold-like behavior on the segment $[0.5,1]$ if for each
fixed $x\geq 0$ and the corresponding functions $L_n(x)$ it is true that
\begin{equation}\label{eq:78}
\lim_{n\to\infty} L_n(x)=L(x)=\begin{cases}x,&0\leq x<\log \frac{u+1}{u},\\
\log\frac{u+1}{u}\,,&x\geq \log\frac{u+1}{u}\,.
\end{cases}
\end{equation}
\end{definition}
The definition above is illustrated in Fig.~\ref{fig:4}.
\begin{figure}[!ht]
\centering
\includegraphics[width=0.55\textwidth]{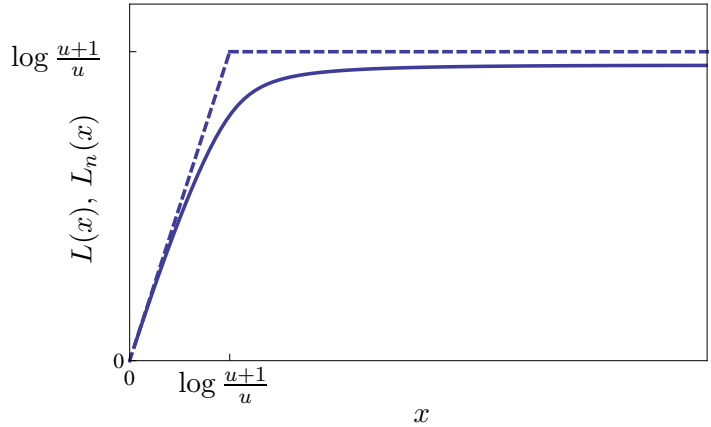}
\caption{The limit function $L(x)$ in Definition \ref{def:5:3} of the threshold-like behavior}\label{fig:4}
\end{figure}

%\begin{picture}(400,130)
%
%\put(130,40){{\vector(1,0){200}}} \put(140,30){{\vector(0,1){90}}}
%\put(140,40){{\circle*{2}}} \put(140,90){{\circle*{2}}}
%\put(190,40){{\circle*{2}}} \put(120,20){0}
%
%\put(330,25){$x$} \put(125,110){$L$}
%\put(105,85){$\log\frac{u+1}{u}$} \put(180,25){$\log\frac{u+1}{u}$}
%\put(200,70){$L=L_n(x)$} \put(180,100){$L=L(x)$}
%
%\qbezier(220,89)(208,89)(191,87)\qbezier(140,40)(182,85)(191,87)
%
%\linethickness{0.7pt}{\put(190,90){{\line(1,0){120}}}
%\put(140,40){{\line(1,1){50}}}  }
% \put(120,5){Figure$^{**}$. Error threshold in coordinates $x$, $L$}
%\end{picture}

If the threshold-like behavior is present in the two-valued fitness landscape, then the following formula provides an approximation for the
threshold mutation rate $q_*^{(n)}(u)$, $n\gg 1$:
\begin{equation}\label{eq:79}
q^{(n)}_*(u)\approx
1-\frac{1}{n}\log\frac{u+1}{u}=1-\frac{1}{n}\log\frac{w+s}{w}\,,
\end{equation}
which, of course, coincides with the classical estimate for the error threshold for the single peaked landscape \cite{eigen1988mqs,semenov2014}.

If the sequence of continuous functions
$(\overline{u}^{(n)})_{n\geq n_0}$ has the threshold-like behavior then it converges not uniformly on $[0.5,1]$, as
$n\to\infty,$ to the discontinuous function $\psi(q)$ such that
$\psi(q)=u$ if $0.5\leq q<1$ and $\psi(1)=u+1$.

The following theorems and corollaries provide  sufficient
conditions under which an admissible sequence of orbits
of the moderate growth shows the threshold-like behavior.

\begin{theorem}\label{th:5:1}In the above notation suppose that for
$n\geq n_0$ an admissible sequence of $G_n$-orbits $A_n\subset
X_n$ \emph{(}${G}_n\leqslant{\rm Iso}\, (X_n)$\emph{)} of the
moderate growth is given and $u>0$. Suppose also that for $n\geq
n_0$  the inequality
\begin{equation}\label{eq:80}
F_{A_n}(2q-1)\leq (2q-1)^{n/2}+M_n\,,\quad
0.5\leq q\leq 1 \,,
\end{equation}
is satisfied for some constants $M_n$ such that $\lim\limits_{n\to \infty}M_n=0$. Then the sequence $(A_n)_{n\geq n_0}$ shows the threshold-like behavior on the segment
$[0.5,1]$.
\end{theorem}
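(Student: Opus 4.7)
Proof proposal.

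The plan is to combine the series identity (5.2) for $\overline{u}^{(n)}(q)$ with the hypothesis (5.80) so as to produce a one-sided inequality that, together with the lower bounds of Section 5.1, pins down the limit curve $L(x)$ by a short case analysis around the critical value $x_0 := \log\frac{u+1}{u}$.

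First, substitute $F_{A_n}(z) \leq z^{n/2} + M_n$ termwise into (5.2); summing the two resulting geometric series in $u/\overline{u}^{(n)}(q)$ and dividing by $\overline{u}^{(n)}(q)$ gives
$$1 \leq \frac{t^{n/2}}{\overline{u}^{(n)}(q) - u\,t^{n/2}} + \frac{M_n}{\overline{u}^{(n)}(q) - u}, \qquad t := 2q-1,$$
valid for every $q \in [0.5,1]$; convergence of both series is ensured by $\overline{u}^{(n)}(q) > u \geq u\,t^{n/2}$, which follows from $\overline{u}^{(n)}(q) \geq u + |A_n|/2^n$. This displayed inequality, combined with the two lower bounds $\overline{u}^{(n)}(q) \geq (u+1)q^n$ and $\overline{u}^{(n)}(q) \geq u + |A_n|/2^n$ from Section 5.1, is the only input I will need.

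Now pass to the coordinates $q = 1 - x/n$, $U_n := \overline{u}^{(n)}(1 - x/n)$. Then $t^{n/2} \to e^{-x}$ and $(u+1)q^n \to (u+1)e^{-x}$, so $\liminf_n U_n \geq \max\{u,(u+1)e^{-x}\}$, i.e.\ $\limsup_n L_n(x) \leq \min\{x,x_0\}$. For $0 \leq x < x_0$ one has $(u+1)e^{-x} > u$, so eventually $U_n - u$ is bounded away from $0$ and hence $M_n/(U_n-u) \to 0$; along any subsequential limit $U_n \to V \geq (u+1)e^{-x}$ the displayed inequality passes to $1 \leq e^{-x}/(V - u e^{-x})$, forcing $V \leq (u+1)e^{-x}$ and thus $V = (u+1)e^{-x}$, so $L_n(x) \to x$. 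For $x \geq x_0$, suppose instead that a subsequence satisfies $U_n \to V$ with $V > u$; then again $M_n/(U_n-u) \to 0$ and the inequality forces $V \leq (u+1)e^{-x} \leq u$, a contradiction. Hence every subsequential limit equals $u$, so $U_n \to u$ and $L_n(x) \to x_0$.

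The principal technical concern is the remainder $M_n/(U_n - u)$, which is only well controlled when $U_n$ stays away from $u$; in the regime $U_n \to u$ (above threshold) this ratio can a priori fail to tend to $0$. The contradiction-based organization above sidesteps this: one never has to estimate $M_n/(U_n-u)$ when $U_n - u$ is small, because the very hypothesis $V > u$ supplies a uniform lower bound on $U_n - u$ along the subsequence. The remaining steps are routine: the limits $(1-2x/n)^{n/2} \to e^{-x}$ and $(1-x/n)^n \to e^{-x}$, and the observation that the threshold case $x = x_0$ is already covered by the second case since $(u+1)e^{-x_0} = u$, so the two branches of $L(x)$ in Definition~\ref{def:5:3} agree there.
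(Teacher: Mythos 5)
Your proposal is correct, and it rests on the same core strategy as the paper's proof: substitute the hypothesis \eqref{eq:80} into the series representation \eqref{eq:52}, sum the resulting geometric progressions (which converge because $\overline{u}^{(n)}(q)>u\geq u\,t^{n/2}$), and play the resulting upper estimate off against the lower bound \eqref{eq:57} in the scaled coordinate $q=1-x/n$. The differences are organizational but real, and in two places your route is cleaner. First, you apply \eqref{eq:80} directly at the point $z=(2q-1)^{c+1}$, which is legitimate since that point lies in $[0,1]$; the paper instead applies \eqref{eq:80} at $z=e^{-2x/n}$ and must invoke the monotonicity of $F_{A_n}(2q-1)$ on $[0.5,1]$ together with $1-t\leq e^{-t}$ to compare $F_{A_n}((1-2x/n)^m)$ with $F_{A_n}(e^{-2mx/n})$ --- a step you avoid entirely by deferring the limit $(1-2x/n)^{n/2}\to e^{-x}$ to the end. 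Second, your subsequence--contradiction device handles the remainder $M_n/(U_n-u)$ uniformly and treats the regime $x\geq\log\frac{u+1}{u}$ head-on, whereas the paper's chain of estimates \eqref{eq:82}--\eqref{eq:84} is only carried out for $x<\log\frac{u+1}{u}$ (where the relevant progression $\sum_m\bigl(\tfrac{u}{u+1}e^{L_n(x)}\bigr)^m$ is guaranteed to converge) and the supercritical case is dispatched in one line by the monotonicity of $L_n$ in $x$ together with the bound $L_n(x)<\log\frac{u+1}{u}$. Your correct observation that the a priori troublesome ratio $M_n/(U_n-u)$ never needs to be estimated when $U_n-u$ is small --- because the contradiction hypothesis $V>u$ supplies the needed uniform lower bound along the subsequence --- is exactly the point where a naive direct passage to the limit would fail, and is the main thing your reorganization buys. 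The only detail left implicit is the boundedness $U_n\leq u+1$ needed to extract convergent subsequences, which is immediate from the monotonicity of $\overline{u}^{(n)}$ and $\overline{u}^{(n)}(1)=u+1$.
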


\begin{proof}  In view of equation \eqref{eq:52}, in coordinates $x$,
$L$:
$$
(u+1)e^{-L_n(x)}=
\sum_{c=0}^{\infty}\left(\frac{u}{(u+1)e^{-L_n(x)}}\right)^c
F_{A_n}\left(\left(1-\frac{2x}{n}\right)^{c+1}\right),
$$
therefore (putting $m=c+1$)
\begin{equation}\label{eq:81}
u= \sum_{m=1}^{\infty}\left(\frac{u}{u+1}\right)^m
e^{mL_n(x)}F_{A_n}\left(\left(1-\frac{2x}{n}\right)^{m}\right).
\end{equation}

The lower bound \eqref{eq:57}, $\overline{u}\geq(u+1)q^n$, implies for
$x\in\left[0,\log\frac{u+1}{u}\right)$, $n\gg 1$,
\begin{equation}\label{eq:82}
L_n(x)=\log\frac{u+1}{\overline{u}^{(n)}(1-\frac{x}{n})}\leq
-n\log\left(1-\frac{x}{n}\right).
\end{equation}
 Consequently, we have on
$\left[0,\log\frac{u+1}{u}\right)$
\begin{equation}\label{eq:83}
\limsup_{n\to\infty}
L_n(x)\leq -\lim_{n\to\infty}n\log\left(1-\frac{x}{n}\right)=
x.
\end{equation}

On the other hand, the   function $F_{A_n}(2q-1)$, as the leading
eigenvalue for $u=0$ (see \eqref{eq:27}), is increasing on the segment
$[0.5, 1]$. In view of the inequality
$1-t\leq e^{-t}$  we have
$$F_{A_n}\left(\left(1-\frac{2x}{n}\right)^{m}\right)\leq
F_{A_n}\left(e^{-2mx/n}\right)\,.$$
Make the substitution $2q-1=e^{-2x/n}$  into \eqref{eq:80}, where $0\leq
x<+\infty$. Then the following inequality
$$
e^{mL_n(x)}F_{A_n}\left(\left(1-\frac{2x}{n}\right)^{m}\right)\leq
e^{mL_n(x)}F_{A_n}\left(e^{-2mx/n}\right)\leq
e^{m(L_n(x)-x)}+M_ne^{mL_n(x)}\,
$$
holds. Hence, \eqref{eq:81} yields
$$
u\leq\sum_{m=1}^{\infty}\left(\frac{u}{u+1}\right)^m
e^{m(L_n(x)-x)}+M_n\sum_{m=1}^{\infty}\left(\frac{ue^{L_n(x)}}{u+1}\right)^m
\,.
$$
In view of \eqref{eq:82} both progressions in the right-hand side converge
for $x\in \left[0,\log\frac{u+1}{u}\right)$ and $n\gg 1$.

The simplification provides the inequality
$$
1\leq e^{L_n(x)-x}+
M_n\frac{e^{L_n(x)}(u+1-ue^{L_n(x)-x})}{(u+1)(u+1-ue^{L_n(x)})}\,.
$$
Since $M_n\to 0$ and the inequality \eqref{eq:82} holds we get finally $
\liminf\limits_{n\to\infty} e^{L_n(x)-x}\geq 1\,$, or,
\begin{equation}\label{eq:84}
\liminf_{n\to\infty} L_n(x)\geq x\,.
\end{equation}

It follows from \eqref{eq:83}, \eqref{eq:84} that $\lim\limits_{n\to\infty}L_n(x)=x$
if $0\leq x< \log\frac{u+1}{u}$. The increasing functions
$L_n(x)$ cannot exceed the value
$\log\frac{u+1}{u}$. Thus, the threshold-like behavior is observed.
\end{proof}

\begin{corollary}\label{corr:5:2} The sequence of constant single peaked landscapes
$A_n\equiv \{a\},\,n\geq n_0$ possesses the threshold-like behavior.
\end{corollary}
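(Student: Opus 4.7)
The plan is to reduce the corollary to Theorem \ref{th:5:1} by verifying its three hypotheses (admissibility, moderate growth, and the bound \eqref{eq:80}) for the constant single peaked sequence $A_n \equiv \{a\}$. For this landscape, Example \ref{ex:3:1:1} (combined with Corollary \ref{corr:2:3}, which makes the choice of the peak $a$ immaterial) furnishes the closed form $F_{A_n}(z) = (1+z)^n/2^n$, so that $F_{A_n}(2q-1) = q^n$. Moderate growth is immediate since $|A_n|/2^n = 2^{-n} \to 0$, and admissibility reduces to the pointwise inequality $q^n \geq q^{n+1}$ on $[0.5,1]$, which is obvious.

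The one substantive step is to exhibit a null sequence $(M_n)$ with $q^n \leq (2q-1)^{n/2} + M_n$ on $[0.5,1]$. My approach is to rewrite $q^n = (q^2)^{n/2}$ and exploit the algebraic identity $q^2 - (2q-1) = (1-q)^2$; then the mean value theorem applied to $t \mapsto t^{n/2}$ on $[2q-1,\,q^2]$, together with the estimate $\xi^{n/2-1} \leq q^{n-2}$ for $\xi$ in this interval, yields
$$
0 \;\leq\; q^n - (2q-1)^{n/2} \;\leq\; \tfrac{n}{2}\, q^{n-2}(1-q)^2.
$$
A routine single-variable calculus computation shows the right-hand side is maximized at $q^{*} = 1 - 2/n$ with value $(2/n)(1-2/n)^{n-2} \leq 2/n$, so setting $M_n := 2/n$ verifies \eqref{eq:80} and Theorem \ref{th:5:1} delivers the threshold-like behavior.

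There is no serious obstacle here; the only point requiring any cleverness is spotting the identity $q^2 - (2q-1) = (1-q)^2$. It produces the quadratic factor $(1-q)^2$ that compensates for the linear-in-$n$ derivative generated by the mean value theorem, and without it a crude estimate of the difference $q^n - (2q-1)^{n/2}$ would only give a bound of order $1$ rather than $1/n$.
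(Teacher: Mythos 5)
Your proof is correct and takes essentially the same approach as the paper: both reduce the corollary to Theorem \ref{th:5:1} by exhibiting a null sequence $M_n$ with $q^n \le (2q-1)^{n/2}+M_n$ on $[0.5,1]$. The only difference is in how the maximum of $q^n-(2q-1)^{n/2}$ is bounded --- you use the mean value theorem together with the identity $q^2-(2q-1)=(1-q)^2$ to get $M_n=2/n$, while the paper locates the critical point of $\varphi_n(q)=q^n-(2q-1)^{n/2}$ directly and evaluates $\varphi_n$ there to obtain the slightly sharper $M_n=\frac{1}{n}\left(1-\frac{1}{n}\right)^{n-1}$; both are of order $1/n$ and either suffices.
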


\begin{proof} Condition \eqref{eq:80} of Theorem \ref{th:5:1} reads as follows: the inequality
\begin{equation}\label{eq:85}q^n\leq (2q-1)^{n/2}+M_n,\quad
0.5\leq q\leq 1,
\end{equation} holds for some constants $M_n$ such
that $\lim\limits_{n\to \infty}M_n=0$.

Let us show that the constants $$M_n=\frac{1}{n}\left(1-\frac{1}{n}\right)^{n-1}\leq\frac{1}{e(n-1)}$$ fit. Consider
the auxiliary function $\varphi_n(q)=q^n- (2q-1)^{n/2}$. Its
maximum $\mu_n$ on the segment $[0.5,1]$ is reached at some point
$q_n<1$. This point is a root of the equation
$$
\varphi'_n(q)=nq^{n-1}-n (2q-1)^{n/2-1}=0\;, \quad
\mbox{or}\;\;(2q-1)^{n/2}=(2q-1)q^{n-1}\,.
$$
Hence, by the definition of $\varphi_n(q)$, we get
$$\mu_n=\varphi_n(q_n)=q_n^n-(2q_n-1)q_n^{n-1}=q_n^{n-1}(1-q_n)\,.$$
The function $M(t)=t^{n-1}(1-t)$ achieves its maximum on [0,1], which is
equal to $M_n=\frac{1}{n}\left(1-\frac{1}{n}\right)^{n-1}$, at
$t_n=1-\frac{1}{n}$. Hence, $\mu_n\leq M_n$ and the conditions of
Theorem \ref{th:5:1} hold.
\end{proof}

Theorem \ref{th:5:1} together with Corollary \ref{corr:5:2} are the key results as the following
theorem shows. We are convinced that the reason for the error
threshold effect is  geometric. More precisely, in view of \eqref{eq:23} the polynomial $F_{A_n}(2q-1)$ can be always represented in
the form
\begin{equation}\label{eq:86}F_{A_n}(2q-1)=q^n+\sum_{k=1}^n f_k^{(n)}\,(1-q)^{k}q^{n-k},
\end{equation}
where $f_k^{(n)}=\#\{b\in A_n\,|\,H_{ab}=k\}$. Thus, this
polynomial can be viewed as a kind of the spherical growth
function of the orbit $A_n$ with respect to an arbitrary fixed
point $a\in A_n$.

\begin{theorem}\label{th:5:2} In the above notation assume that for any
$n\geq n_0$ an admissible sequence of $G_n$-orbits $A_n\subset
X_n$ \emph{(}$G_n\leqslant{\rm Iso}\,( X_n)$\emph{)}, $n\geq n_0$
of the moderate growth is given and $u>0$. If either
\begin{equation}\label{eq:87}
\lim_{n\to\infty}\max_{q\in [0.5,1]}\sum_{k=1}^{\lfloor n/2\rfloor}
f_k^{(n)}\,(1-q)^{k}q^{n-k}=0\,,
\end{equation}
or
\begin{equation}\label{eq:88}
\lim_{n\to\infty}\sum_{k=1}^{\lfloor n/2\rfloor}
f_k^{(n)}\,\left(\frac{k}{n}\right)^{k}\left(1-\frac{k}{n}\right)^{n-k}=0\,,
\end{equation}
then the sequence $(A_n)_{n\geq n_0}$ possesses the threshold-like behavior.
\end{theorem}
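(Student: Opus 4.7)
The plan is to reduce Theorem \ref{th:5:2} to Theorem \ref{th:5:1}. What must be established, for either hypothesis \eqref{eq:87} or \eqref{eq:88}, is the inequality \eqref{eq:80}: for all $q\in[0.5,1]$,
$$F_{A_n}(2q-1)\le (2q-1)^{n/2}+M_n$$
with constants $M_n\to 0$. Using the spherical-growth expansion \eqref{eq:86}, I would decompose
$$F_{A_n}(2q-1)=q^n+\sum_{k=1}^{\lfloor n/2\rfloor} f_k^{(n)}(1-q)^k q^{n-k}+\sum_{k=\lfloor n/2\rfloor+1}^{n} f_k^{(n)}(1-q)^k q^{n-k},$$
and bound the three pieces separately.

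The leading $q^n$ is absorbed into $(2q-1)^{n/2}$ by Corollary \ref{corr:5:2}, which gives $q^n\le (2q-1)^{n/2}+\frac{1}{e(n-1)}$ on $[0.5,1]$. For the upper-half sum I would carry out an elementary extremum calculation. Differentiating $g_k(q):=(1-q)^k q^{n-k}$ shows its unique critical point on $[0,1]$ is $q^\ast=(n-k)/n$; for $k>n/2$ this lies below $1/2$, so $g_k$ is strictly decreasing on $[0.5,1]$ and hence bounded above by $g_k(1/2)=2^{-n}$. Therefore
$$\sum_{k>\lfloor n/2\rfloor}^{n} f_k^{(n)}(1-q)^k q^{n-k}\le \frac{|A_n|}{2^n},$$
which vanishes as $n\to\infty$ by the moderate-growth assumption \eqref{eq:62}. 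This cleans up the tail and reduces everything to controlling the middle, lower-half sum uniformly in $q\in [0.5,1]$.

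Under hypothesis \eqref{eq:87} this control is immediate by assumption. Under hypothesis \eqref{eq:88} I would invoke the same critical-point calculation on the other side: for $k\le n/2$ we have $q^\ast=(n-k)/n\in[0.5,1]$, and since $g_k$ is unimodal on $[0,1]$ its maximum on $[0.5,1]$ equals $g_k(q^\ast)=(k/n)^k(1-k/n)^{n-k}$. Exchanging the maximum with the finite sum therefore yields
$$\max_{q\in[0.5,1]}\sum_{k=1}^{\lfloor n/2\rfloor} f_k^{(n)}(1-q)^k q^{n-k}\le \sum_{k=1}^{\lfloor n/2\rfloor} f_k^{(n)}\Bigl(\frac{k}{n}\Bigr)^{k}\Bigl(1-\frac{k}{n}\Bigr)^{n-k},$$
so that \eqref{eq:88} implies \eqref{eq:87}. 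Collecting the three estimates produces \eqref{eq:80} with $M_n$ being the sum of three null sequences, and Theorem \ref{th:5:1} then delivers the threshold-like behavior.

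I do not anticipate any serious obstacle here; the only mildly delicate point is the observation that the upper-half tail ($k>\lfloor n/2\rfloor$) has to be controlled directly by $|A_n|/2^n$ rather than by any spherical-growth hypothesis, which is precisely why \eqref{eq:87} and \eqref{eq:88} only constrain the range $k\le\lfloor n/2\rfloor$. Once this asymmetry is spotted, the estimates are routine.
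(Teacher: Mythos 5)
Your argument is correct and is essentially identical to the paper's own proof: the same three-way split of \eqref{eq:86} at $k=\lfloor n/2\rfloor$, the same critical-point computation showing $(1-q)^kq^{n-k}$ peaks at $q=1-k/n$ (hence is bounded by $2^{-n}$ on $[0.5,1]$ for $k>n/2$ and by $(k/n)^k(1-k/n)^{n-k}$ for $k\le n/2$), the same absorption of the tail into $|A_n|/2^n$ via moderate growth and of $q^n$ via Corollary \ref{corr:5:2}, and the same reduction of \eqref{eq:88} to \eqref{eq:87} before applying Theorem \ref{th:5:1}. No gaps.
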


\begin{proof}The polynomial $P_k^{(n)}(q)=(1-q)^kq^{n-k}$
decreases on the segment [0.5,1] if $n<2k\leq 2n$ and achieves its
maximal value $2^{-n}$ at $q=0.5$. If $1\leq k\leq \lfloor n/2\rfloor$ then the
maximal value of $P_k^{(n)}(q)=(1-q)^kq^{n-k}$  on [0.5,1] is
achieved at the point $q_k^{(n)}=1-\frac{k}{n}$ and is equal to
$\left(\frac{k}{n}\right)^{k}\left(1-\frac{k}{n}\right)^{n-k}$.

Denote $F_n=\max\limits_{q\in
[0.5,1]}\sum_{k=1}^{\lfloor n/2\rfloor}f_k^{(n)}(1-q)^{k}q^{n-k}$. Then
Corollary \ref{corr:5:2}, \eqref{eq:85} and \eqref{eq:86} together yield
\begin{equation}\label{eq:89}
\begin{split}
F_{A_n}(2q-1)&=q^n+\sum_{k=1}^n f_k^{(n)}(1-q)^{k}q^{n-k} \leq  q^n+F_n+ \sum_{k=\lfloor n/2\rfloor+1}^n \frac{f_k^{(n)}}{2^n}\\
&\leq (2q-1)^{n/2}+\frac{1}{n}\left(1-\frac{1}{n}\right)^{n-1}+F_n+\frac{|A_n|}{2^n}=(2q-1)^{n/2}+o(1)\,,\quad n\to \infty.
\end{split}
\end{equation}

On the other hand, if the equality \eqref{eq:88} holds we can substitute
$\sum_{k=1}^{\lfloor n/2\rfloor}
f_k^{(n)}\,\left(\frac{k}{n}\right)^{k}\left(1-\frac{k}{n}\right)^{n-k}$
for $F_n$ since $F_n\leq \sum_{k=1}^{\lfloor n/2\rfloor}
f_k^{(n)}\left(\frac{k}{n}\right)^{k}\left(1-\frac{k}{n}\right)^{n-k}$.
 Hence, Theorem \ref{th:5:1} implies that  the sequence
$(A_n)_{n\geq n_0}$ shows the threshold-like behavior.
\end{proof}

\begin{corollary}The following sequences of orbits possess the threshold-like behavior:
\begin{itemize}
\item[$(i)$]All the constant sequences $A_n\equiv A$ \emph{(}see,
for instance, Example \ref{ex:3:1:3} of the quaternion landscape\emph{)};

\item[$(ii)$]All the antipodal sequences $A_n=
\{a,a^*\}\subset X_n$ \emph{(}see Example \ref{ex:3:2}\emph{)};

\item[$(iii)$] All the permutation invariant sequences
$A_{p,n}$ where $ A_{p,n}=\{a\in X\,|\,H_a=p\}\;$, $
p=0,1,\dots,n\,$, and $p$ does not depend on $n\geq p$ \emph{(}see
Examples \ref{ex:3:1} and \ref{ex:3:1:2}\emph{)}.
\end{itemize}
\end{corollary}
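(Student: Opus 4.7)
The plan is to verify, case by case, the hypotheses of Theorem \ref{th:5:2}. In each of $(i)$--$(iii)$, I first check that the sequence is admissible and of the moderate growth (so Theorem \ref{th:5:2} actually applies), and then I verify condition \eqref{eq:87}, which will turn out to be the easier of the two sufficient conditions for all three families. Throughout I use the formula $f_k^{(n)}=\#\{b\in A_n\,|\,H_{ab}=k\}$ from \eqref{eq:86} with $a\in A_n$ arbitrary, and the elementary fact, established inside the proof of Theorem \ref{th:5:2}, that the polynomial $(1-q)^k q^{n-k}$ attains its maximum on $[0.5,1]$ at $q=1-k/n$ with value $(k/n)^k(1-k/n)^{n-k}$ as soon as $k\le\lfloor n/2\rfloor$.

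For $(i)$, constant sequences $A_n\equiv A\subset X_{n_0}\subset X_n$: moderate growth is trivial since $|A_n|=|A|$ is fixed. Admissibility is immediate from the embedding ${\rm Iso}(X_n)\hookrightarrow {\rm Iso}(X_{n+1})$: the added coordinate is $0$ for every element of $A$ viewed in $X_{n+1}$, so the pairwise Hamming distances $H_{ab}$ do not change and $F_{A_{n+1}}(2q-1)=q\,F_{A_n}(2q-1)\le F_{A_n}(2q-1)$ on $[0.5,1]$. Set $D:=\diam(A)$; then $f_k^{(n)}=0$ for $k>D$ and $f_k^{(n)}$ is independent of $n$ for $k\le D$. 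For $n$ large enough that $D\le\lfloor n/2\rfloor$, the sum in \eqref{eq:87} has only the finite set of indices $k=1,\dots,D$, and each term is bounded by $f_k^{(n)}(k/n)^k(1-k/n)^{n-k}=O(n^{-k})$, so the whole sum tends to $0$. For $(ii)$, antipodal $A_n=\{a,a^*\}$: $|A_n|=2$ (moderate growth), admissibility is furnished by Proposition \ref{pr:ad} (Example \ref{ex:3:2}), and since $H_{aa^*}=n$ we have $f_k^{(n)}=0$ for every $1\le k\le\lfloor n/2\rfloor$ as soon as $n\ge 3$; hence the sum in \eqref{eq:87} is identically zero and Theorem \ref{th:5:2} applies.

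For $(iii)$, permutation invariant $A_{p,n}=\{a\in X_n\,|\,H_a=p\}$ with $p$ fixed: $|A_{p,n}|=\binom{n}{p}=O(n^p)=o(2^n)$, and admissibility follows from Proposition \ref{pr:ad} (Examples \ref{ex:3:1}, \ref{ex:3:1:2}). By Lemma \ref{l:3:1}, $f_{2j}^{(n)}=\binom{p}{j}\binom{n-p}{j}$ for $0\le j\le p$ and $f_k^{(n)}=0$ for odd $k$ or $k>2p$. Thus for $n\ge 4p$ the sum in \eqref{eq:87} reads
\[
\sum_{j=1}^{p}\binom{p}{j}\binom{n-p}{j}\max_{q\in[0.5,1]}(1-q)^{2j}q^{n-2j}
=\sum_{j=1}^{p}\binom{p}{j}\binom{n-p}{j}\Bigl(\tfrac{2j}{n}\Bigr)^{2j}\Bigl(1-\tfrac{2j}{n}\Bigr)^{n-2j}.
\]
For fixed $j$, $\binom{n-p}{j}\le n^j/j!$ and $(2j/n)^{2j}=(2j)^{2j}n^{-2j}$, so the $j$-th term is $O(n^{-j})\cdot(1-2j/n)^{n-2j}=O(n^{-j})$ and tends to $0$; since the sum has only $p$ terms, the whole sum tends to $0$ as $n\to\infty$, and Theorem \ref{th:5:2} yields the threshold-like behavior.

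The only step requiring real care is the admissibility in $(i)$, which hinges on the fact that the canonical embedding ${\rm Iso}(X_n)\hookrightarrow{\rm Iso}(X_{n+1})$ fixes the new coordinate; apart from that, every estimate reduces to the maximum principle for $(1-q)^k q^{n-k}$ already used in the proof of Theorem \ref{th:5:2}, together with the obvious fact that $f_k^{(n)}$ either vanishes (cases $(i)$ with $k>\diam A$, case $(ii)$ with $k<n$) or grows only polynomially in $n$ (case $(iii)$).
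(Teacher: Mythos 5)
Your proof is correct and follows essentially the same route as the paper: both verify admissibility and moderate growth (Proposition \ref{pr:ad}, or the direct computation $F_{A_{n+1}}(2q-1)=qF_{A_n}(2q-1)$ for constant orbits) and then check the sufficient conditions of Theorem \ref{th:5:2}, your term-by-term bound $\sum_k f_k^{(n)}\max_{q}(1-q)^kq^{n-k}$ being exactly the quantity in condition \eqref{eq:88}. The only cosmetic difference is that you present this as a verification of \eqref{eq:87} (which it implies, since the maximum of the sum is at most the sum of the maxima), whereas the paper invokes \eqref{eq:88} directly for cases $(i)$ and $(iii)$.
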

\begin{proof} $(i)$ Let $A\subset X_{n_0}$. Since the orbit is
fixed then for $n\geq n_0$ all the coefficients $f_k^{(n)}\equiv
f_k^{(n_0)}$ and $f_k^{(n)}\equiv 0$ when $k>n_0$. It follows that
the assumption \eqref{eq:88} of Theorem \ref{th:5:2} that
$$
\lim_{n\to\infty}\sum_{k=1}^{\lfloor n_0/2\rfloor}
f_k^{(n_0)}\,\left(\frac{k}{n}\right)^{k}\left(1-\frac{k}{n}\right)^{n-k}=0\,
$$
holds since $k\leq \lfloor n_0/2\rfloor$ and, consequently,
$\left(\frac{k}{n}\right)^{k}\to 0$ as $n\to\infty$ provided
$\left(1-\frac{k}{n}\right)^{n-k}\leq 1$.

$(ii)$ In this case $F_{A_n}(q)=q^n+(1-q)^n$.
Then $f_k^{(n)}=0$, $k=1,\dots, \lfloor n/2\rfloor$. Then both assumptions
\eqref{eq:87}, \eqref{eq:88} hold.

$(iii)$ We may suppose that $n\ge 2p=n_0$. In view of
\eqref{eq:29}
$$ F_{A_{p,n}}(q)=\sum\limits_{k=0}^p
{\binom{p}{k}}{\binom{n-p}{k}}(1-q)^{2k}q^{n-2k}\,.
$$
Hence, since $p$ is fixed, ${\binom{p}{k}}<2^p$, $1-\frac{2k}{n}\leq
1$:
 $$\sum\limits_{k=1}^p {\binom{p}{k}}{\binom{n-p}{k}}\left(\frac{k}{n}\right)^{2k}\left(1-\frac{2k}{n}\right)^{n-2k}
\leq (2p)^{2p}\sum\limits_{k=1}^p \frac{1}{n^{2k}}\,{\binom{n-p}{k}}\leq (2p)^{2p}\sum\limits_{k=1}^p \frac{n^k}{k!\,n^{2k}}=o(1)$$
as $n\to\infty$. Consequently, the condition \eqref{eq:88} is satisfied.
\end{proof}

\begin{remark}\label{remark:5}In contrast, if $A_{2n}=A_{n,2n}$ is the sequence of
the fitness landscapes in Example \ref{ex:3:1:3} then the numerical
calculations (Fig. \ref{fig:5}) show that this sequence does not demonstrate the threshold-like behavior. The
approximate formula \eqref{eq:48} provides a lower bound
$$\max_{q\in [0.5,1]}\sum_{k=1}^{n}
{\binom{n}{k}}^2\,(1-q)^{2k}q^{2n-2k}\geq 0.183\approx
\frac{4r-1}{8r\sqrt{\pi r}}\,,\quad
r=-\frac{3}{4}W\left(-\frac{1}{3\sqrt[3]{\pi}}\right)\approx
1.7423,$$ where
$W(z)$ is (a branch of) the Lambert $W$ function
($W(z)e^{W(z)}=z$). Hence, the sufficient conditions for the threshold-like behavior are not satisfied in this case.
\begin{figure}[!t]
\centering
\includegraphics[width=0.95\textwidth]{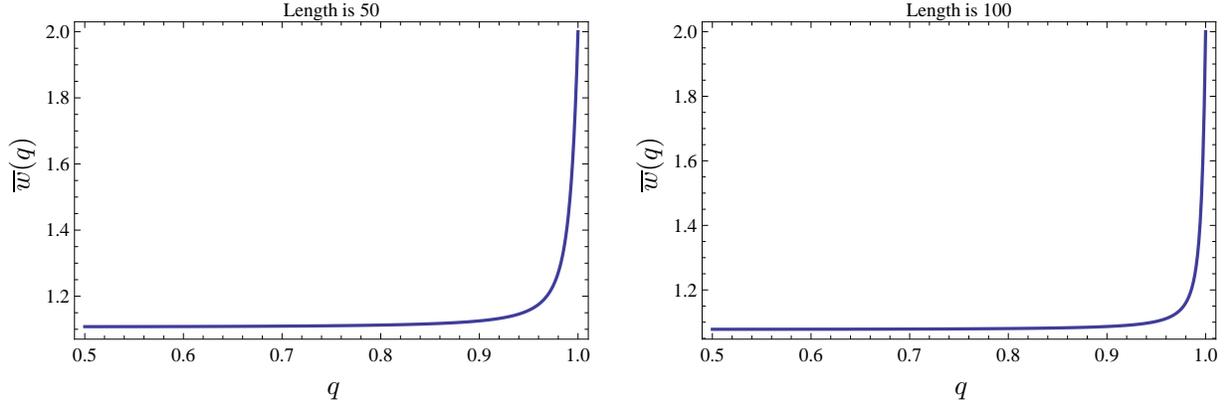}
\caption{The leading eigenvalue $\overline{w}(q)$ versus the mutation rate $q$ in the case of the fitness landscape in Example \ref{ex:3:1:3}. The sequence length $2n=50$ in the left panel and $2n=100$ in the right panel. Note the absence of the threshold-like behavior}\label{fig:5}
\end{figure}
\end{remark}

A natural question to ask is whether the given sufficient conditions are also necessary for the threshold-like behavior. While at this point we do not have a full answer for this question, we can present a sufficient condition for the absence of the threshold like behavior of the sequence $(\overline{u}^{(n)})_{n\geq n_0}$ as
$n\to\infty$. This sufficient condition shows in a way that the condition \eqref{eq:87} is ``almost''
necessary for the error threshold.

\begin{proposition} Suppose that there exist
constants $\varepsilon>0$, $x>0$ such that for all $n\gg n_0$ the
inequality
\begin{equation}\label{eq:90}
F_{A_n}(2q_n-1)\geq (u+1)(q_n^n+2\varepsilon)\,,\quad
q_n=1-\frac{x}{n}\,,
\end{equation}
holds for sufficiently small $u>0$. Then the sequence $(A_n)_{n\geq n_0}$
possesses no threshold-like behavior.
\end{proposition}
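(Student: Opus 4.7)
The plan is to violate the convergence $L_n(x) \to x$ on $[0,\log\frac{u+1}{u})$ required by Definition \ref{def:5:3}. Under the hypothesis \eqref{eq:90}, I will show that $L_n(x)$ is asymptotically bounded above by a quantity strictly less than $x$, which forces the sequence to fall outside the threshold-like behavior.

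\textbf{Step 1: a crude lower bound on $\overline{u}^{(n)}$.} The series \eqref{eq:52} represents $\overline{u}^{(n)}(q)$ as a sum of non-negative terms, since $u>0$, $\overline{u}^{(n)}(q)>0$, and $F_{A_n}((2q-1)^{c+1}) > 0$ by \eqref{eq:22} for $q\in[0.5,1]$. Keeping only the $c=0$ term yields the trivial lower bound
$$\overline{u}^{(n)}(q) \;\geq\; F_{A_n}(2q-1).$$
Specializing at $q = q_n = 1 - x/n$ and inserting the hypothesis \eqref{eq:90},
$$\overline{u}^{(n)}(q_n) \;\geq\; F_{A_n}(2q_n - 1) \;\geq\; (u+1)\bigl(q_n^n + 2\varepsilon\bigr) \quad \text{for all } n \gg n_0.$$

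\textbf{Step 2: upper bound on $L_n(x)$.} Choose $u>0$ small enough that (i) the hypothesis \eqref{eq:90} applies and (ii) $x < \log\frac{u+1}{u}$; the latter is achievable because $\log\frac{u+1}{u} \to \infty$ as $u\downarrow 0$. Taking logs in the bound of Step 1 and using the definition of $L_n$ in \eqref{eq:77},
$$L_n(x) \;=\; \log(u+1) - \log \overline{u}^{(n)}(q_n) \;\leq\; -\log\bigl(q_n^n + 2\varepsilon\bigr).$$
Since $q_n^n = (1-x/n)^n \to e^{-x}$ as $n\to\infty$, passing to $\limsup$ gives
$$\limsup_{n\to\infty} L_n(x) \;\leq\; -\log\bigl(e^{-x} + 2\varepsilon\bigr) \;<\; -\log(e^{-x}) \;=\; x,$$
where the strict inequality uses $\varepsilon > 0$ and the monotonicity of $-\log$.

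\textbf{Step 3: conclusion.} With the chosen $u$, the value $x$ lies strictly inside the interval $[0,\log\frac{u+1}{u})$, where Definition \ref{def:5:3} demands $L_n(x) \to x$. The strict inequality derived in Step 2 is incompatible with this convergence, so the sequence $(A_n)_{n\geq n_0}$ cannot possess the threshold-like behavior.

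The argument is short and carries no deep technical obstacle; the only subtlety worth flagging is interpreting the hypothesis correctly: the smallness of $u>0$ in \eqref{eq:90} is a free parameter, and we exploit this freedom twice, first to invoke \eqref{eq:90}, and then to ensure that the fixed $x$ from the hypothesis sits inside the non-trivial portion $[0,\log\frac{u+1}{u})$ of the putative limit function $L$.
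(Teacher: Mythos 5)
Your proof is correct and follows essentially the same route as the paper: truncate the series \eqref{eq:52} to the $c=0$ term to get $\overline{u}^{(n)}(q_n)\geq F_{A_n}(2q_n-1)$, apply the hypothesis \eqref{eq:90}, and conclude $\limsup_{n\to\infty}L_n(x)\leq-\log(e^{-x}+2\varepsilon)<x$ for a point $x$ placed inside $[0,\log\frac{u+1}{u})$ by taking $u$ small. The only cosmetic difference is that the paper bounds $-\log(q_n^n+2\varepsilon)$ by $-\log(e^{-x}+\varepsilon)$ for finite large $n$ rather than passing to the limit directly, which changes nothing.
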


\begin{proof} We can assume that $x<\log\frac{u+1}{u}$
for sufficiently small $u>0$ and $q_n>0.5$ for sufficiently large
$n$. In view of \eqref{eq:52}
$$
\overline{u}^{(n)}(q_n)=F_{A_n}(2q_n-1)+\sum_{c=1}^{\infty}\left(
\frac{u+1}{\overline{u}^{(n)}(q_n)}\right)^c
F_{A_n}\left((2q_n-1)^{c+1}\right)\geq F_{A_n}(2q_n-1)\geq
(u+1)(q_n^n+2\varepsilon)\,.
$$
Hence,
$$
L_n(x)=\log(u+1)-\log\overline{u}^{(n)}(q_n)\leq-\log
(q_n^n+2\varepsilon)=-\log\left(\left(1-\frac{x}{n}\right)^n+2\varepsilon\right)<-\log(e^{-x}+\varepsilon).
$$
for $n\gg n_0$. Consequently, $\limsup\limits_{n\to\infty}
L_n(x)\leq -\log(e^{-x}+\varepsilon)<x$.
\end{proof}

Note that in Remark \ref{remark:5} we can take $x=r\approx 1.7423$, $u\leq 0.1$,
$\varepsilon=0.01$, $n\ge 4$.

\section{General construction for the Eigen evolutionary problem}\label{sec:6}
The classical Eigen quasispecies model uses as the underlying geometry the $N$-dimensional hypercube. The distances between the vertices of this hypercube are measured by the number of edges connecting them. While this geometry has a transparent biological interpretation in terms of sequences composed of zeroes and ones, which can be identifies with, e.g., purine and pyrimidine, we feel that it is a natural generalization to consider an arbitrary isometry group acting on an abstract metric space to move to a next level of abstraction of the quasispecies model (a somewhat relevant discussion of the original Eigen model can be found in \cite{dress1988evolution,rumschitzki1987spectral}). This section provides a concise description of such generalization. While we concentrate here on the mathematical development of the model, we would like to note that an abstract construction of a simplicial fitness landscape can be used to model real biological systems, in particular the switching of the antigenic variants of some bacteria \cite{avery2006microbial}.

\subsection{Groups of isometries and a generalized algebraic Eigen quasispecies problem}\label{sec:6:1}
The previous results, when we encode individuals of a population by
the vertices of the binary cube $X=\{0,1\}^N$ equipped with the
Hamming distance, can be generalized as follows. Let $(X,d)$ be a
finite metric space. We will assume that the metric $d\colon X\times X
\longrightarrow \N_0$ is an {\it integer}-valued function.

Consider a group $\Gamma\leqslant{\rm Iso}(X)$ of  isometries of
$X$ and suppose that $\Gamma$ acts {\it transitively} on $X$, that
is, $X$ is a single $\Gamma$-orbit (we use the notation for the
left action). Since $\Gamma$ acts transitively on $X$ we can fix
an arbitrary point $x_0\in X$ and consider the function
$d_{x_0}\colon X\longrightarrow \N_0$ such that
$d_{x_0}(x)=d(x,x_0)$. By definition,
$$\diam (X):=\max\{d_{x_0}(x)\,|\,x\in X\}$$
 is called the {\it diameter}
of $X$. The number $N=\diam(X)$ does not depend on the choice
of $x_0$.

Let us point out a few of important general geometric examples.

\begin{example}[Weyl chamber systems] Let $\Gamma=W$ be the Weyl group of
the root system $\Delta$ of a simple finite-dimensional Lie
algebra ${\mathfrak g}$ over $\bf C$ acting on the Weyl
chamber system $X$ (see \cite{bourbaki22001lie}, chapter VI). For instance, if
$\Delta$ is of type $A_N$ then $W\cong S_N$. The distance between
two chambers $x$, $y$ is the minimal number of chamber walls  we
need to pass from $x$ to $y$. It is known (e.g., see \cite{bourbaki22001lie}, chapters
IV, V) that $d(x,y)$ is just the length of the unique element
$w\in W$ such that $y=wx$ when $W$ is viewed as a reflection group
generated by a set $S$ of reflections which correspond to
fundamental roots (see more general Example 6.3 below.) The number
$N=\diam (X)$ is known as the Coxeter number of $W$ and
$|X|=|W|$.

On the other hand, the Weyl group $W\cong S_N$ of type $A_N$ acts
on the $N$-dimensional regular simplex, the Weyl group $W$ of type
$B_N$ (or $C_N$) acts on the $N$-dimensional cube since the root
lattice is cubic in the latter case. Thus, we come to  the next
class of geometric examples.
\end{example}

\begin{example}[Regular polytopes] Let $X=P^{(0)}$ be the the set of vertices of
an $n$-dimensional regular polytope $P$ (see, e.g., \cite{coxeter1973regular} and Fig. \ref{fig:6}), all edges of
which have an integer length $e$, say, of a regular $m$-gon ($m\ge
2$) on the plane, of a tetrahedron, octahedron or icosahedron in
the 3-dimensional space (see Fig. \ref{fig:6} for some examples) and so on, equipped with the ``edge''
metric: the distance between $x$ and $y$ is the minimal number of
edges of $P$ connecting $x$ and $y$ multiplied by $e$. For
$n$-dimensional unit cube the edge metric  is the same as the
Hamming metric.

The group of all isometries $\Gamma={\rm Iso}(P)$  acts on $P$
and, consequently, on $X$. For instance, let $P$ be an icosahedron
or dodecahedron. Then $\Gamma\cong A_5$ where $A_5<S_5$ is the
alternating group of order 60.
\begin{figure}
\centering
\includegraphics[width=0.85\textwidth]{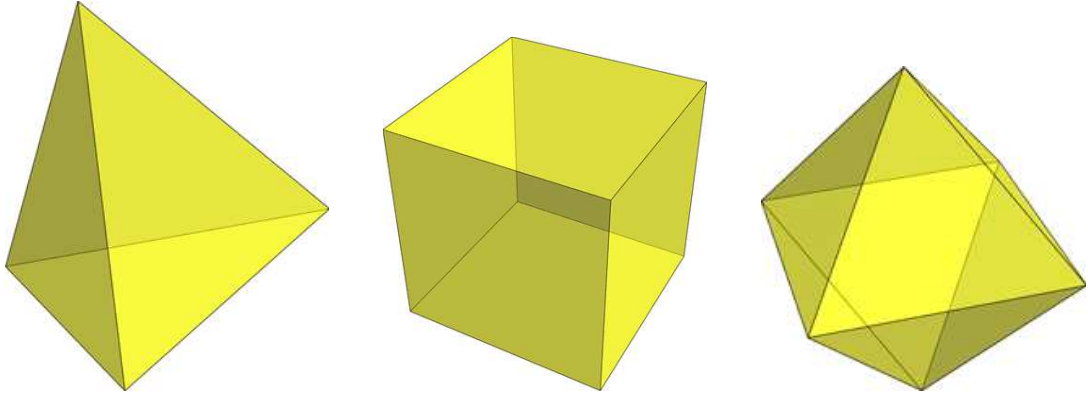}
\caption{Examples of regular polytopes in dimension 3: tetrahedron (regular simplex),
cube, octahedron}\label{fig:6}
\end{figure}
\end{example}

\begin{example}[Groups as metric spaces]\label{ex:6:3} Let $G$ be a finite group generated by a set
$S=S^{-1}$. The {\it word} metric $d=d_S$ on $G$ is defined as
follows (see \cite{de2000topics}, chapter IV for more details and examples):
$d(g,h)=l(g^{-1}h)$ where $l(g^{-1}h)=l$ is the minimal number of
generators $s\in S$ needed to represent $g^{-1}h$ as a product
$s_1\dots s_l$. The word metric is  invariant with respect to the
action of $G$ on itself by left shifts $h\to gh$. Hence, we have
the metric space $X=G$ and the transitive action of $\Gamma=G$ on
$X$ by isometries.

More generally, for any subgroup $H\leqslant G$ we can define the
metric space $X_H=\{gH\,|\,g\in G\}$ of the left cosets of $G$ by
$H$. The group $G$ acts on $X_H$ by left shifts and
$$d(gH,aH)=\min\{d(x,y)\,|\,x\in gH, y\in aH\}\,.
$$

If $G$ acts transitively by isometries on a metric space $X$ then
as a $G$-set $X$ is isomorphic to the set of left cosets $G/{\rm
St}_{\Gamma}(x_0)$, $x_0\in X$.
\end{example}
\begin{example}[$p$-adic metric spaces] Let $p$ be any fixed prime,
${\Z}_p$ be the commutative ring of $p$-adic integers equipped
with the standard $p$-adic metrics $d_p(x,y)=\|
x-y\|_p$. Consider the quotient rings
$X_{n,p}={\Z}_p/p^{n}{\Z}_p$, $n\in \N$, with the scaled metric
$\overline{d}_p(\overline{x},\overline{y})=p^{n-1}\|
x-y\|_p$ ($\overline{x}$ denotes the coset $x+p^{n}{\Z}_p$)
on which the additive group $\Gamma=\Gamma_{n,p}=X_{n,p}$ acts
1-transitively by isometries $L_{\overline{\gamma}}:
\overline{x}\to \overline{\gamma} +\overline{x}$. Here
$N_{n,p}=\diam (X_{n,p})=p^{n-1}$, $l_{n,p}=|X_{n,p}|=p^n$.

For $p=2$, $n=3$ we have 2-adic ``cube'' $X_{3,2}$ which is different from
the binary cube with the Hamming metric.
\end{example}
Now  consider a quadruple $(X,d,\Gamma, \bs w)$ where $(X,d)$ is a
finite metric space with integer distances between points of
diameter $N$ and cardinality $l=|X|$, a group $\Gamma\leqslant{\rm
Iso} (X)$ is a fixed group  and a {\it fitness landscape} $\bs
w=(w_x)^\top$ is a vector-column of non-negative real numbers
called {\it fitnesses} indexed by $x\in X$. The quadruple
$(X,d,\Gamma, \bs w)$ will be called {\it homogeneous}
$\Gamma$-landscape. In other words, the sequences of the
population are encoded by $x\in X$.

Consider also the diagonal matrix ${\bs W}={\diag}(w_x)$ of
order $l$ called the {\it fitness matrix}, the symmetric distance
matrix ${\bs D}=\bigl(d(x,y)\bigr)_{l\times l}$ with integer entries of the same order
 and the symmetric matrix
${\bs Q}=\left((1-q)^{d(x,y)} q^{N-d(x,y)}\right)_{l\times l}$ for $q\in
[0,1]$. Finally, we introduce the {\it distance polynomial}
\begin{equation}\label{6.1}
P_X(q)=\sum_{x\in X} (1-q)^{d(x,x_0)}q^{N-d(x,x_0)}\,,\quad
x_0\in X. \end{equation}

Since $\Gamma$ acts transitively on $X$ this polynomial is
independent on the choice of $x_0\in X$ and is the sum of entries
in each row (column)  of ${\bs Q}$.

The following
definition generalizes the classical Eigen quasispecies problem we dealt with
in the previous sections.

\begin{definition} The  problem
to find the leading eigenvalue $\overline{w}=\overline{w}(q)$ of
the matrix $\frac{1}{P_X(q)}{\bs{ QW}}$ and the eigenvector $\bs p=\bs p(q)$
satisfying
\begin{equation}\label{6.2}
{\bs{QWp}}=P_X(q)\overline{w}\,\bs p,\;\quad p_x=p_x(q)\ge
0,\quad\sum_{x\in X} p_x(q)=1\, \end{equation} will be called {\it
the generalized algebraic Eigen quasispecies problem}.
\end{definition}
Note that in ({\ref{6.2}})
\begin{equation}\label{6.3} \overline{w}=\sum_{x\in X} w_xp_x\,.
\end{equation} Due to the Perron--Frobenius theorem a solution of this
problem always exists. Also note that the uniform distribution
vector
\begin{equation}\label{6.4}
\bs p=\frac{1}{|X|}(1,\dots,1)^\top=\frac{1}{l}(1,\dots,1)^\top
\end{equation} provides a solution to (\ref{6.2}) in the case
of the constant fitnesses $w_x\equiv w>0$.

 The problem (\ref{6.2}) turns into the classical Eigen evolutionary problem for the $N$-dimensional binary
cube $X=\{0,1\}^N$ with the Hamming metric and $\Gamma={\rm Iso}
(X)$ which was named in 1930 by A. Young a {\it hyperoctahedral}
group. $\Gamma$ is isomorphic as an abstract group to the Weyl
group of the root system of type $B_N$ or $C_N$ and is acting on
the cube. In the classical case $P_X(q)\equiv 1$.

Consider also the following serial examples. If $X$ is the set of
vertices of an $n$-dimensional regular simplex with all edges of
unit length then $\Gamma={\rm Iso} (X)\cong S_{n+1}$, $N=\diam(X)=1$ and $l=|X|=n+1$. The distance polynomial is
\begin{equation}\label{6.5}
P_X(q)=q+n(1-q)\,.
\end{equation}
If $X$ is the set of vertices of an $n$-dimensional
hyperoctahedron with all edges of unit length then $\Gamma={\rm
Iso} (X)$ is again a hyperoctahedral group (the hyperoctahedron is
the dual polytope to the cube), $N=\diam(X)=2$ and
$l=|X|=2n$. The distance polynomial is
\begin{equation}\label{6.6}
P_X(q)=q^2+(2n-2)(1-q)q+(1-q)^2\,.
\end{equation}

\subsection{Properties of the distance polynomial}

 In the notation of Section \ref{sec:6:1}
consider the polynomial $P_X(q)=P_{X,d}(q)$. The polynomial
$P_X(q)$ is strictly positive on $[0,1]$ (if $N$ is strictly equal
to $\diam(X)$. If  $N>\diam(X)$ then $P_X(0)=0$, such cases
sometimes we will need to  consider) and  possesses the following
properties:
\begin{enumerate}
\item
\begin{equation}\label{6.7}
P_X(1)=1,\quad P_X\left(\frac{1}{2}\right)=
\frac{|X|}{2^N}=\frac{l}{2^N}\;. \end{equation}

\item\begin{equation}\label{6.8}
P_X(q)=\sum_{k=0}^N f_k\,(1-q)^k q^{N-k}\in \mathbf Z[q]\,,
\end{equation} where the non-negative integers $f_k=f_k(X):=\#\{x\in X\,|\,d(x,x_0)=k\}$ are the cardinalities of
$d$-spheres in $X$ with the center at the fixed point $x_0$ and of
radius $k$.
\end{enumerate}

\begin{remark} The polynomial $S_X(t)=\sum_{k=0}^N f_k
t^k$ is often  called the {\it spherical growth function} of
$(X,d)$. See, for instance, \cite{de2000topics}, chapter VI for details and
examples.
\end{remark}

Suppose that we scaled the metric $d$ by a positive integer factor
$e$. Let $P_{X,e\cdot d}(q)$ denote the new distance polynomial.
Then
\begin{equation}\label{6.9} P_{X,e\cdot d}(q)=\sum_{x\in X}
(1-q)^{e\,d(x,x_0)}q^{e(N-d(x,x_0))}=\sum_{k=0}^N
f_k\,(1-q)^{e\,k} q^{e(N-k)}\,,\quad x_0\in X\,.
\end{equation}
Since $q\in [0,1]$ we may assert that the sequence $\{P_{X,e\cdot
d}(q)\,|\,e\in \N\}$ is non-increasing at each fixed point
$q\in[0,1]$.

\subsection{Regular simplicial fitness landscapes}
To give a specific example of the analysis of the generalized algebraic Eigen quasispecies problem we shall briefly consider
two-valued fitness landscapes related to the set of vertices of the regular
$n$-dimensional simplex $X$ with ${\rm Iso}(X)\cong S_{n+1}$. Here we follow the main lines of Section \ref{sec:1}.

Biologically, the simplicial fitness landscape means that we deal with a population of individuals such that any individual can mutate to any other individual with the same probability equal to $1-q$. Even such oversimplified construction can model a non-trivial biological system. Here, for example, if we consider ``mutation'' as a sudden discrete genetic (heritable) change then the simplicial geometry can describe, at a first approximation, the switching of the antigenic variants for some bacteria. These variants turns one into another with almost equal probabilities, whereas the corresponding fitnesses of different variants are defined by interactions with the host immune system (e.g., \cite{avery2006microbial}).

\subsubsection{General scheme}
Let $X=\{0,1,\dots,n\}$ and $d(i,j)=1$ if $i\ne j$, $d(i,i)=0$.
Hence, $X$ is a metric space with the trivial metric, $N=\diam(X)=1$ and $l=|X|=n+1$. The distance polynomial is defined by
(\ref{6.5}).

Let $A\subset\{0,1,\dots,n\}$. Consider the landscape
$$
w_k=\left\{
\begin{array}{r}
w+s,\quad k\in A\;,\\
w,\quad k\notin A\;.\\
\end{array}
\right.
$$
The matrix ${\bs W}$ of fitnesses can be represented as follows
\begin{equation}{\label{6.10}}
{\bs W}=w{\bs I}+s {\bs E}_A=w {\bs I}+s
\sum_{a\in A}{\bs E}_a,
\end{equation}
$\bs{ I}$ being the identity matrix and ${\bs E}_a$ being the
elementary matrix with the only one nontrivial entry $e_{aa}=1$ on
the diagonal.

We want to solve the problem (\ref{6.2}). The matrix ${\bs
Q}=({\bs Q}_{ba})=(2q-1){\bs I}+(1-q){\bs E}$ where all the
entries of $\bs{ E}$ are ones, that is
$$
{\bs Q}_{ba}=\left\{
\begin{array}{cc}
1-q\,,& b\ne a\,,\\
q\,,&b=a\,.
\end{array}
 \right.
$$

It can be directly checked
that
\begin{equation}{\label{6.11}}
{\bs{D}}(n,q):={\bs T}^{-1} {\bs{QT}}=\diag(q+n(1-q),2q-1,\dots,2q-1),
\end{equation}
where for the symmetric transition matrix ${\bs T}$ of order $n+1$
we have
\begin{equation}{\label{6.12}}
{\bs T}=\left(\begin{array}{crrrr} 1&1&1&\dots&1\\
1&-1&0&\dots&0\\
1&0&-1&\dots&0\\
\vdots&\vdots&\vdots&\ddots&0\\
1&0&0&\dots&-1\\
\end{array}\right)\,,\quad {\bs T}^{-1}=\frac{1}{n+1}\left(\begin{array}{crrrr} 1&1&1&\dots&1\\
1&-n&1&\dots&1\\
1&1&-n&\dots&1\\
\vdots&\vdots&\vdots&\ddots&1\\
1&1&1&\dots&-n\\
\end{array}\right)\,.
\end{equation}

\medskip The transformation of (\ref{6.2}) yields
$$
{\bs T}^{-1}{\bs{ QT}} {\bs T}^{-1}{\bs{ WT}} {\bs
T}^{-1}\bs p=P_X(q)\overline{w}\, {\bs T}^{-1}\bs p=(q+n(1-q)) \overline{w}\,
{\bs T}^{-1}\bs p.
$$
or, in view of (\ref{6.11}),
\begin{equation}{\label{6.13}}
{\bs D}(n,q) \left(w{\bs I}+s\sum_{a\in A}{\bs T}^{-1}{\bs
E}_a{\bs T}\right) {\bs T}^{-1}\bs p=P_X(q)\overline{w}\, {\bs
T}^{-1}\bs p,
\end{equation}
whence
\begin{equation}{\label{6.14}}(P_X(q)\overline{w}{\bs
I}-w{\bs D}(n,q)) {\bs T}^{-1}\bs p=\sum_{a\in A} {\bs
D}(n,q){\bs T}^{-1}{\bs E}_a\bs p.
\end{equation}

Let $\bs x={\bs T}^{-1}\bs p$, $\bs p={\bs T}\bs x$. Then (\ref{6.14}) implies
\begin{equation}{\label{6.15}} \bs x=s\sum_{a\in A}(P_X(q)\overline{w}\,{\bs I}-{\bs D}(n,q)))^{-1}{\bs D}(n,q){\bs T}^{-1}{\bs E}_a\bs p,\end{equation}
 or, in
coordinates,
\begin{equation}{\label{6.16}}
x_k=s\sum_{a\in A}\frac{{\bs D}(n,q)_k
t^{(-1)}_{ka}p_a}{P_X(q)\overline{w}-w{\bs D}(n,q)_k},\quad
k=0,\ldots,n.
\end{equation}

Since $\bs p={\bs T}\bs x$, then we get from (\ref{6.15})
\begin{equation}{\label{6.17}}
\bs p=s\sum_{a\in A}{\bs T}(P_X(q)\overline{w}{\bs I}-{\bs
D}(n,q))^{-1} { \bs D}(n,q){\bs T}^{-1}\,{\bs E}_a\bs p.
\end{equation}

Only the components $p_a$,
 $a\in A$, are involved in the right-hand side of (\ref{6.17}).
By definition, ${\bs E}_A=\sum_{a\in A} {\bs E}_a$ and ${\bs E}_A$
is a projection matrix. Hence, we can multiply both sides of
(\ref{6.17}) by ${\bs E}_A$:
\begin{equation}{\label{6.18}}
{\bs E}_A \bs p=s{\bs E}_A{\bs T}(P_X(q)\overline{w}{\bs I}-{\bs
D}(n,q)))^{-1} {\bs D}(n,q){\bs T}^{-1}\,{\bs E}_A \bs p.
\end{equation}

We can rewrite (\ref{6.18}) as
\begin{equation}{\label{6.19}}
\bs p_A={\bs M}\bs  p_A,\quad \bs p_A={\bs E}_A \bs p\,,
\end{equation}
where
\begin{equation}{\label{6.20}}
\begin{split}
{\bs M}&=s{\bs E}_A \,{\bs T}(P_X(q)\overline{w}{\bs I}-{\bs D}(n,q)))^{-1}{\bs D}(n,q){\bs T}^{-1}\\
&=\frac{s}{\overline{w}P_X(q)}\sum_{c=0}^{\infty}\left(\frac{w}{\overline{w}P_X(q)}\right)^c{\bs E}_A \,{\bs T}{\bs D}(n,q)^{c+1}{\bs T}^{-1}\\
&=\frac{s}{\overline{w}P_X(q)}\sum_{c=0}^{\infty}\left(\frac{w}{\overline{w}P_X(q)}\right)^c{\bs E}_A \,{\bs Q}^{c+1}\,.
\end{split}
\end{equation}

It follows that vector $\bs p_A$ is an
eigenvector of ${\bs M}$ corresponding to the eigenvalue
$\lambda=1$.

Consider $\overline{w}$ in (\ref{6.19}), (\ref{6.20}) as a
parameter. It follows from (\ref{6.3}) that $\overline{w}$ depends
only on $p_a$, $a\in A$, that is, on the ``reduced'' vector
$\bs p_A={\bs E}_A\bs p$. The original eigenvector $\bs p$ can be
reconstructed from $\bs p_A$ with the help of (\ref{6.17}). Thus, instead of the original problem we arrive to the reduced problem to find the eigenvector
$\bs p_A$ satisfying (\ref{6.18}) and corresponding to the eigenvalue
$\lambda=1$ of the matrix ${\bs M}=(m_{ba})$ defined in (\ref{6.20})).
The parameter $\overline{w}=\overline{w}(q)$ satisfies the formula
\begin{equation}{\label{6.21}}
\overline{w}=w+s\sum_{a\in A}p_a. \end{equation}

\subsubsection{Equation for the eigenvalue $\overline{w}$}

\begin{lemma}In \eqref{6.20} we have
\begin{equation}{\label{6.22}}
{\bs Q}^{c+1}=(2q-1)^{c+1}{\bs I}+\sum_{m=1}^{c+1}{\binom{c+1}{m}}(n+1)^{m-1}(2q-1)^{c+1-m}(1-q)^m{\bs E}\,.
\end{equation}
\end{lemma}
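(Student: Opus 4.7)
The plan is to recognize that $\bs Q$ is already written in the form $\bs Q = (2q-1)\bs I + (1-q)\bs E$, where $\bs E$ is the $(n+1)\times(n+1)$ all-ones matrix, and to exploit the fact that $\bs I$ and $\bs E$ commute. Consequently one can simply apply the binomial theorem to $\bs Q^{c+1}$ and then collapse the powers of $\bs E$ using the elementary identity $\bs E^m = (n+1)^{m-1}\bs E$ for $m\ge 1$.

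First I would verify the identity $\bs E^m = (n+1)^{m-1}\bs E$ by induction on $m$: the base case $m=1$ is trivial, and the inductive step follows from $\bs E^2 = (n+1)\bs E$, which in turn is immediate because every entry of $\bs E^2$ is the scalar product of two all-ones vectors of length $n+1$. Second, since $\bs I$ commutes with $\bs E$, the binomial theorem gives
\begin{equation*}
\bs Q^{c+1}=\bigl((2q-1)\bs I+(1-q)\bs E\bigr)^{c+1}=\sum_{m=0}^{c+1}\binom{c+1}{m}(2q-1)^{c+1-m}(1-q)^m\bs E^m.
\end{equation*}
Third, I would split off the $m=0$ term, which contributes $(2q-1)^{c+1}\bs I$, and replace $\bs E^m$ by $(n+1)^{m-1}\bs E$ for $m\ge 1$, yielding exactly the claimed formula \eqref{6.22}.

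There is no real obstacle here; the only thing that must be checked is the commutativity of $\bs I$ and $\bs E$ (automatic) and the power identity for $\bs E$, which is a one-line computation. The proof is essentially two lines of algebra once these ingredients are in place.
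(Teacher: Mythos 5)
Your proposal is correct and follows exactly the paper's own argument: a direct binomial expansion of ${\bs Q}^{c+1}=((2q-1){\bs I}+(1-q){\bs E})^{c+1}$, justified by the commutativity of ${\bs I}$ and ${\bs E}$, together with the identity ${\bs E}^m=(n+1)^{m-1}{\bs E}$ obtained from ${\bs E}^2=(n+1){\bs E}$. Nothing is missing.
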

\begin{proof} We apply directly the binomial
expansion for the matrix ${\bs Q}^{c+1}=((2q-1){\bs I}+(1-q){\bs
E})^{c+1}$. Since all the entries of $\bs{ E}$ are ones then
$\bs E^2=(n+1)\bs E$ and, consequently, $\bs E^m=(n+1)^{m-1}\bs E$.
\end{proof}

The equality (\ref{6.19}) implies that
\begin{equation}{\label{6.23}}
\sum_{b\in A} p_b= \sum_{b\in A}\sum_{a\in A} m_{ba}p_a=\sum_{a\in
A}p_a \sum_{b\in A} m_{ba}. \end{equation}
Suppose that the sum $\sum\limits_{b\in A}m_{ba}$ does not depend
on $a\in A$. Then it follows from (\ref{6.23}) that
$\sum\limits_{b\in A} m_{ba}=1$ for each $a\in A$. In view of
(\ref{6.20})
\begin{equation}{\label{6.24}}
\frac{\overline{w}P_X(q)}{s}=
\sum_{c=0}^{\infty}\left(\frac{w}{\overline{w}P_X(q)}\right)^c
\sum_{b\in A}({\bs E}_A \,{\bs Q}^{c+1})_{ba}\,,
\end{equation}
if the inner sum does not depend on $a\in A$.

\begin{theorem} In the previous notation let $A$ be a subset
 of a simplicial metric space $X$. Then the equality
\begin{equation}{\label{6.25}}
\frac{\overline{w}P_X(q)}{s}=
\!\sum_{c=0}^{\infty}\left(\frac{w}{\overline{w}P_X(q)}\right)^c
\left(\!(2q-1)^{c+1}+\frac{|A|}{n+1}\sum_{m=1}^{c+1}{\binom{c+1}{m}}(n+1)^{m}(2q-1)^{c+1-m}(1-q)^m\right)\,
\end{equation}
holds.
\end{theorem}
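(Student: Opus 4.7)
My plan is to show that equation (6.25) follows almost directly by substituting the closed form for $\bs Q^{c+1}$ given by Lemma (6.22) into the reduced sum appearing in (6.24), and then verifying that the key assumption (independence on $a \in A$) is automatically satisfied in the simplicial setting.

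First, I would observe that since $\bs E_A$ is the diagonal projection onto the coordinates indexed by $A$, for any matrix $\bs M$ we have $(\bs E_A \bs M)_{ba} = M_{ba}$ when $b \in A$ and $0$ otherwise. In particular, for fixed $a \in A$,
\[
\sum_{b \in A} (\bs E_A \bs Q^{c+1})_{ba} \;=\; \sum_{b \in A} (\bs Q^{c+1})_{ba}.
\]
Using Lemma (6.22), the $(b,a)$-entry of $\bs Q^{c+1}$ splits as the scalar $(2q-1)^{c+1}\delta_{ba}$ plus a term proportional to $(\bs E)_{ba}=1$ that is independent of the indices $b,a$. Therefore, summing over $b\in A$ for a fixed $a\in A$ gives
\[
\sum_{b\in A}(\bs Q^{c+1})_{ba} = (2q-1)^{c+1} + |A|\sum_{m=1}^{c+1}\binom{c+1}{m}(n+1)^{m-1}(2q-1)^{c+1-m}(1-q)^m,
\]
since the Kronecker delta contributes exactly $1$ (because $a \in A$) and the contribution from the $\bs E$-part is constant across $b$.

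Next, I would emphasize that the right-hand side above does not depend on the choice of $a \in A$. This is precisely the key assumption required to pass from the general identity (6.23) to the closed form (6.24). Hence the hypothesis of (6.24) is satisfied in the simplicial setting for \emph{any} subset $A \subseteq \{0,1,\ldots,n\}$, with no additional symmetry needed. This is a consequence of the fact that in a simplicial metric every pair of distinct points is at the same distance, so the group $S_{n+1}$ acts doubly transitively and every subset ``looks the same'' from every one of its points.

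Finally, substituting the explicit formula for $\sum_{b\in A}(\bs Q^{c+1})_{ba}$ into (6.24), and rewriting $|A|\cdot(n+1)^{m-1} = \frac{|A|}{n+1}(n+1)^{m}$ to match the form in the statement, yields (6.25) term by term. No further estimation is needed; the only non-trivial input is the binomial expansion already packaged in Lemma (6.22), together with the idempotence relation $\bs E^m = (n+1)^{m-1}\bs E$ used to prove that lemma. There is no serious obstacle here: the argument is essentially a bookkeeping calculation whose only subtle point is recognizing that the simplicial geometry automatically guarantees the constancy of the inner sum that was merely an assumption in the general framework of Section~\ref{sec:1}.
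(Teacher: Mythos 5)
Your proof is correct and arrives at \eqref{6.25} by the same essential computation as the paper: substitute the expansion of ${\bs Q}^{c+1}$ from \eqref{6.22} into \eqref{6.24} and regroup the factor $|A|(n+1)^{m-1}=\frac{|A|}{n+1}(n+1)^m$. The one place where you diverge is in how the key assumption (constancy of $\sum_{b\in A}m_{ba}$ over $a\in A$) is justified. The paper argues group-theoretically: since ${\rm Iso}(X)\cong S_{n+1}$ acts $(n+1)$-transitively, one may take $A=\{0,1,\dots,|A|-1\}$, which is an orbit of the cyclic subgroup $C_{|A|}$, and then invoke the orbit argument of Proposition \ref{pr:1}. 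You instead observe directly that ${\bs Q}^{c+1}$ lies in the span of ${\bs I}$ and ${\bs E}$, so that $\sum_{b\in A}({\bs Q}^{c+1})_{ba}=(2q-1)^{c+1}+|A|\cdot(\text{const})$ for every $a\in A$, manifestly independent of $a$. Your route is slightly more elementary and makes transparent \emph{why} the theorem holds for an arbitrary subset $A$ with no symmetry hypothesis --- the geometry of the simplex forces ${\bs Q}$ (hence all its powers) to be a combination of ${\bs I}$ and ${\bs E}$ --- whereas the paper's conjugation trick, while correct, somewhat obscures that the orbit structure is not actually doing any work here. Both arguments are complete; yours could be preferred for its directness.
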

\begin{proof} Since ${\rm Iso}(X)\cong S_{n+1}$ acts
$(n+1)$-transitively on $X$ we may assume that $A$ is the subset
$\{0,1,\dots, |A|-1\}$ on which the cyclic subgroup
$C_{|A|}=\left<(0,1,\dots, |A|-1)\right>$ is acting. Then we apply
(\ref{6.22}) to (\ref{6.24}) . \end{proof}

The formula (\ref{6.25}) can be simplified as follows. Recall (see
(\ref{6.5})) that $P_X(q)=q+n(1-q)$. The binomial expansion yields
$$\sum_{m=1}^{c+1}{\binom {c+1}{m}}(n+1)^{m}(2q-1)^{c+1-m}(1-q)^m$$
$$=
((2q-1)+(n+1)(1-q))^{c+1}-(2q-1)^{c+1}=P_X(q)^{c+1}-(2q-1)^{c+1}\,.$$
Hence, (\ref{6.25}) reads
\begin{equation}{\label{6.26}}
\frac{\overline{w}P_X(q)}{s}=
\!\sum_{c=0}^{\infty}\left(\frac{w}{\overline{w}P_X(q)}\right)^c
\left(\!\left(1-\frac{|A|}{n+1}\right)(2q-1)^{c+1}+\frac{|A|}{n+1}P_X(q)^{c+1}\right).
\end{equation}

Summing the geometric progressions we finally get
\begin{equation}{\label{6.27}}
\frac{|A|}{(n+1)(\overline{u}-u)}+
\left(1-\frac{|A|}{n+1}\right)\frac{2q-1}{(q+n(1-q))\overline{u}-(2q-1)u}=1,\quad
u=\frac{w}{s}\,,\quad \overline{u}=\frac{\overline{w}}{s}\,.
\end{equation}

\begin{remark} Note that the equation (\ref{6.27}) depends only on $|A|$
and the dimension $n$.  It follows that (\ref{6.27}) provides the
eigenvalue of the two-valued fitness problem (\ref{6.2}) for any
subset $A\subset X$. Note also that the equation (\ref{6.27})
turns into the equation of degree $2=N+1$ where $N= 1$ is the
diameter of the simplex (compare with Corollary \ref{corr:2:1}).
We expect that for the hyperoctahedral landscapes we will get
cubic equations since $N=2$ for a hyperoctahedron (with unit
edges) in any dimension.

The solution to (\ref{6.27}) is given by the following formula\footnote{This is a correct formula, unfortunately, in the published version the factor $(q+n(1-q))$ is missing}
\begin{equation}{\label{6.28}}
\begin{split}
\overline{u}&=\frac{v(q)+\sqrt{v^2(q)-4(u+u^2)(2q-1)(q+n(1-q))}}{2(q+n(1-q))}\,,\\
v(q)&=(q+n(1-q))\left(u+\frac{|A|}{n+1}\right)+(2q-1)\left(u+1-\frac{|A|}{n+1}\right).
\end{split}
\end{equation}
\end{remark}
\subsubsection{Simplicial error threshold}

In this subsection some results of Section \ref{sec:5} are appropriated for
the case of the simplicial landscapes.

Let $X_n$ be the set of vertices of an $n$-dimensional regular
simplex with edges of unit length and let $(A_n)_{n=n_0}^\infty$, $A_n\subset
X_n$, be a sequence of subsets. Let
$\overline{u}=\overline{u}^{(n)}(q)$ be the sequence of
the corresponding eigenvalues (see (\ref{6.28})). It can be checked
that each function $\overline{u}^{(n)}$ is increasing on the
segment $[0.5,1]$ and convex downward.

A sequence $(A_n)_{n\geq n_0}$ is called a sequence of the {\it
moderate growth} if
\begin{equation}{\label{6.29}}
\lim_{n\to \infty}\frac{|A_n|}{n+1}=0\;.
\end{equation}

Let us denote $\alpha_n=\frac{|A_n|}{n+1}$. In view of
(\ref{6.28}) $\overline{u}^{(n)}(0.5)=u+\alpha_n\to u$ as
$n\to\infty$ if the sequence $(A_n)_{n\geq n_0}$ is of the moderate growth. On
the other hand, $\overline{u}^{(n)}(1)\equiv u+1$.

Consider new coordinates $x$, $L$ such that
\begin{equation}{\label{6.30}}q=1-\frac{x}{n},\;0\leq x\leq \frac{n}{2},
\quad L=\frac{u+1}{\overline{u}}-1\,,
 \;0\leq L\leq 1/u.\end{equation}
We assume that $u>0$ in (\ref{6.30}). The curve
$\overline{u}=\overline{u}^{(n)}(q)$ transforms into the curve
\begin{equation}{\label{6.31}}
L=L_n(x)=\frac{u+1}{\overline{u}^{(n)}\left(
1-\frac{x}{n}\right)}-1.
\end{equation}

\begin{definition}\label{def:6:1} We say that a sequence $(A_n)_{n=n_0}^\infty$
of the moderate growth,  or, equivalently, the family
$(\overline{u}^{(n)})_{n=n_0}^\infty$ possesses the threshold-like behavior on the segment
$[0.5,1]$ if  for each fixed $x\geq 0$ and the corresponding
functions $L_n(x)$ it it true that
\begin{equation}{\label{6.32}}
\lim_{n\to\infty} L_n(x)=L(x)=\left\{\begin{array}{l}\vspace{3pt}
\;\displaystyle x,\quad 0\leq x< \frac{1}{u}\,,\\
\displaystyle \frac{1}{u},\quad x\geq \frac{1}{u}\;.
\end{array}
\right.
\end{equation}
\end{definition}

%\begin{picture}(400,130)
%
%\put(130,40){{\vector(1,0){200}}} \put(140,30){{\vector(0,1){90}}}
%\put(140,40){{\circle*{2}}} \put(140,90){{\circle*{2}}}
%\put(190,40){{\circle*{2}}} \put(120,20){0}
%
%\put(330,25){$x$} \put(125,110){$L$} \put(125,85){$\frac{1}{u}$}
%\put(188,25){$\frac{1}{u}$} \put(200,70){$L=L_n(x)$}
%\put(180,100){$L=L(x)$}
%
%\qbezier(220,89)(208,89)(191,87)\qbezier(140,40)(182,85)(191,87)
%
%\linethickness{0.7pt}{\put(190,90){{\line(1,0){120}}}
%\put(140,40){{\line(1,1){50}}}  }
% \put(100,5){Figure$^{****}$. Simplicial error threshold in coordinates $x$, $L$}
%\end{picture}

\begin{figure}
\centering
\includegraphics[width=0.5\textwidth]{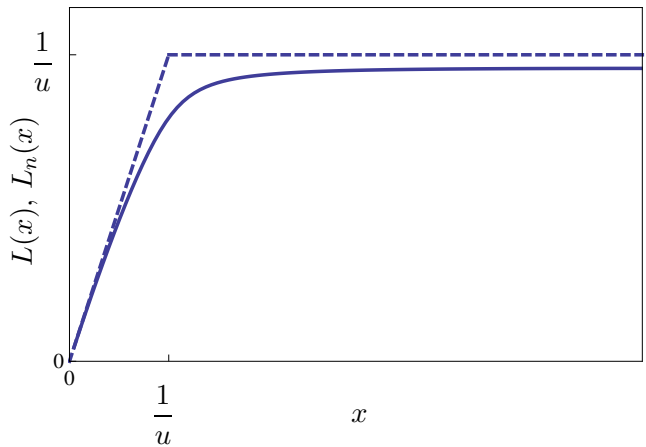}
\caption{Illustration to Definition \ref{def:6:1}}
\end{figure}

The following formula provides an approximation for the error
threshold value (if exists) $q_*^{(n)}(u)$, $n\gg 1$:
$$
q^{(n)}_*(u)\approx 1-\frac{1}{nu}=1-\frac{s}{nw}\,.
$$

\begin{theorem}In the above notation suppose that a sequence
of subsets $A_n\subset X_n$  of the moderate growth is given and
$u>0$. Then the sequence $(A_n)_{n=n_0}^\infty$ shows the threshold-like behavior on the segment
$[0.5,1]$.
\end{theorem}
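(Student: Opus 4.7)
The plan is to work directly with the explicit algebraic equation~\eqref{6.27}, which (unlike the binary-cube case) is already in closed form after summing the geometric series. Write $\alpha_n=|A_n|/(n+1)$ and abbreviate $v_n:=\overline{u}^{(n)}(q_n)$ where $q_n:=1-x/n$. Then the target $L_n(x)=(u+1)/v_n-1$, so the claim is equivalent to
$$
v_n \;\longrightarrow\; v^*(x)\;:=\;\frac{u+1}{1+x}\qquad (0\le x<1/u),
\qquad v_n\;\longrightarrow\; u\qquad (x\ge 1/u).
$$
Note that for $q\in[0.5,1]$ both summands on the left of~\eqref{6.27} are positive, and $v_n\in[u+\alpha_n,\,u+1]\subset[u,u+1]$ by the monotonicity and boundary values of $\overline{u}^{(n)}$.

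First I would handle the case $0\le x<1/u$. Dropping the positive first term of~\eqref{6.27} gives the elementary lower bound
$$
v_n\;\ge\;\frac{(2q_n-1)(u+1-\alpha_n)}{q_n+n(1-q_n)}\;=\;\frac{(1-2x/n)(u+1-\alpha_n)}{1+x(n-1)/n},
$$
which converges to $(u+1)/(1+x)$ as $n\to\infty$, using $\alpha_n\to 0$. In particular $\liminf_n(v_n-u)>0$, so the first term $\alpha_n/(v_n-u)\to 0$. By compactness of $[u,u+1]$, pass to any convergent subsequence $v_{n_k}\to v^\dagger$; then $v^\dagger\ge(u+1)/(1+x)>u$. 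Taking limits in~\eqref{6.27} along this subsequence,
$$
0 \;+\; \frac{1}{(1+x)v^\dagger - u}\;=\;1,
$$
which forces $v^\dagger=(u+1)/(1+x)=v^*(x)$. Since every cluster point equals $v^*(x)$, the full sequence $v_n$ converges to $v^*(x)$, i.e.\ $L_n(x)\to x$.

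For $x\ge 1/u$ I would argue by monotonicity rather than by substitution into~\eqref{6.27} (direct substitution is awkward because the candidate limit $v^\dagger=u$ lies on the singularity of the first term). The function $x\mapsto L_n(x)$ is non-decreasing (because $q_n$ decreases in $x$ while $\overline{u}^{(n)}$ is increasing in $q$), and $L_n(x)\le 1/u$ uniformly (because $\overline{u}^{(n)}(q)\ge u$ on $[0.5,1]$). Hence, for any $y\in[0,1/u)$ with $y\le x$,
$$
\liminf_{n\to\infty} L_n(x)\;\ge\;\lim_{n\to\infty} L_n(y)\;=\;y.
$$
Letting $y\uparrow 1/u$ yields $\liminf_n L_n(x)\ge 1/u$, and combined with the uniform upper bound we conclude $L_n(x)\to 1/u$.

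The only real obstacle is ensuring that along every convergent subsequence the limit $v^\dagger$ stays strictly above $u$, so that the singular first term of~\eqref{6.27} vanishes in the limit and we may read off $v^\dagger$ from the second term alone. This is secured by the simple lower bound obtained by discarding the (positive) first term of~\eqref{6.27}. Everything else is routine limit-taking in a rational identity, and the boundary regime $x\ge 1/u$ is forced by monotonicity and the uniform bound $L_n\le 1/u$.
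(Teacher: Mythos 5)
Your argument is correct, and it follows the same core strategy as the paper's (sketched) proof: pass to the closed-form equation \eqref{6.27}, substitute $q=1-x/n$, and let $n\to\infty$. The paper, however, only asserts that ``the existence of $\lim_n L_n(x)$ can be proved with the help of lower and upper estimates'' and then formally takes the limit in \eqref{6.33}; you actually supply the missing estimate. Discarding the positive first summand of \eqref{6.27} to get $v_n\ge (2q_n-1)(u+1-\alpha_n)/(q_n+n(1-q_n))\to(u+1)/(1+x)>u$ for $x<1/u$ is exactly what is needed to keep cluster points away from the singularity at $\overline{u}=u$, after which the subsequence argument pins down the unique limit $(u+1)/(1+x)$, i.e.\ $L(x)=x$. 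Your treatment of the regime $x\ge 1/u$ also diverges from the paper's in a useful way: the paper invokes monotonicity of $L_n(x)$ \emph{in $n$}, which in the simplicial setting is not obviously available (there is no admissibility hypothesis here, and by \eqref{6.28} the eigenvalue depends on $\alpha_n$ in a way that need not be monotone in $n$), whereas you use monotonicity of $L_n$ \emph{in $x$} for fixed $n$ together with the uniform bound $L_n\le 1/u$ and the already-established limit on $[0,1/u)$; this is cleaner and closes a small gap in the published sketch. Two cosmetic points: the second summand of \eqref{6.27} is only nonnegative (it vanishes at $q=0.5$), which is all you need; and you should note that $q_n=1-x/n\in(0.5,1]$ for all $n>2x$, so the substitution is legitimate for every fixed $x$ once $n$ is large.
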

\begin{proof}[Sketch of a proof] In view of the equation (\ref{6.27}),
in coordinates $x$, $L$:
\begin{equation}{\label{6.33}}
 \frac{\alpha_n(1+L_n(x))}{1-uL_n(x)}+
\frac{(1-\alpha_n)(1-2x/n)}{(1+x-x/n)\frac{1+u}{1+L_n(x)}-\left(1-2x/n\right)u}=1\;.
\end{equation}

The existence of $\lim\limits_{n\to\infty}L_n(x)$ for a fixed $x$,
$0\leq x<1/u$, can be proved with the help of lower and upper
estimates. If $n\to\infty$ in (\ref{6.33}) we get (since
$\alpha_n=|A_n|/(n+1)\to 0$)
$$
\frac{1}{(1+x)\frac{1+u}{1+L(x)}-u}=1\,,
$$
or $L(x)\equiv x$ on $[0,1/u)$. Since $L_n(x)$ increases with
respect to $n$ and cannot exceed the value $1/u$ we obtain the
desired result.
\end{proof}

In a similar fashion other geometric examples can be analyzed.

\section{Concluding remarks}
There are two main points to emphasize in order to conclude the presentation. First, in this text we put forward general, rigorous, and quite elementary methods to analyze two-valued fitness landscapes in the classical quasispecies model. While a great deal of analysis of this problem in the existing literature was inspired by the analogies with the famous Ising model of statistical physics, we show that direct methods of linear algebra allow gaining full understanding of the properties of the selection--mutation equilibrium in this model at least in some special cases.

Second, the language of the group theory gives us an opportunity
to look at the phenomena associated with the quasispecies model
from a more general and abstract point of view. In particular, the
infamous error threshold can be looked at from the position of the
external and internal metric properties of orbits. If the set of
population sequences is enumerated by points of a finite metric
space $X$ with integer-valued metric $d$ on which a group $\Gamma$
acts transitively by isometries then we can involve  group
theoretical and algebraic tools in order to obtain not very
complicated solutions for the leading eigenvalue problem, at least
in the special case of the two-valued fitness landscapes. Such a
classical approach is in accordance with the well known F. Klein's
{\it Erlangen program}. We are convinced that this connection
between mathematical biology, finite geometries, combinatorics and
algebra confirms the importance of Eigen's model from various
viewpoints.

To reiterate, in the general case we consider a quadruple  $(X,d,\Gamma, \bs w)$
--- {\it homogeneous} $\Gamma$-landscape ---  with the fitness
function $\bs w\colon X\longrightarrow \R_{\geq 0}$. The information of the
geometric properties of the underlying metric space $(X,d)$ is
contained in
 the symmetric matrix
${\bs Q}=\bigl((1-q)^{d(x,y)} q^{N-d(x,y)}\bigr)$, $q\in
[0,1]$, where $N=\diam(X)$ is the diameter of $X$. The
diameter $N$ as well as the cardinality $l=|X|$ are the two main
numerical characteristics of the model $(X,d,\Gamma, \bs w)$. The distance polynomial $P_X(q)$, which is the leading
eigenvalue of the matrix ${\bs Q}$, plays the key role in the analysis. For the classical Eigen's
quasispecies model $X=\{0,1\}^N$ is the binary cube with the Hamming distance,
$\diam(X)=N$, $l=|X|=2^N$ and $P_X(q)\equiv 1$.

Suppose that we have a subgroup $G\leqslant\Gamma$ which also acts
on $X$ and the fitness function $\bs w$ is constant on the orbits
of $G$-action. We saw in Sections \ref{sec:2}, \ref{sec:6} that
for the two-valued fitness functions $\bs w(A)=w+s$, $\bs
w(X\setminus A)=w$, $A$ being any $G$-orbit, the degree of the
equation on the leading eigenvalue can be reduced from $l$ to
$N+1$. Although the solution of the leading eigenvalue problem
appears in an implicit form we are able to obtain lower and upper
bounds for it.

We can also consider sequences of metric spaces $X_n$ and
orbits $A_n$ as $n\to\infty$. Usually we have a chain
$$
X_{n_0}\subset\dots \subset X_{n}\subset X_{n+1}\subset\dots
\subset \bigcup_{n} X_n=X_{\infty}\,.
$$
The analysis presented in the main text allows to conjecture that the {\it error threshold}, i.e., non-analytical behavior of the leading eigenvalue $\overline{w}$ in the infinite sequence limit,  occurs when
the cardinalities $|A_n|$ grow not rapidly enough comparing with the growth of $|X_n|$. For instance, when $A_n\equiv A$, where $A$
contains a single point (the single peaked landscape), or is a fixed constant set
(orbit) then the threshold-like behavior is observed. This topic is to be
investigated in the general situation.

At the beginning of Section \ref{sec:6} we pointed out the most interesting
geometric examples of groups and metric spaces for which the
generalized Eigen's algebraic problem could be solved. Among them are the Weyl groups acting on the chamber systems
(the {\it reflection} groups should be added) and groups of
symmetry of regular polytopes. Example \ref{ex:6:3} deals with all
 finite groups in general. It is very possible, and genuinely intriguing, that some
infinite finitely generated groups (free groups, non-Euclidean
crystallographic groups and others) can be included in the list of groups for the
future research (see, for instance, \cite{de2000topics,gromov1993}).

\appendix
\section{Proof of Proposition \ref{pr:ad}}\label{ap:1}
The following three lemmas and corollary provide the full proof that all the
examples in Section \ref{sec:3} deal with admissible sequences of orbits of the
moderate growth (Proposition \ref{pr:ad}).

\begin{lemma} Let $A\subset X_{n_0}$ be a fixed
$G$-orbit. Consider the constant sequences $A_n\equiv A$ and
$G_n\equiv G$, $n\geq n_0$. Then the sequence $(A_n)_{n=n_0}^\infty$ is
admissible.
\end{lemma}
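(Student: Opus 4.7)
My plan is to exploit the fact that the orbit $A$ sits inside all the $X_n$ in exactly the same way, so the internal geometry of $A$ is unchanged and only the ambient sequence length grows by one at each step. Concretely, I would first observe that embedding $X_{n_0} \hookrightarrow X_{n_0+1} \hookrightarrow \cdots$ is realized by appending zero coordinates, so for any two elements $a,b\in A\subset X_{n_0}\subset X_n$ the Hamming distance $H_{ab}$ is the same whether computed in $X_{n_0}$, in $X_n$, or in $X_{n+1}$. In other words, the distance multiset of $A$ is an invariant of $A$ and does not depend on the ambient $X_n$.

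Next, I would simply plug this into the definition \eqref{eq:21} of $F_{A_n}$. Fixing any basepoint $a\in A$, for $n\ge n_0$ we have
\begin{equation*}
F_{A_n}(2q-1)=\frac{1}{2^n}\sum_{b\in A}(1-q)^{H_{ab}}(1+q-\text{—})\ldots
\end{equation*}
more precisely, writing everything in terms of $q$ as in Example \ref{ex:3:8},
\begin{equation*}
F_{A_n}(2q-1)=\sum_{b\in A}(1-q)^{H_{ab}}q^{\,n-H_{ab}},\qquad F_{A_{n+1}}(2q-1)=\sum_{b\in A}(1-q)^{H_{ab}}q^{\,n+1-H_{ab}}.
\end{equation*}
Since $H_{ab}\le n_0\le n$ in every term, the two sums differ by exactly one extra factor of $q$ per term; that is,
\begin{equation*}
F_{A_{n+1}}(2q-1)=q\cdot F_{A_n}(2q-1).
\end{equation*}

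Finally, for $q\in[0.5,1]$ we have $0<q\le 1$, hence $F_{A_{n+1}}(2q-1)\le F_{A_n}(2q-1)$, which is precisely the admissibility condition \eqref{eq:61}. There is no real obstacle here—the only thing to be mildly careful about is the bookkeeping of the embedding $X_n\hookrightarrow X_{n+1}$ (confirming that the Hamming distances of elements of $A$ are genuinely unchanged), but that is immediate from the binary-representation description of the embedding ${\rm Iso}(X_n)<{\rm Iso}(X_{n+1})$ given just before Definition \ref{d:5:1}.
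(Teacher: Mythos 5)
Your proof is correct and is essentially the paper's own argument: the paper states the iterated identity $F_{A_n}(2q-1)=q^{\,n-n_0}F_{A_{n_0}}(2q-1)$, which is exactly your one-step relation $F_{A_{n+1}}(2q-1)=q\,F_{A_n}(2q-1)$ applied repeatedly, and both conclude via $0<q\le 1$ on $[0.5,1]$ together with positivity of $F_{A_n}$. (The garbled first display in your write-up is a harmless false start, since the corrected formula that follows is the right one.)
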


\begin{proof}Since the orbit is not changing as
$n\to \infty$ then it follows from \eqref{eq:23} that
$$
F_{A_n}(2q-1)=q^{n-n_0}F_{A_{n_0}}(2q-1)\,,\quad q\in[0,1].
$$
The polynomial $F_{A_n}(2q-1)>0$  and $q^{n-n_0}\geq q^{n+1-n_0}$
on $[0.5,1]$. Hence, \eqref{eq:61} holds. \end{proof}

\begin{lemma} Let $a_n\in X_n$, $a_n^*=2^n-1-a_n$,
and $A_n=\{a_n,a_n^*\}$. Let $G_n=G=\{1,g\}$ be the group of order
2 such that $g(a)=a^*$ for any $a\in X_n$. Then the sequence
$(A_n)_{n=n_0}^\infty$ is admissible.
\end{lemma}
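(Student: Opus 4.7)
The plan is to verify the admissibility condition \eqref{eq:61} directly using the explicit form of $F_{A_n}(z)$ derived in Example~\ref{ex:3:2}. Since $H_{a_na_n^*}=n$ and the orbit consists of just two antipodal points, the polynomial \eqref{eq:23} specializes to
\[
F_{A_n}(z)=\frac{(1+z)^n+(1-z)^n}{2^n},
\]
so the goal is to prove, for every $z\in[0,1]$ (corresponding to $q\in[0.5,1]$), the inequality
\[
\frac{(1+z)^n+(1-z)^n}{2^n}\ \geq\ \frac{(1+z)^{n+1}+(1-z)^{n+1}}{2^{n+1}}.
\]

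The key observation is the algebraic identity that drives the whole thing: set $u=1+z$ and $v=1-z$, so that $u+v=2$ and $u,v\geq 0$. Multiplying the desired inequality by $2^{n+1}$ and using $u+v=2$ on the left-hand side, it becomes
\[
(u+v)(u^n+v^n)\ \geq\ u^{n+1}+v^{n+1},
\]
which expands to $u^{n+1}+uv^n+vu^n+v^{n+1}\geq u^{n+1}+v^{n+1}$, i.e.\ $uv^n+vu^n\geq 0$. The last inequality is trivial since $u,v\geq 0$ on $[0.5,1]$.

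Because this single algebraic inequality directly yields \eqref{eq:61} for the antipodal sequence, there is essentially no obstacle in the proof; the only nontrivial step is recognizing the substitution $u+v=2$ that turns the transition $n\mapsto n+1$ into a factor of $(u+v)$. One could alternatively note that the previous lemma's argument, where $F_{A_n}(2q-1)=q^{n-n_0}F_{A_{n_0}}(2q-1)$ exploits a constant-orbit structure, does not apply directly here since the orbit $A_n=\{a_n,a_n^*\}$ changes with $n$; nevertheless the explicit antipodal formula makes the comparison transparent. I would conclude the proof by remarking that this, combined with $|A_n|=2=o(2^n)$, confirms that the antipodal sequence is also of the moderate growth, so it meets both requirements needed for the results of Section~\ref{sec:5}.
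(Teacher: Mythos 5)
Your proof is correct and takes essentially the same route as the paper: both rest on the explicit antipodal formula $F_{A_n}(2q-1)=q^{n}+(1-q)^{n}$ and the elementary inequality $q^{n}+(1-q)^{n}\geq q^{n+1}+(1-q)^{n+1}$, your $u+v=2$ factoring being a cosmetic repackaging of the paper's termwise observation $q^{n+1}\leq q^{n}$, $(1-q)^{n+1}\leq(1-q)^{n}$ on $[0,1]$.
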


\begin{proof} In view of \eqref{eq:27} and \eqref{eq:49}
$$
F_{A_n}(2q-1)=q^n+(1-q)^n\geq
q^{n+1}+(1-q)^{n+1}=F_{A_{n+1}}(2q-1)\,,\quad q\in[0,1].
$$
\end{proof}

\begin{lemma}\label{l:5:3} Let  $p$ be a fixed number, $n\geq n_0=2p$.
Let  $A_n=A_{n,p}=\{a\in X_n\,|\,H_a=p\}$. Then $(A_n)_{n=n_0}^\infty$ is an
admissible sequence.
\end{lemma}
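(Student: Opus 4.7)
\emph{Proof proposal.} The plan is to give $F_{A_{n,p}}(2q-1)$ a probabilistic interpretation and reduce admissibility to comparing two adjacent values of a Poisson--binomial mass function. Fix a base point $a \in A_{n,p}$, say $a = (\underbrace{1, \ldots, 1}_{p}, 0, \ldots, 0)$, and let $Y \in \{0,1\}^n$ be obtained from $a$ by flipping each coordinate independently with probability $1-q$. By \eqref{eq:27} (the $w=0$ formula), $F_{A_{n,p}}(2q-1) = \sum_{b : |b| = p} q^{n-H_{ab}}(1-q)^{H_{ab}} = P(|Y| = p)$. Splitting $|Y| = S^{(1)} + S^{(2)}$ with $S^{(1)} := |\{i \le p : Y_i = 1\}| \sim \mathrm{Bin}(p, q)$ and $S^{(2)} := |\{i > p : Y_i = 1\}| \sim \mathrm{Bin}(n-p, 1-q)$ independent, I denote $S_n := S^{(1)} + S^{(2)}$ and record $F_{A_{n,p}}(2q-1) = P(S_n = p)$.

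For the step from $n$ to $n+1$, pick the base point $a' = (a, 0) \in A_{n+1, p}$ (independence of the choice of base point follows from the transitive action of the isometry group on the orbit $A_{n+1,p}$). Conditioning on the added bit $Y_{n+1}$, which equals $0$ with probability $q$ and $1$ with probability $1-q$, yields
\[
F_{A_{n+1, p}}(2q-1) = q \cdot P(S_n = p) + (1-q) \cdot P(S_n = p-1),
\]
and hence
\[
F_{A_{n,p}}(2q-1) - F_{A_{n+1,p}}(2q-1) = (1-q)\bigl[P(S_n = p) - P(S_n = p-1)\bigr].
\]
As the case $q=1$ is trivial, admissibility reduces to proving $P(S_n = p) \ge P(S_n = p-1)$ for every $n \ge 2p$ and every $q \in [0.5, 1]$.

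I will establish this in the stronger form $H(n)$: \emph{for every $k \le p$, $P(S_n = k) \ge P(S_n = k-1)$}, by induction on $n \ge 2p$. For the base case $n = 2p$, the reflections $S^{(j)} \mapsto p - S^{(j)}$ interchange $\mathrm{Bin}(p, q)$ and $\mathrm{Bin}(p, 1-q)$, which yields the exact symmetry $S_{2p} \stackrel{d}{=} 2p - S_{2p}$. Moreover, the mass function $k \mapsto P(S_{2p} = k)$ is log-concave because its probability generating polynomial $((1-q) + qx)^p (q + (1-q)x)^p$ has only negative real roots, and real-rootedness of a polynomial implies log-concavity of its coefficient sequence (Newton's inequalities). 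A symmetric log-concave sequence is non-decreasing up to its centre: the ratio $r(k) := P(S_{2p} = k)/P(S_{2p} = k-1)$ is non-increasing in $k$, and symmetry forces $r(k)\, r(2p-k+1) = 1$, so $r(k_0) < 1$ for some $k_0 \le p$ would give $r(2p-k_0+1) > 1 \ge r(k_0)$ at the larger index $2p-k_0+1 > k_0$, contradicting monotonicity; thus $H(2p)$ holds. For the inductive step, write $S_{n+1} = S_n + Z$ with $Z \sim \mathrm{Bern}(1-q)$ independent of $S_n$, so that
\[
P(S_{n+1} = k) - P(S_{n+1} = k-1) = q\bigl[P(S_n = k) - P(S_n = k-1)\bigr] + (1-q)\bigl[P(S_n = k-1) - P(S_n = k-2)\bigr];
\]
for every $k \le p$ both bracketed differences are non-negative by $H(n)$, giving $H(n+1)$.

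The main conceptual obstacle is the base case, where the mode of the Poisson--binomial $S_{2p}$ has to be pinned exactly at $p$. The argument above sidesteps any appeal to a deep mode-location theorem (such as Darroch's for general Poisson--binomials) by combining the exact $p$-centred symmetry with elementary real-rooted log-concavity. Once $H(2p)$ is in hand, appending an additional $\mathrm{Bern}(1-q)$ coordinate convex-combines adjacent probabilities of $S_n$ and preserves the relevant monotonicity up to index $p$, so that the inductive step is automatic.
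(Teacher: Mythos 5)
Your proof is correct, and it takes a genuinely different route from the paper's. The paper argues purely algebraically: it introduces auxiliary polynomials $G_{n,p}(q)$ alongside $F_{n,p}(q)$, derives the recursions \eqref{eq:65}--\eqref{eq:68} from Pascal's rule, and runs a double induction (on $p$, and for fixed $p$ on $k=n-2p$), using the symmetry $F_{n,p}=F_{n,n-p}$ to handle the base case $n=2p$. You instead read $F_{A_{n,p}}(2q-1)$ as the point mass $\mathrm{P}(S_n=p)$ of $S_n=\mathrm{Bin}(p,q)+\mathrm{Bin}(n-p,1-q)$ --- which indeed agrees with \eqref{eq:45} --- so that the passage $n\to n+1$ is convolution with an independent $\mathrm{Bern}(1-q)$ and admissibility collapses to the single inequality $\mathrm{P}(S_n=p)\ge \mathrm{P}(S_n=p-1)$; your one-step recursion is exactly the first relation of \eqref{eq:65} in probabilistic clothing. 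The base case is settled by the exact symmetry $S_{2p}\stackrel{d}{=}2p-S_{2p}$ combined with log-concavity of the mass function (real-rootedness of $((1-q)+qx)^p(q+(1-q)x)^p$ plus Newton's inequalities), which pins the mode at $p$ without invoking Darroch's theorem, and the inductive step preserves the left-half monotonicity $H(n)$ trivially. All steps check out (the only cosmetic blemish is the inequality chain ``$>1\ge r(k_0)$'' in the symmetry argument, where the intended contradiction is simply that $r$ takes a larger value at a larger index). What your approach buys is transparency --- it explains \emph{why} the inequality holds (the mode of the relevant Poisson--binomial sits exactly at $p$) and proves the stronger statement that the mass function is non-decreasing up to $p$ --- whereas the paper's version avoids probabilistic language at the cost of a more opaque double induction.
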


\begin{proof} It follows from \eqref{eq:45} that
\begin{equation}\label{eq:63}
F_{A_n}(2q-1)=\sum\limits_{k=0}^p {\binom{p}{k}}{\binom{n-p}{k}}(1-q)^{2k}q^{n-2k}=:F_{n,p}(q)\;,\quad 0\leq p\leq n.
\end{equation}
At the same time  consider the polynomials
\begin{equation}\label{eq:64}
G_{n,p}(q):=\sum\limits_{k=1}^p {\binom{p}{k}}{\binom{n-p}{k-1}}(1-q)^{2k}q^{n+1-2k}.
\end{equation}
By definition, $F_{0,0}(q)\equiv 1$,  $G_{n,0}(q)=0$. Applying the
binomial formulas ${\binom{n+1-p}{k}}={\binom{n-p}{k}}+{\binom{n-p}{k-1}}$ to \eqref{eq:63} and ${\binom{p}{k}}={\binom{p-1}{k}}+{\binom{p-1}{k-1}}$
to \eqref{eq:64} we get the following recursive relations:
\begin{equation}\label{eq:65}
F_{n+1,p}(q)=qF_{n,p}(q)+G_{n,p}(q)\,,\quad G_{n+1,p}(q)=(1-q)^2
F_{n,p-1}(q)+qG_{n,p-1}(q)\,.
\end{equation}

When we substitute the left-hand-side of the second formula \eqref{eq:65} into
the first one (with the change $n\to n-1$) and then iterate such
substitutions we get
\begin{equation}\label{eq:66}
F_{n+1,p}(q)=qF_{n,p}(q)+(1-q)^2\sum_{j=1}^p
q^{j-1}F_{n-j,p-j}(q)\;.
\end{equation}
In the same way the equality
\begin{equation}\label{eq:67}
G_{n+1,p}(q)=(1-q)^2\sum_{j=1}^p q^{j-1}F_{n+1-j,p-j}(q)
\end{equation}
can be obtained.

Formulas \eqref{eq:65} imply also that
$$
F_{n,p}(q)-F_{n+1,p}(q)=(1-q)F_{n,p}(q)-G_{n,p}(q)=
(1-q)qF_{n-1,p}(q)+(1-q)G_{n-1,p}(q)-G_{n,p}(q)\,,
$$
or
\begin{equation}\label{eq:68}
F_{n,p}(q)-F_{n+1,p}(q)=q\left((1-q)F_{n-1,p}(q)-G_{n,p}(q)\right)+
(1-q)\left(G_{n-1,p}(q)-G_{n,p}(q)\right)\,.
\end{equation}
Our objective is to prove that $$F_{2p+k+1,p}(q)\leq
F_{2p+k,p}(q)\,,\qquad k\geq 0\,,\quad q\in[0,1]\,.$$
We will proceed by induction on $p$ and, for a fixed $p$, by induction on $k$.

First of all, the case $p=0$ is trivial since $F_{n,0}(q)=q^n$.

Let $p\ge 1$ be fixed and let $k=0$. Substituting $n=2p$ into \eqref{eq:68}
we get
$$
F_{2p,p}(q)-F_{2p+1,p}(q)=q\left((1-q)F_{2p-1,p}(q)-G_{2p,p}(q)\right)+
(1-q)\left(G_{2p-1,p}(q)-G_{2p,p}(q)\right)\,.
$$
Let us show that both summands in the right-hand side are
nonnegative on $[0,1]$. On the one hand, by definition we have
$F_{n,p}(q)=F_{n,n-p}(q)$. Then in view of \eqref{eq:65}
\begin{align*}
(1-q)F_{2p-1,p}(q)-G_{2p,p}(q)&=(1-q)F_{2p-1,p-1}(q)-G_{2p,p}(q)\\
&=(1-q)F_{2p-1,p-1}(q)-((1-q)^2F_{2p-1,p-1}(q)+qG_{2p-1,p-1}(q))\\
&=q(1-q)F_{2p-1,p-1}(q)-qG_{2p-1,p-1}(q)\\
&=q(1-q)F_{2p-1,p-1}(q)-q(F_{2p,p-1}-qF_{2p-1,p-1}(q))\\
&=q(F_{2p-1,p-1}(q)-F_{2p,p-1}(q))\geq 0\;,\quad q\in [0,1]\,,
\end{align*}
by the inductive hypothesis.

On the other hand, it follows from \eqref{eq:67} that
$$
G_{2p-1,p}(q)-G_{2p,p}(q)=(1-q)^2\sum_{j=1}^p
q^{j-1}(F_{2p-j,p-j}(q)-F_{2p+1-j,p-j}(q)) \geq 0
$$
on $[0,1]$ by the same reasons. This finishes the proof for the
case $k=0$.

Let $k\ge 1$. Then by virtue of \eqref{eq:66} we can assert that
\begin{align*}
&F_{2p+k,p}(q)-F_{2p+k+1,p}(q)=\\
&=q(F_{2p+k-1,p}(q)-F_{2p+k,p}(q))+(1-q)^2\sum_{j=1}^p
q^{j-1}(F_{2p+k-1-j,p-j}(q)-F_{2p+k-j,p-j}(q))\geq 0
\end{align*}
on $[0,1]$ by the inductive hypothesis. The lemma is proved.
\end{proof}

%It should be also mentioned (we will not use this in what follows)
%that $F_{n,p-1}(q)\leq F_{n,p}(q)$ on $[0,1]$ if $2p+1\leq n$ and
%$F_{n,n-p+1}(q)\leq F_{n,n-p}(q)$, due to the symmetry
%$F_{n,p}(q)= F_{n,n-p}(q)$.

\begin{corollary} Let
$A_{2n}=A_{2n,n}=\{a\in X_{2n}\,|\,H_a=n\}$. Then $(A_{2n})_{n=n_0}^\infty$ is
an admissible sequence.
\end{corollary}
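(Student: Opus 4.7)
I need to show that $F_{A_{2n}}(2q-1) \ge F_{A_{2(n+1)}}(2q-1)$ for all $q\in[0.5,1]$, where by Example \ref{ex:3:1:3} (formula \eqref{eq:32} specialised via \eqref{eq:29}), with $x:=1-q$, $y:=q$,
\[
F_{A_{2n}}(2q-1)=F_{2n,n}(q)=\sum_{k=0}^{n}\binom{n}{k}^{2}x^{2k}y^{2n-2k}.
\]
Rather than trying to push the recursive scheme of Lemma \ref{l:5:3} through the diagonal $(n,2n)$, which does not give a clean telescoping (the "fixed $p$" induction breaks because both parameters move), my plan is to obtain a closed generating function in $n$ and then read off non-negativity of successive differences directly.

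The first step is to rewrite $F_{2n,n}(q)$ as a constant term. Since
\[
(y+xz)^{n}(y+x/z)^{n}=\bigl(y^{2}+x^{2}+xy(z+z^{-1})\bigr)^{n}=\bigl(1+xy(z-1)^{2}/z\bigr)^{n},
\]
using $x+y=1$, and the constant term in $z$ of the left-hand side is exactly $F_{2n,n}(q)$, extracting $[z^{0}]$ on the right yields
\[
F_{2n,n}(q)=\sum_{k=0}^{n}(-1)^{k}\binom{n}{k}\binom{2k}{k}(xy)^{k},\qquad xy=q(1-q)\in[0,1/4].
\]
The second step is the ordinary generating function: summing the identity $\sum_{n}\binom{n}{k}z^{n}=z^{k}/(1-z)^{k+1}$ against the factor $\binom{2k}{k}(-xy)^{k}$ and applying $\sum_{k}\binom{2k}{k}w^{k}=1/\sqrt{1-4w}$ gives
\[
G(z):=\sum_{n\ge 0}F_{2n,n}(q)\,z^{n}=\frac{1}{\sqrt{(1-z)(1-sz)}},\qquad s:=1-4q(1-q)\in[0,1].
\]

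The third step is to encode the differences $\Delta_{n}(q):=F_{2n,n}(q)-F_{2n+2,n+1}(q)$. Since $F_{2n+2,n+1}(q)$ is the coefficient of $z^{n}$ in $(G(z)-1)/z$, we have
\[
\sum_{n\ge 0}\Delta_{n}(q)\,z^{n}=G(z)-\frac{G(z)-1}{z}=\frac{1}{z}\bigl(1-(1-z)G(z)\bigr)=\frac{1}{z}\!\left(1-\sqrt{\frac{1-z}{1-sz}}\right).
\]
Introduce $u(z):=(1-s)z/(1-sz)$, so that $(1-z)/(1-sz)=1-u(z)$. Then $u(z)=\sum_{j\ge 0}(1-s)s^{j}z^{j+1}$ has non-negative coefficients (because $s\in[0,1]$), and hence so does each $u(z)^{k}$. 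Since the Taylor series $1-\sqrt{1-u}=\sum_{k\ge 1}c_{k}u^{k}$ has strictly positive coefficients $c_{k}=\tfrac{1}{k 4^{k}}\binom{2k-2}{k-1}\cdot 2>0$ (equivalently $c_{k}=-\binom{1/2}{k}(-1)^{k}>0$), the series $1-\sqrt{1-u(z)}$ has non-negative coefficients in $z$, and dividing by $z$ gives $\Delta_{n}(q)\ge 0$ for every $n$. This proves admissibility.

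The main obstacle is the first step: spotting that after setting $x+y=1$ the constant-term form collapses to a function of $xy$ alone, yielding the single-variable generating function in step two. Once that is in hand, steps three and four are essentially routine generating-function manipulation and an appeal to the well-known positivity of the Taylor coefficients of $1-\sqrt{1-u}$. (I will also note, as a sanity check, that at $q=1/2$ one recovers $\Delta_{n}(1/2)=\binom{2n}{n}4^{-n}-\binom{2n+2}{n+1}4^{-n-1}=\binom{2n}{n}4^{-n}/(2n+2)\ge 0$, consistent with the claim and with the asymptotic $F_{2n,n}(1/2)\sim 1/\sqrt{\pi n}$ from \eqref{eq:48}.)
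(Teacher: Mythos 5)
Your proof is correct, and it takes a genuinely different route from the paper's. The paper first establishes Lemma \ref{l:5:3} (admissibility of the fixed-$p$ sequences $F_{n,p}$) by a double induction on $p$ and $k$ using the recursions \eqref{eq:65}--\eqref{eq:68}, and then handles the diagonal case by deriving the identity
\[
F_{2n-2,n-1}(q)-F_{2n,n}(q)=2q\bigl(F_{2n-2,n-1}(q)-F_{2n-1,n-1}(q)\bigr),
\]
via the symmetry $F_{n,p}=F_{n,n-p}$, so that the base case $n=2p$ of Lemma \ref{l:5:3} finishes the job. You instead bypass the recursive machinery entirely: the constant-term identity collapses $F_{2n,n}$ to a polynomial in $xy=q(1-q)$ alone, the generating function $G(z)=\bigl((1-z)(1-sz)\bigr)^{-1/2}$ with $s=(2q-1)^2$ packages the whole diagonal sequence, and the difference series $\frac{1}{z}\bigl(1-\sqrt{1-u(z)}\bigr)$ is coefficientwise non-negative because $u(z)=(1-s)z/(1-sz)$ has non-negative coefficients and $1-\sqrt{1-u}$ has positive Taylor coefficients. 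I checked each step (the Laurent-polynomial constant term, the interchange of summation, which is legitimate since the inner sums are finite, the formal-series composition, and the $q=1/2$ sanity check) and found no gaps; your argument even yields admissibility on all of $[0,1]$ and an explicit positive expansion of each difference $\Delta_n(q)$. What each approach buys: the paper's proof is elementary and reuses infrastructure already needed for Lemma \ref{l:5:3}, whereas yours is self-contained, shorter once the generating function is found, and more transparent about \emph{why} the monotonicity holds (positivity of the coefficients of $1-\sqrt{1-u}$); its cost is that it exploits the special closed form available only on the diagonal $p=n$, $N=2n$, so it would not generalize to other orbit sequences the way the recursive scheme does.
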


\begin{proof} In the notation of Lemma \ref{l:5:3} let us
prove that $F_{2n,n}(q)\leq F_{2n-2,n-1}(q)$ on $[0,1]$. From the
first formula \eqref{eq:65} we can find the expressions
$G_{n+1,p}(q)=F_{n+2,p}(q)-qF_{n+1,p}(q)$,
$G_{n,p-1}(q)=F_{n+1,p-1}(q)-qF_{n,p-1}(q)$ and substitute them
into the second one. The simplification yields
$$
F_{n+2,p}(q)=(1-2q)F_{n,p-1}(q)+qF_{n+1,p}(q)+qF_{n+1,p-1}(q)\,.
$$
Consequently, choosing appropriate values for $n$, $p$ in this
formula, we get
$$
F_{2n-2,n-1}(q)-F_{2n,n}(q)=q(F_{2n-2,n-1}(q)-F_{2n-1,n}(q))+
q(F_{2n-2,n-1}(q)-F_{2n-1,n-1}(q))\,.
$$
But in view of \eqref{eq:63} $F_{n,p}(q)=F_{n,n-p}(q)$ for all $n$ and $p$,
$0\leq p\leq n$. Hence $F_{2n-1,n}(q)=F_{2n-1,n-1}(q)$ and it
follows from Lemma \ref{l:5:3} that
$$
F_{2n-2,n-1}(q)-F_{2n,n}(q)=2q(F_{2n-2,n-1}(q)-F_{2n-1,n-1}(q))\geq
0\,
$$
on the segment $[0,1]$.
\end{proof}

\paragraph{Acknowledgements:}
The YSS's research is partially supported by the joint
grant between the Russian Foundation for Basic Research (RFBR) and Taiwan National Council \#12-01-92004HHC-a and by RFBR grant \#13-01-00779.
ASN's research is supported in part by ND EPSCoR and NSF grant \#EPS-0814442. We thank Yuri Wolf from NCBI/NLM/NIH for a profitable discussion on the biological examples of the generalized Eigen's models. 

%\bibliography{prebiotic}

\end{document}